\declaretheorem[name=Theorem]{thm}
\declaretheorem[name=Proposition]{prop}
\declaretheorem[name=Lemma]{lemma}
\def\namedlabel#1#2{\begingroup
    #2%
    \def\@currentlabel{#2}%
    \phantomsection\label{#1}\endgroup
}
\DeclareMathOperator*{\inprob}{\stackrel{P}{\longrightarrow}}
\newcommand{\bounded}{O_{\mathrm{P}}}
\newcommand{\fasterthan}{o_{P}}
\newcommand{\boundeddet}{O}
\def\ci{\perp\!\!\!\perp}
\renewcommand{\d}[1]{\mathbb{#1}}
\newcommand{\n}[1]{\mathrm{#1}}
\title{Nonparametric Sensitivity Analysis for Unobserved Confounding with Survival Outcomes}
\author{
\begin{tabular}{@{}p{0.44\textwidth}@{\hspace{-0.1cm}}p{0.55\textwidth}@{}}
\centering
Rui Hu \\ 
School of Artificial Intelligence \\ 
Shenzhen Technology University \\ 
hurui@sztu.edu.cn
&
\centering
Ted Westling \\ 
Department of Mathematics and Statistics \\ 
University of Massachusetts Amherst \\ 
twestling@umass.edu
\end{tabular}
}
\begin{document}

\maketitle

\begin{abstract}
In observational studies, the observed association between an exposure and outcome of interest may be distorted by unobserved confounding. Causal sensitivity analysis can be used to assess the robustness of observed associations to potential unobserved confounding. For time-to-event outcomes, existing sensitivity analysis methods rely on parametric assumptions on the structure of the unobserved confounders and Cox proportional hazards models for the outcome regression. If these assumptions fail to hold, it is unclear whether the conclusions of the sensitivity analysis remain valid. Additionally, causal interpretation of the hazard ratio is challenging. To address these limitations, in this paper we develop a nonparametric sensitivity analysis framework for time-to-event data. Specifically, we derive nonparametric bounds for the difference between the observed and counterfactual survival curves and propose estimators and inference for these bounds using semiparametric efficiency theory. We also provide nonparametric bounds and inference for the difference between the observed and counterfactual restricted mean survival times. We demonstrate the performance of our proposed methods using numerical studies and an analysis of the causal effect of elective neck dissection on mortality in patients with high-grade parotid carcinoma.
\end{abstract}

\doublespacing

\newpage

\section{Introduction} \label{sec:introduction}
A common goal of cohort studies is to assess the causal effect of a baseline treatment or exposure on a time-to-event outcome using observational data. Time-to-event regression models, such as Cox proportional hazards regression \citep{cox1972regression}, are commonly used to investigate the covariate-adjusted association between an exposure and a time-to-event outcome. However, such models often summarize the association as a hazard ratio, and hazard ratios are difficult to interpret causally due to potential imbalance between treated and control units who are still at risk at time $t > 0$, even when the treated and control groups are balanced at time $t = 0$ \citep{hernan2010hazards,martinussen2020subtleties}.

Alternatively, the average causal effect (ACE) of a binary baseline treatment on a time-to-event outcome can be measured in terms of the difference between the marginal counterfactual survival curves had the population been assigned to receive the treatment or not \citep{maldonado2002estimating,hernan2004definition,hernan2010hazards,martinez2021summarizing}. If recorded baseline covariates contain all confounders of the exposure-outcome relationship, known as the \textit{no unobserved confounding} assumption, the ACE is identified with the difference in covariate-adjusted survival curves \citep{beran1981nonparametric,robins1986new,gill2001causal}. Several estimators of the covariate-adjusted survival curves have been proposed \citep{dabrowska1989uniform,hubbard2000nonparametric,zeng2004estimating,zhang2012contrasting,bai2013doubly,westling2023inference}. However, the no unobserved confounding assumption is generally untestable and does not necessarily hold in practice. When violated, the difference in covariate-adjusted survival curves  may not reflect the true causal effect.

In causal inference, \emph{sensitivity analysis} is used to assess how sensitive observed effects are to violations of the no unobserved confounding assumption. In general, a sensitivity analysis can be conducted in four steps:  1) identify bounds on the ACE as a function of the observed data distribution and \emph{sensitivity parameters} that quantify the amount of unobserved confounding; 2) obtain inference for the bounds using the observed data under given values of the sensitivity parameters; 3) infer the minimum magnitude of the sensitivity parameters that would change the scientific conclusions; and 4) assess the plausibility of this minimal magnitude in the scientific context \citep{cinelli2020making,mcclean2024calibrated}.

There is an extensive history of research on sensitivity analysis for binary and continuous outcomes---see, e.g., \cite{cornfield1959smoking,rosenbaum1983assessing,imbens2003sensitivity,liu2013introduction,richardson2015nonparametric,carnegie2016assessing,zhang2022semi,nabi2024semiparametric} and references therein. Fewer methods for sensitivity analysis accommodate right-censored time-to-event outcomes. \cite{huang2020sensitivity} extended the model-based sensitivity analysis framework \citep{rosenbaum1983assessing,imbens2003sensitivity} to survival outcomes under parametric assumptions on the data generating process. Their approach requires the unobserved confounder to be binary and independent of other observed covariates, and assumes Cox and probit models for the outcome regression and propensity score, respectively. \cite{ding2016sensitivity} derived an approximate bound on the hazard ratio in terms of the maximal relative risk of treatment on the unobserved confounders and the maximal hazard ratio of unobserved confounders on the outcome without invoking any assumption on the structure of unobserved confounding. However, the bound only applies to rare time-to-event outcomes, and inferential results for the bound have not been discussed. \cite{lu2023flexible} proposed a flexible framework using outcome regression ratios as sensitivity parameters, and suggested that their framework can be extended to survival outcomes. However, this extension and associated inferential results were not discussed in detail. 

In this article, we develop a nonparametric sensitivity analysis framework for unobserved confounding with right-censored time-to-event outcomes.  Our causal parameter of interest is the difference between marginal counterfactual survival curves. Our sensitivity parameters are the proportion of residual variance of functions of the outcome regression and treatment propensity explained by the unobserved confounders. For fixed values of sensitivity parameters and time, we propose a cross-fitted estimator of the effect bounds that permits the use of data-adaptive nuisance estimators, and we use our estimator to conduct pointwise and uniform sensitivity analysis. We also propose inference for bounds on the difference in counterfactual restricted mean survival times. Finally, we extend robustness values and benchmarking to the time-to-event setting for assessing the results of a sensitivity analysis.

Our approach has the following merits: (1) our target parameter is interpretable as a causal effect without strong parametric or semiparametric assumptions; (2) our framework does not require parametric or semiparametric assumptions on the structure of unobserved confounders, the outcome regression function, or the treatment assignment mechanism; (3) our estimators can converge at the parametric rate and our inference procedures can be asymptotically valid even when flexible nonparametric methods are used to estimate the survival outcome regression and treatment assignment functions, provided these estimators achieve sufficient rates of convergence; and (4) our sensitivity parameters are interpretable without parametric assumptions, and their plausibility can be assessed by comparing with observed confounding. To the best of our knowledge, no existing method for time-to-event data meets all these criteria. Therefore, our work fills an important gap by providing flexible tools for causal sensitivity analysis  in observational studies with right-censored time-to-event outcomes.

The remainder of the paper is organized as follows. In Section~\ref{sec:setup}, we introduce the data structure, causal effect of interest and its identification, bounds under unobserved confounding, and interpretation of the sensitivity parameters. In Section~\ref{sec:estinf}, we propose estimators and inference for the bounds. In Section~\ref{sec:senspar}, we discuss robustness values and benchmarking. In Section~\ref{sec:simulation}, we assess the behavior of our methods using numerical studies, and in Section~\ref{sec:data}, we use our methods to assess the robustness of the effect of elective neck dissection on mortality in patients with high-grade parotid carcinoma. Proofs of all theorems are provided in Supplementary Material.

\section{Sensitivity analysis framework} \label{sec:setup}
\subsection{Notation and statistical setting}
We let $A \in \{0,1\}$ be a binary treatment and $W \in \mathbb{R}^p$ be a vector of observed pre-treatment covariates. Both $A$ and $W$ are assumed to be recorded prior to time $t=0$. We adopt the potential outcomes framework \citep{neyman1923applications,rubin1974estimating} to define $T(a) \in (0, \infty]$ and $C(a) \in [0, \infty]$ as potential event and  right-censoring times, respectively, that would have been observed under treatment assignment $A=a$. We define $O_c:=(W, A, T(0), T(1), C(0), C(1))$ as the full (unobservable) causal data unit and $P_c$ as its distribution. If the treatment is unique and each unit's treatment is independent of all other units' outcomes, called the stable unit treatment value assumption \citep{rubin1980randomization}, we can denote the factual event and censoring times by $T:=T(A)$ and $C:=C(A)$ respectively. In the time-to-event setting,  the event time of interest $T$ is often not fully observed, but is instead right censored by $C$. We then observe the right-censored time $Y:=\min(T,C)$ and the censoring indicator $\Delta:=I(T\leq C)$ for each unit. We assume we observe $n$ independent and identically distributed observations $(O_1, \dotsc ,O_n)$ of the observed data unit $O:=(Y, \Delta, A, W)$ drawn from an unknown distribution $P$. We denote parameters that depend on the distribution $P_c$ of the causal data with a subscript $c$ and parameters that only depend on the distribution $P$ of the observed data with a subscript $P$.

\subsection{Causal effect of interest and its identification}
Our causal effect of interest is the counterfactual survival difference at time $t \in (0,\infty)$:
\begin{equation}\label{eq:par1}
    \theta_c(t) := P_c(T(1)>t) - P_c(T(0)>t)\,.
\end{equation}
We are interested in the scalar causal effect $\theta_c(t) \in [-1,1]$ for fixed $t$ and the curve of causal effects $\{t \mapsto \theta_c(t) : t \in (0, \zeta]\}$ for some fixed $\zeta < \infty$. We introduce the following causal identification conditions.
\begin{description}[style=multiline,leftmargin=1.4cm, labelindent=.3cm]
    \item[\namedlabel{itm:idf_1}{(A1)}] $T(a) I(T(a) \leq t) \ci A \mid W$ (no unobserved confounding between treatment and event);
    \item[\namedlabel{itm:idf_2}{(A2)}] $C(a) I(C(a) \leq t) \ci A \mid W$ (no unobserved confounding between treatment and censoring);
    \item [\namedlabel{itm:idf_3}{(A3)}] $T(a) I(T(a) \leq t) \ci C(a) I(C(a) \leq t) \mid A=a, W$ (conditionally independent censoring);
    \item [\namedlabel{itm:idf_4}{(A4)}] $P(P(A=a \mid W)>0) = 1$  (treatment assignment positivity); and
    \item [\namedlabel{itm:idf_5}{(A5)}] $P(P(C \geq t \mid A = a, W)>0) = 1$ (censoring probability positivity).
\end{description} 
If \ref{itm:idf_1}--\ref{itm:idf_5} hold for each $a \in \{0,1\}$, then the causal parameter $\theta^c(t)$ is identified via the \emph{backdoor formula}:
\begin{equation}\label{eq:par2}
\theta_c(t) = \theta_P(t) := E[S_P(t\mid A=1, W)] - E[S_P(t\mid A=0, W)]\,,
\end{equation}
where $S_P(t\mid A, W):=P(T>t \mid A, W)$ is identified using the product integral \citep{beran1981nonparametric,robins1986new,dabrowska1989uniform,gill2001causal,westling2023inference}. We refer to $\theta_P(t)$ as the \emph{observed-data effect} because it is defined in terms of the distribution $P$ of the observed data, in contrast to the causal effect $\theta_c(t)$, which is defined in terms of the distribution of the full data.

Assumption~\ref{itm:idf_1} requires that there is no unobserved confounding between treatment and outcome. This is often violated in observational studies because when the treatment assignment mechanism is unknown, it is typically impossible to guarantee that all confounders have been measured. As a result, in observational studies the identification result~\eqref{eq:par2} may not hold, and the observed-data effect $\theta_P(t)$ is not necessarily equal to the causal effect $\theta_c(t)$. In this article, we develop sensitivity analysis methods for assessing the potential impact of unobserved confounding on the difference between $\theta_P(t)$ and $\theta_c(t)$. We note that violation of the other identification assumptions \ref{itm:idf_2}--\ref{itm:idf_5} can also invalidate~\eqref{eq:par2}. Here, we focus on unobserved confounding between the treatment and outcome.

\subsection{Bounds under unobserved confounding}
In this section, we provide bounds on the causal effect $\theta_c(t)$ in the presence of unobserved confounding. We assume there exists an unobserved $U \in \mathbb{R}^{q}$ such that assumptions \ref{itm:idf_1}--\ref{itm:idf_5} are satisfied with $(W,U)$ in place of $W$. We then have 
\begin{equation}\label{eq:par3}
    \theta_c(t) = E[S_c(t\mid A=1, W, U)] - E[S_c(t\mid A=0, W, U)],
\end{equation}
where $S_c(t \mid a, w,u) := P_c(T > t \mid A = a, W = w, U = u)$. We define $g_{c,t}(a,w,u) := S_c(t\mid a, w, u)$ and $\alpha_c(a,w,u) := \frac{a}{\pi_c(w,u)} - \frac{1-a}{1-\pi_c(w,u)}$ where $\pi_c(w,u) := P_c(A=1 \mid W=w, U=u)$. In addition, we define $g_{P,t}(a,w) := S_P(t\mid a, w)$ and $\alpha_P(a,w) := \frac{a}{\pi_P(w)} - \frac{1-a}{1-\pi_P(w)}$, where $\pi_P(w) := P(A=1 \mid W=w)$. For notational simplicity, we set $g_{c,t} := g_{c,t}(A,W,U)$, $g_{P,t}:=g_{P,t}(A,W)$, $\alpha_c := \alpha_c(A,W,U)$, and $\alpha_P := \alpha_P(A,W)$.
We then have the following decomposition of $[\theta_P(t) - \theta_c(t)]^2$.
\begin{prop}\label{prop:par_id}
It holds that
\begin{align*}
\left[\theta_P(t) - \theta_c(t)\right]^2 =  \psi_P(t)\tau_P \left[ \rho_c(t)\right]^2 s_{c,T}(t) \frac{s_{c,A}}{1-s_{c,A}}\,,
\end{align*}
where $\psi_P(t) := E\left[ g_{P,t} \left\{1-g_{P,t}\right\}\right]$, $\tau_P := E \left[\alpha_P^2\right] = E\left[1/ \left\{\pi_P[1-\pi_P]\right\}\right]$\,, 
\begin{align*}
    \rho_c(t) &:= \n{Cor}\left(g_{c,t}-g_{P,t}, \: \alpha_c-\alpha_P\right)\,, \\
    s_{c,T}(t) &:= \frac{\n{Var}(g_{c,t}) - \n{Var}(g_{P,t})}{\n{Var}(I(T>t)) -  \n{Var}(g_{P,t})}\,, \text{ and} \\
    s_{c,A} &:= 1 - \frac{E \left[\alpha_P^2\right]}{E \left[\alpha_c^2\right]} 
\end{align*}
and where we interpret $0/0$ as 0 if $\psi_P(t) = 0$ or $\tau_P = 0$. 
Therefore,
\begin{align*}
\theta_{P,l}\left(t, s_{c,T}(t)s_{c,A} / [1 - s_{c,A}]\right) &\leq \theta_c(t) \leq \theta_{P,u}\left(t, s_{c,T}(t)s_{c,A} / [1 - s_{c,A}]\right) \quad \text{ for} \\
    \theta_{P,l}(t, v) &:= \theta_P(t) -\sqrt{ |v|\psi_P(t)\tau_P}\,, \quad \text{and} \\
    \theta_{P,u}(t,v) &:= \theta_P(t) +  \sqrt{|v|\psi_P(t)\tau_P}\,. \label{eq:eff_bound}
\end{align*}
\end{prop}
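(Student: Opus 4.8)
The plan is to expand $\theta_P(t) - \theta_c(t)$ as a covariance and then successively factor the resulting expression into the four pieces appearing on the right-hand side. First, I would note that by the identification formula~\eqref{eq:par2}, $\theta_P(t) = E[g_{P,t}\,\alpha_P]$ (using the standard inverse-probability representation of the backdoor functional: $E[S_P(t\mid A=1,W)] = E[A\,S_P(t\mid A,W)/\pi_P(W)]$ and similarly for $A=0$, combined with the fact that $g_{P,t}(a,w)$ is the conditional mean of $I(T>t)$ given $A=a,W=w$, so one may replace $g_{P,t}$ by $I(T>t)$ inside these expectations). Likewise, under the assumption that $(W,U)$ suffices for identification, $\theta_c(t) = E[g_{c,t}\,\alpha_c]$ from~\eqref{eq:par3}. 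Hence $\theta_P(t) - \theta_c(t) = E[g_{P,t}\alpha_P] - E[g_{c,t}\alpha_c]$. The key algebraic observation is that $E[\alpha_P] = E[\alpha_c] = 0$ and, more importantly, that $E[g_{P,t}\alpha_c] = E[g_{c,t}\alpha_P] = E[g_{P,t}\alpha_P]$: the first equality holds because $\alpha_c$ has conditional mean zero given $(W,U)$ while $E[\alpha_c\mid W] = \alpha_P$ need not hold, so one should instead argue via $E[g_{c,t}\alpha_P] = E[E(g_{c,t}\mid A,W)\alpha_P]$ and identify $E(g_{c,t}\mid A,W)$ with $g_{P,t}$. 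Either way, one arrives at
\begin{align*}
\theta_P(t) - \theta_c(t) = -\,E\big[(g_{c,t}-g_{P,t})(\alpha_c-\alpha_P)\big] = -\,\n{Cov}(g_{c,t}-g_{P,t},\,\alpha_c-\alpha_P)\,,
\end{align*}
where the last step uses that at least one of the two centered factors has mean zero. Squaring and writing the covariance as a correlation times the product of standard deviations gives
\begin{align*}
[\theta_P(t)-\theta_c(t)]^2 = [\rho_c(t)]^2\,\n{Var}(g_{c,t}-g_{P,t})\,\n{Var}(\alpha_c-\alpha_P)\,.
\end{align*}

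Next I would separately simplify the two variance factors. For the treatment factor, since $g_{P,t}$ and $\alpha_P$ are conditional expectations given coarser information, $E[(g_{c,t}-g_{P,t})g_{P,t}] = 0$ and $E[(\alpha_c-\alpha_P)\alpha_P]=0$ by iterated expectations, so $\n{Var}(\alpha_c-\alpha_P) = E[\alpha_c^2] - E[\alpha_P^2] = E[\alpha_c^2]\,s_{c,A}$. Combining with $E[\alpha_P^2] = \tau_P$ and the identity $s_{c,A} = 1 - E[\alpha_P^2]/E[\alpha_c^2]$, we get $E[\alpha_c^2] = \tau_P/(1-s_{c,A})$, hence $\n{Var}(\alpha_c-\alpha_P) = \tau_P\, s_{c,A}/(1-s_{c,A})$. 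For the outcome factor, the same orthogonality gives $\n{Var}(g_{c,t}-g_{P,t}) = \n{Var}(g_{c,t}) - \n{Var}(g_{P,t})$. To connect this to $\psi_P(t)$ and $s_{c,T}(t)$, I would use the variance decomposition $\n{Var}(I(T>t)) = E[\n{Var}(I(T>t)\mid A,W)] + \n{Var}(g_{P,t}) = E[g_{P,t}(1-g_{P,t})] + \n{Var}(g_{P,t}) = \psi_P(t) + \n{Var}(g_{P,t})$, so that $\n{Var}(I(T>t)) - \n{Var}(g_{P,t}) = \psi_P(t)$. Dividing by this quantity, $s_{c,T}(t) = [\n{Var}(g_{c,t})-\n{Var}(g_{P,t})]/\psi_P(t)$, i.e. $\n{Var}(g_{c,t}-g_{P,t}) = \psi_P(t)\,s_{c,T}(t)$. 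Substituting both into the squared-difference identity yields $[\theta_P(t)-\theta_c(t)]^2 = [\rho_c(t)]^2\,\psi_P(t)\,s_{c,T}(t)\,\tau_P\,s_{c,A}/(1-s_{c,A})$, which is the claimed decomposition; the degenerate-denominator conventions handle the $\psi_P(t)=0$ or $\tau_P=0$ cases since then the left side vanishes as well.

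The bounds follow immediately: since $[\rho_c(t)]^2 \le 1$, the decomposition gives $[\theta_P(t)-\theta_c(t)]^2 \le \psi_P(t)\tau_P\, s_{c,T}(t) s_{c,A}/(1-s_{c,A})$, hence $|\theta_P(t)-\theta_c(t)| \le \sqrt{|v|\psi_P(t)\tau_P}$ with $v = s_{c,T}(t)s_{c,A}/(1-s_{c,A})$, which rearranges to $\theta_{P,l}(t,v)\le\theta_c(t)\le\theta_{P,u}(t,v)$. I expect the main obstacle to be the bookkeeping in the first paragraph: carefully verifying the cross-term identities $E[g_{c,t}\alpha_P] = E[g_{P,t}\alpha_P]$ and $E[g_{P,t}\alpha_c] = E[g_{P,t}\alpha_P]$ via the correct nesting of conditional expectations — that is, checking that $g_{P,t}(A,W) = E[g_{c,t}(A,W,U)\mid A,W]$ (which requires $S_P(t\mid A,W) = P(T>t\mid A,W) = E[P_c(T>t\mid A,W,U)\mid A,W]$, true by the tower property once $T=T(A)$ and the consistency of the conditional survival functions are invoked) and $\alpha_P(A,W) = E[\alpha_c(A,W,U)\mid A,W]$ (which is $1/\pi_P(W) = E[1/\pi_c(W,U)\mid A=1,W]$ — this is \emph{not} generally true pointwise, so the argument must instead route the cross term through the outcome factor rather than the treatment factor, or use that only the product $E[(g_{c,t}-g_{P,t})(\alpha_c-\alpha_P)]$ needs to equal $-[\theta_P(t)-\theta_c(t)]$ after the dust settles). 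Getting this conditioning structure exactly right, and confirming which of the two centered factors legitimately has mean zero so that covariance equals the raw product expectation, is the one genuinely delicate point; the remaining manipulations are routine variance algebra.
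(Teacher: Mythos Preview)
Your approach is correct and mirrors the paper's proof: express $\theta_c(t)-\theta_P(t)$ as $E[(g_{c,t}-g_{P,t})(\alpha_c-\alpha_P)]$, rewrite the square as correlation times the two variances, and identify those variances with $\psi_P(t)\,s_{c,T}(t)$ and $\tau_P\,s_{c,A}/(1-s_{c,A})$ via tower/orthogonality arguments. Your closing worry is unfounded: the identity $E[\alpha_c\mid A,W]=\alpha_P$ \emph{does} hold --- e.g.\ $E[1/\pi_c(W,U)\mid A{=}1,W{=}w]=\int \pi_c(w,u)^{-1}\,\{\pi_c(w,u)\,P(du\mid w)/\pi_P(w)\}=1/\pi_P(w)$ by Bayes' rule --- so both the cross-term equality $E[g_{P,t}\alpha_c]=E[g_{P,t}\alpha_P]$ and the orthogonality $E[(\alpha_c-\alpha_P)\alpha_P]=0$ go through exactly as you first wrote them.
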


Proposition~\ref{prop:par_id} is similar to Theorem~2 of \cite{chernozhukov2022long} with $I(T>t)$ taking the place of the outcome $Y$. Proposition~\ref{prop:par_id} yields upper and lower bounds for $\theta_c(t)$ that depend on $\theta_P(t)$, $\psi_P(t)$, and $\tau_P$, which are all mappings of the observed-data distribution, and the two sensitivity parameters $s_{c,T}(t)$ and $s_{c,A}$. Therefore, for fixed $(t,v)$, the upper and lower bounds are mappings of the observed-data distribution. To conduct a sensitivity analysis for the impact of unobserved confounding, we will use semiparametric efficiency theory to construct cross-fitted one-step estimators for $\theta_{P,l}(t,v)$ and $\theta_{P,u}(t,v)$ along with confidence intervals for fixed $(t,v)$ and uniform confidence regions over $(t,v)$. We discuss the details of estimation and inference in Section \ref{sec:estinf}. We also note that the upper and lower bounds follow by plugging in 1 as an upper bound for the correlation $|\rho_c(t)|$. This is likely often a conservative bound; we discuss a possible approach for obtaining a sharper bound in Section~\ref{sec:senspar2}.

The restricted mean survival time (RMST) is commonly used in survival analysis as a summary of the survival function over a given interval. In the context of causal inference, instead of or in addition to the difference in counterfactual survival curves, we could focus on the difference in counterfactual RMSTs defined as 
\[ \phi_{c}(t)  := \int_0^t \left[ P_{c}(T(1)>v) - P_{c}(T(0)>v) \right] \, dv = E_c \left[ \min\{ T(1), t \} - \min\{T(0), t\} \right]\,.\]
The effect bounds of Proposition~\ref{prop:par_id} can be extended to $\phi_{c}(t)$ by replacing $I(T >t)$ with $\min\{T, t\}$.  Sensitivity analysis for the difference in counterfactual RMSTs can then be conducted in a similar manner to the methods described in the remainder of the paper. Additional details are provided in Supplementary Material.

\subsection{Interpretation of the sensitivity parameters}
We now explain the interpretation of the sensitivity parameters $s_{c,T}(t)$ and $s_{c,A}$. Both parameters can be interpreted as certain nonparametric $R^2$ measures, which we define now. For a scalar random variable $V$, the \textit{nonparametric $R^2$} \citep{pearson1905general,doksum1995nonparametric,williamson2021nonparametric,chernozhukov2022long} is defined as $\eta_{V \sim W}^2:= \n{Var}(E[V\mid W]) / \n{Var}(V)\,.$ Similar to $R^2$ in a linear regression model (though here no specific model is assumed), $\eta_{V \sim W}^2$ is the fraction of variance of $V$ explained by the regression of $V$ onto $W$, ranging from 0 if $E(V \mid W)$ is  degenerate to 1 if $E(V\mid W) = V$. The \textit{nonparametric partial $R^2$} of $V$ on $U$ given $X$ is defined as 
$$\eta_{V \sim U\mid X}^2:=\frac{\n{Var}(E[V \mid X,U]) - \n{Var}(E[V \mid X)]}{\n{Var}(V) - \n{Var}(E[V \mid X])} = \frac{\eta_{V \sim XU}^2 - \eta_{V \sim X}^2}{1-\eta_{V \sim X}^2}$$ \citep{williamson2021nonparametric,chernozhukov2022long,williamson2023general}.
Analogously, $\eta_{V \sim U\mid X}^2$ is the fraction of the residual variance in $V$ explained by $U$ after adjusting for $X$, ranging from 0 if $E(V \mid X, U) = E(V \mid X)$ to 1 if $E(V \mid X, U) = V$. By the definitions of $g_{c,t}$ and $g_{P,t}$, we can write
\[ s_{c,T}(t) = \frac{\n{Var}(E[I(T>t)\mid A,W,U]) - \n{Var}(E[I(T>t) \mid A,W))]}{\n{Var}(I(T>t)) - \n{Var}(E[I(T>t) \mid A,W])} = \eta^2_{I(T>t) \sim  U\mid A,W}\,,\] 
so that $s_{c,T}(t)$ can be interpreted as the residual variance in $I(T > t)$ explained by $U$ after adjusting for $(A,W)$. Similarly, by the definitions of $\alpha_c$ and $\alpha_P$, we can write
\[ s_{c,A} = \frac{E \left[1 / \n{Var}(A \mid W, U)\right] - E \left[1 / \n{Var}(A \mid W)\right]}{E \left[1 / \n{Var}(A \mid W, U)\right]}\,,\] 
so that $s_{c,A}$ can be interpreted as the the proportion of the mean conditional \emph{precision} of $A$ given $(W, U)$ not explained by $W$ alone.

Therefore, unlike coefficients of unobserved confounders in parametric regression models, our sensitivity parameters have model-agnostic, nonparametric interpretations without assumptions about the dimension or distribution of the unobserved confounders or that they are independent of the observed covariates. Furthermore, both $s_{c,T}(t)$ and $s_{c,A}$ are independent of the variance of the covariates and are contained in $[0,1]$. This scale-free property allows us to \textit{benchmark} the magnitude of the sensitivity parameters against values from the observed covariates \citep{cinelli2020making,veitch2020sense,chernozhukov2022long}, which we will discuss further in Section~\ref{sec:senspar}.

\section{Estimation and inference}\label{sec:estinf}
\subsection{Estimation}\label{sec:estimation}
We now define estimators of the effect bounds $\theta_{P,l}(t, v)$ and $\theta_{P,u}(t,v)$ for fixed $(t,v)$. Our estimators will be given by $\theta_n(t) \pm \sqrt{ |v|\psi_n(t) \tau_n}$  for estimators $\theta_n(t)$, $\psi_n(t)$, and $\tau_n$ of $\theta_P(t)$, $\psi_P(t)$ and $\tau_P$, respectively. We will use the cross-fitted, one-step estimator $\theta_{n}(t)$ of $\theta_P(t)$ proposed in \cite{westling2023inference}. We denote the efficient influence function (EIF) of $\theta_P(t)$ as $D^\ast_{P,\theta,t}$. We refer the reader to \cite{westling2023inference} for the definition of $D^\ast_{P,\theta,t}$ and details of the construction of $\theta_{n}$.

To estimate $\psi_P(t)$ and $\tau_P$, we will make use of the fact that these parameters are \emph{pathwise differentiable} relative to a nonparametric model, meaning intuitively that they are smooth enough as a function of the data-generating distribution to permit $n^{-1/2}$-rate estimation. We refer the reader to \cite{kennedy2016semiparametric} and references therein for a review of semiparametric efficiency theory. We define $G_P(t \mid a, w):=P(C \geq t \mid A=a, W=w)$ as the left-continuous conditional survival function of the censoring time $C$ and $\Lambda_P\left(t \mid a, w\right)$ as the cumulative hazard function corresponding to $S_P(t \mid a, w)$. We present the nonparametric EIFs of $\psi_P(t)$ and $\tau_P$ in the following propositions.

\begin{prop}\label{prop:eif1}
If there exists $\kappa > 0$ such that $G_P(t \mid a, w) \geq \kappa$ for each $a \in \{0,1\}$ and $P$-almost every $w$ such that $S_P(t \mid a, w) > 0$, then $\psi_P(t)$ is pathwise differentiable in a nonparametric model with EIF $D_{P,\psi,t}^\ast := D_{P,\psi,t} - \psi_P(t)$ where $D_{P,\psi,t}(y, \delta, a,w)$ equals
\begin{align*}
    &\left[1-2S_P\left(t \mid a, w\right)\right] S_P\left(t \mid a, w\right)\left[-\frac{I(y \leq t, \delta=1)}{S_P\left(y \mid a, w\right) G_P\left(y \mid a, w\right)}+\int_0^{t \wedge y} \frac{\Lambda_P\left(d u \mid a, w\right)}{S_P\left(u \mid a, w\right) G_P\left(u \mid a, w\right)}\right]\\
     &\qquad + S_P\left(t \mid a, w\right)\left[1-S_P\left(t \mid a, w\right)\right]\,.
\end{align*}
\end{prop}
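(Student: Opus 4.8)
The plan is to verify pathwise differentiability directly from the definition: along an arbitrary regular parametric submodel $\{P_\epsilon\}$ through $P$ with score $s(O)$ at $\epsilon = 0$, I will compute $\frac{d}{d\epsilon}\psi_{P_\epsilon}(t)\big|_{\epsilon=0}$ and exhibit it as $E[D^\ast_{P,\psi,t}(O)\, s(O)]$; since the model is nonparametric, the mean-zero $D^\ast_{P,\psi,t}$ is then automatically the EIF \citep{kennedy2016semiparametric}. Write $\psi_P(t) = \int f\big(S_P(t\mid a,w)\big)\, dP_{A,W}(a,w)$ with $f(s) := s(1-s)$, so $f'(s) = 1-2s$, where $P_{A,W}$ is the marginal law of $(A,W)$. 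Differentiating under the integral and applying the product rule gives two contributions. The derivative through the marginal $P_{A,W,\epsilon}$ equals $E\big[\{f(g_{P,t}) - \psi_P(t)\}\, s(O)\big]$ (subtracting $\psi_P(t)$ is free because $E[s(O)] = 0$), contributing $S_P(t\mid A,W)\{1-S_P(t\mid A,W)\} - \psi_P(t)$ to the gradient. The derivative through $S_{P_\epsilon}$ equals $\int f'\big(S_P(t\mid a,w)\big)\, \dot S(t\mid a,w)\, dP_{A,W}(a,w)$, where $\dot S(t\mid a,w)$ is the Gateaux derivative of $\epsilon \mapsto S_{P_\epsilon}(t\mid a,w)$; this term carries all of the survival-analysis content.

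For the second contribution I would use the canonical gradient of the conditional survival functional under right censoring. From the product-integral representation of $S_P(t\mid a,w)$ in terms of $\Lambda_P(\cdot\mid a,w)$, the counting-process martingale representation of the conditional Nelson--Aalen functional under conditionally independent censoring \ref{itm:idf_3}, and compact differentiability of the product-integral map (the Duhamel equation), one obtains that for fixed $(a,w)$ with $S_P(t\mid a,w) > 0$ the Gateaux derivative satisfies $\dot S(t\mid a,w) = E\big[\varphi_{P,t}(O)\, s(O)\mid A = a, W = w\big]$, where
\[
\varphi_{P,t}(y,\delta,a,w) := -S_P(t\mid a,w)\left[\frac{I(y\le t, \delta = 1)}{S_P(y\mid a,w)\, G_P(y\mid a,w)} - \int_0^{t\wedge y}\frac{\Lambda_P(du\mid a,w)}{S_P(u\mid a,w)\, G_P(u\mid a,w)}\right]
\]
(and $\dot S(t\mid a,w) = 0$ where $S_P(t\mid a,w) = 0$), with $E[\varphi_{P,t}(O)\mid A,W] = 0$. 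This is exactly the ingredient underlying the EIF $D^\ast_{P,\theta,t}$ of \cite{westling2023inference}, here evaluated at the observed treatment and without inverse-propensity weighting, so I would either cite that derivation or reproduce it. Substituting into $\int f'(S_P)\, \dot S\, dP_{A,W}$ with the fixed bounded weight $f'(S_P(t\mid a,w)) = 1 - 2S_P(t\mid a,w)$ and using iterated expectations yields $E\big[(1-2S_P(t\mid A,W))\, \varphi_{P,t}(O)\, s(O)\big]$. Adding the two contributions and noting $(1-2s)\,(-s[\cdots]) = [1-2s]\,s\,[-(\cdots)]$ reproduces precisely $D_{P,\psi,t} - \psi_P(t)$ as stated; since $s(O)$ was arbitrary and the nonparametric tangent space is all of mean-zero $L^2(P)$, $D^\ast_{P,\psi,t}$ is the EIF.

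The last step is to check $D^\ast_{P,\psi,t} \in L^2(P)$, which is where the hypothesis $G_P(t\mid a,w) \ge \kappa$ on $\{S_P(t\mid a,w) > 0\}$ enters: the weight $1 - 2S_P$ is bounded by $1$, $\varphi_{P,t}$ vanishes wherever $S_P(t\mid a,w) = 0$, and on $\{S_P(t\mid a,w) > 0\}$ the conditional second moment of $\varphi_{P,t}$ equals $S_P(t\mid a,w)^2 \int_0^t \Lambda_P(du\mid a,w)/\{S_P(u\mid a,w) G_P(u\mid a,w)\} \le \kappa^{-1} S_P(t\mid a,w)\{-\log S_P(t\mid a,w)\} \le (e\kappa)^{-1}$, using $S_P(u\mid a,w) \ge S_P(t\mid a,w)$ and $G_P(u\mid a,w) \ge \kappa$ for $u \le t$; hence everything has finite variance. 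I expect the main obstacle to be the second paragraph: rigorously pinning down the gradient of the conditional survival function under right censoring — in particular justifying differentiation through the product integral and establishing integrability of the martingale-integral term under the stated positivity. If the analogous computation in \cite{westling2023inference} may be invoked, the remainder is bookkeeping: the product rule, one substitution, and an elementary variance bound.
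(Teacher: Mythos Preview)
Your proposal is correct and follows essentially the same route as the paper: write $\psi_P(t) = \int f(S_P(t\mid a,w))\,dQ_P(a,w)$ with $f(s)=s(1-s)$, apply the product rule along a submodel, and handle the $\partial_\epsilon S_\epsilon$ term by invoking the conditional-survival gradient computation from \cite{westling2023inference}. Your explicit $L^2(P)$ verification using the hypothesis $G_P(t\mid a,w)\ge\kappa$ is a nice addition that the paper's proof leaves implicit.
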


\begin{prop}\label{prop:eif2}
If there exists $\kappa > 0$ such that $\pi_P(w) \in [\kappa, 1-\kappa]$ for $P$-almost every $w$, then $\tau_P$ is a pathwise differentiable parameter in a nonparametric model with EIF $D_{\tau}^\ast := D_{\tau} - \tau_P$ where 
\begin{align*}
    D_{P,\tau}(a,w) &= \frac{2}{\pi_P(w)\left[1-\pi_P(w)\right]} - \frac{\left[a -\pi_P(w)\right]^2}{\pi_P(w)^2\left[1-\pi_P(w)\right]^2}\,.
\end{align*}
\end{prop}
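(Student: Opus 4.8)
The plan is to derive the efficient influence function of $\tau_P = E[1/\{\pi_P(W)[1-\pi_P(W)]\}]$ by the standard two-step recipe: (i) compute a Gateaux derivative along a one-dimensional parametric submodel to read off a candidate gradient, and (ii) verify that the candidate lies in the nonparametric tangent space, which then certifies it is the EIF. Since $\tau_P$ is a functional of the observed data through only $W$ (its marginal law) and $\pi_P(\cdot)$ (the conditional mean of $A$ given $W$), I would work with a submodel $\{P_\epsilon\}$ whose score at $\epsilon = 0$ is $s(o) = s_A(a \mid w) + s_W(w)$, orthogonally decomposed into the $A \mid W$ component and the $W$-marginal component.

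\textbf{Step 1: differentiate.} Writing $b(\pi) := 1/\{\pi(1-\pi)\}$ and $\tau(P_\epsilon) = \int b(\pi_\epsilon(w)) \, dP_{\epsilon,W}(w)$, the product rule gives two contributions. The $W$-marginal piece contributes $\int b(\pi_P(w)) s_W(w)\, dP_W(w) = E\big[\{b(\pi_P(W)) - \tau_P\} s_W(W)\big]$ after centering. For the $\pi$ piece I would use that $\tfrac{d}{d\epsilon}\pi_\epsilon(w)\big|_{\epsilon=0} = E[\{A - \pi_P(W)\} s_A(A\mid W) \mid W = w]$, the usual formula for the derivative of a conditional mean along a submodel, together with $b'(\pi) = \tfrac{2\pi - 1}{\pi^2(1-\pi)^2}$. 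Combining, the $\pi$ piece becomes $E\big[ b'(\pi_P(W)) \{A - \pi_P(W)\} s_A(A \mid W)\big]$, so the full derivative is the inner product of $s$ with
$$
\phi(a,w) := \{b(\pi_P(w)) - \tau_P\} + b'(\pi_P(w))\{a - \pi_P(w)\}.
$$
The first summand depends only on $w$ and the second has conditional mean zero given $W$, so $\phi$ is already correctly split across the two orthogonal tangent subspaces and hence lies in the tangent space; therefore $\phi$ is the EIF. It then remains to check that $\phi$ coincides with the claimed $D_{P,\tau}^\ast = D_{P,\tau} - \tau_P$, i.e. that $b(\pi_P(w)) + b'(\pi_P(w))\{a - \pi_P(w)\}$ equals $D_{P,\tau}(a,w) = \tfrac{2}{\pi_P(w)[1-\pi_P(w)]} - \tfrac{[a - \pi_P(w)]^2}{\pi_P(w)^2[1-\pi_P(w)]^2}$. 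This is a purely algebraic identity: expanding $b$ and $b'$ and collecting terms in powers of $a$ (using $a^2 = a$ for $a \in \{0,1\}$, or equivalently matching the two values $a=0$ and $a=1$) verifies it directly. Pathwise differentiability in the nonparametric model then follows because $\phi$ is a bounded function of $(A,W)$ under the assumed overlap $\pi_P(w) \in [\kappa, 1-\kappa]$, so it belongs to $L^2(P)$ and defines a bona fide gradient.

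\textbf{Main obstacle.} The only genuinely delicate point is justifying the interchange of differentiation and integration (and the conditional-mean derivative formula) rigorously for an arbitrary regular parametric submodel; this is where the overlap assumption $\pi_P(w) \in [\kappa, 1-\kappa]$ does the work, since it keeps $b$ and $b'$ uniformly bounded and bounds the remainder terms in the expansion of $b(\pi_\epsilon(w))$ around $\pi_P(w)$. Once boundedness is in hand, dominated convergence handles the interchange and the verification reduces to the elementary algebra above. I would also remark that an equivalent and perhaps cleaner route is to note $\tau_P = E[\alpha_P^2]$ with $\alpha_P(A,W) = A/\pi_P(W) - (1-A)/(1-\pi_P(W))$, write $\alpha_P^2 = 1/\{\pi_P(1-\pi_P)\}$, and recognize $\tau_P$ as a special case of a mean of a smooth transform of the propensity score whose EIF is tabulated in, e.g., \cite{kennedy2016semiparametric}; the direct derivative calculation is short enough, however, that I would present it in full in the Supplementary Material.
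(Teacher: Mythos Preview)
Your proposal is correct and follows essentially the same route as the paper's proof: both write $\tau_P = \int b(\pi_P(w))\,dQ_P(w)$ with $b(\pi)=1/\{\pi(1-\pi)\}$, apply the product rule to a parametric submodel, and split the derivative into a $W$-marginal piece and a propensity-score piece using $b'(\pi)=(2\pi-1)/\{\pi^2(1-\pi)^2\}$. The only cosmetic difference is that you package the candidate gradient in the linearized form $b(\pi_P(w)) + b'(\pi_P(w))\{a-\pi_P(w)\}$ and then verify the algebraic identity with the paper's quadratic form $2b(\pi_P(w)) - (a-\pi_P(w))^2/\{\pi_P(w)^2[1-\pi_P(w)]^2\}$, whereas the paper works directly toward the latter expression; the two are equivalent once one uses $a^2=a$ for binary $a$.
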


Based on these EIFs, we construct cross-fitted one-step estimators $\psi_n(t)$ and $\tau_n$ of $\psi_P(t)$ and $\tau_P$, respectively. The EIFs depend on the nuisance functions $S_{P}$, $G_{P}$ and $\pi_{P}$. In order to avoid Donsker conditions on our nuisance function estimators, we employ cross-fitting. We randomly split the observations into $K$ disjoint groups $\{\mathcal{V}_{n, 1},\dotsc,\mathcal{V}_{n, K}\}$ of approximately equal size. For each $k \in \{1, \dotsc, K\}$, we  construct nuisance estimators $S_{n,k}$, $G_{n,k}$ and $\pi_{n,k}$ using the training set $\mathcal{T}_{n, k}:=\left\{O_i: i \notin \mathcal{V}_{n, k}\right\}$. There is a one-to-one relationship between $S_P$ and $\Lambda_{P}$, so $\Lambda_{n,k}$ can be obtained from $S_{n,k}$. We then construct $D_{n,\psi,k,t}$ and $D_{n,\tau,k}$ by substituting $S_{n,k}$, $G_{n,k}$, $\pi_{n,k}$, and $\Lambda_{n,k}$ for  their counterparts in $D_{P,\psi,t}$ and $D_{P,\tau}$. We then define our cross-fitted one-step estimators
\begin{align}
\psi_{n}(t) :=\frac{1}{n} \sum_{k=1}^K \sum_{i \in \mathcal{V}_{n, k}} D_{n,\psi,k,t}\left(O_i\right)\,, \ \text{and } \, 
\tau_{n}  :=\frac{1}{n} \sum_{k=1}^K \sum_{i \in \mathcal{V}_{n, k}} D_{n,\tau,k}\left(O_i\right)\,. \label{eq:est1}
\end{align}

Both $\psi_P(t)$ and $\tau_P$ are non-negative, but the one-step estimators $\psi_{n}(t)$ and $\tau_{n}$ can be negative. In order to provide estimators that are guaranteed to be non-negative, we can use the one-step estimator if it is positive and otherwise use the plug-in estimator:
\begin{align*}
\psi_n^+(t) & :=I\left\{\psi_{n}(t)>0\right\}\psi_{n}(t)+I\left\{\psi_{n}(t) \leq 0\right\}  \frac{1}{n} \sum_{k=1}^K \sum_{i \in \mathcal{V}_{n, k}} S_{n,k}(t \mid A_i, W_i) \left[ 1-S_{n,k}(t \mid A_i, W_i)\right] \quad \text{and}\\
\tau_n^+ & :=I\left\{\tau_{n}>0\right\} \tau_{n}+I\left\{\tau_{n} \leq 0\right\} \frac{1}{n} \sum_{k=1}^K \sum_{i \in \mathcal{V}_{n, k}}\frac{1}{\pi_{n,k}(W_i)\left[1-\pi_{n,k}(W_i)\right]}\,.
\end{align*}
If $\psi_P(t) >0$ and $\tau_P > 0$, then the asymptotic properties of $\psi_n^+(t)$ and $\tau_n^+$ are the same as those of $\psi_n(t)$ and $\tau_n$.

For fixed $(v,t)$, we define our estimators of the effect bounds as
\begin{equation}
\begin{aligned}
    \theta_{n,l}(t,v) & = \theta_{n}(t) - \sqrt{|v|\psi_n^+(t)\tau_n^+}\,, \quad \text{and} \\
    \theta_{n,u}(t,v) & = \theta_{n}(t) + \sqrt{|v|\psi_n^+(t)\tau_n^+}\,.
\end{aligned}
\end{equation}
By the delta method, the influence functions of $\theta_{n,l}(t,v)$ and $\theta_{n,u}(t,v)$ are given by
\begin{equation}
\begin{aligned}
     D^\ast_{P,l, t,v} &= D^\ast_{P,\theta,t} - \frac{1}{2}\sqrt{\frac{|v|}{\psi_P(t)\tau_P}}\left[ \tau_P D^\ast_{P,\psi,t} + \psi_P(t) D^\ast_{P,\tau} \right]\,, \quad \text{and}\\
     D^\ast_{P,u, t,v} &= D^\ast_{P,\theta,t} + \frac{1}{2}\sqrt{\frac{|v|}{\psi_P(t)\tau_P}} \left[ \tau_P D^\ast_{P,\psi,t} + \psi_P(t) D^\ast_{P,\tau}\right]\,.
\end{aligned}
\end{equation}

We now discuss the large-sample properties of our proposed estimators $\theta_{n,l}(t,v)$ and $\theta_{n,u}(t,v)$. We let $G_\infty$ be the limit of $G_P$, which for our consistency result does not necessarily need to equal $G_P$. We then define
\begin{align*}
r_{n, 1} & :=\max_k E_P\left[\pi_{n, k}(W)-\pi_{P}(W)\right]^2\,;\\
r_{n, t, 2} & :=\max_k E_P\left[\sup _{u \in[0, t]}\left|G_{n, k}(u \mid A, W)-G_{\infty}(u \mid A, W)\right|^2\right]\,;  \\
r_{n, t, 3} & :=\max _k E_P\left[\sup _{u \in[0, t]}\left|\frac{S_{n, k}(t \mid A, W)}{S_{n, k}(u \mid A, W)}-\frac{S_{P}(t \mid A, W)}{S_{P}(u \mid A, W)}\right|^2\right] \text{ and}\\
r_{n,t,4} &:= \max _k E_P\left[\sup _{u \in[0, t]} \sup _{v \in[0, u]} \left|\frac{S_{n, k}(u \mid A, W)}{S_{n, k}(v \mid A, W)}-\frac{S_{P}(u \mid A, W)}{S_{P}(v \mid A, W)} \right|^2\right]\,.
\end{align*}
We then have the following conditions for consistency of our estimators.
\begin{description}[style=multiline,leftmargin=2cm, labelindent=.9cm]
    \item[\namedlabel{itm:nuisance}{(B1)}] There exists $G_{\infty}$ such that $r_{n, 1}$, $r_{n,t,2}$, and $r_{n,t,3}$ are all $\fasterthan(1)$.
    \item[\namedlabel{itm:positivity}{(B2)}] There exists $\eta>0$ such that, with probability tending to one, for $P$-almost all $w, \pi_{n, k}(w) \geq 1 / \eta$, $\pi_{P}(w) \geq 1 / \eta, G_{n, k}(t \mid a, w) \geq 1 / \eta$, and $G_{\infty}(t \mid a, w)$ $\geq 1 / \eta$.
    \item[\namedlabel{itm:uniform}{(B3)}] It holds that $r_{n,t,4} = \fasterthan(1)$. 
\end{description}

\begin{thm}[Consistency]
\label{thm:consistency}
If conditions~\ref{itm:nuisance}--\ref{itm:positivity} hold, then $\theta_{n}(t) \inprob \theta_{P}(t)$, $\psi_{n}(t) \inprob \psi_{P}(t)$, and $\tau_{n} \inprob \tau_{P}$. Then by the continuous mapping theorem, 
$$\theta_{n,l}(t,v) \inprob \theta_{P,l}(t,v) \quad \text{and} \quad \theta_{n,u}(t,v) \inprob \theta_{P,u}(t,v)\,.$$ 
If condition~\ref{itm:uniform} also holds, then $\sup_{u \in [0,t]} \left| \theta_{n}(u) - \theta_{P}(u)\right| \inprob 0$ and $\sup_{u \in [0,t]} \left| \psi_{n}(u) - \psi_{P}(u)\right| \inprob 0$, so
\[\sup _{u \in[0, t]}\left|\theta_{n,l}(t,v)-\theta_{P,l}(t,v)\right| \inprob 0 \quad \text{and} \quad \sup _{u \in[0, t]}\left|\theta_{n,u}(t,v)-\theta_{P,u}(t,v)\right| \inprob 0\,.\]
\end{thm}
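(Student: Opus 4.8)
The plan is to establish consistency of each of the three building-block estimators $\theta_n(t)$, $\psi_n(t)$, and $\tau_n$ separately, and then assemble the conclusion via the continuous mapping theorem applied to the map $(\theta, \psi, \tau) \mapsto \theta \pm \sqrt{|v|\psi\tau}$, which is continuous in the relevant region. For $\theta_n(t)$, consistency under \ref{itm:nuisance}--\ref{itm:positivity} is not proved here but follows from \cite{westling2023inference}; I would simply cite that result, verifying that conditions~\ref{itm:nuisance} and \ref{itm:positivity} imply the hypotheses used there (in particular that the rate quantities $r_{n,1}$, $r_{n,t,2}$, $r_{n,t,3}$ being $\fasterthan(1)$ suffice for consistency, which is weaker than the rate products needed for asymptotic linearity). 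The uniform statement $\sup_{u\in[0,t]}|\theta_n(u)-\theta_P(u)| \inprob 0$ under the additional condition~\ref{itm:uniform} likewise comes from the uniform consistency result in \cite{westling2023inference}, where $r_{n,t,4}$ controls the error in the ratio of survival functions uniformly over nested time pairs.

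The core new work is consistency of $\psi_n(t)$ and $\tau_n$. For each, I would write the cross-fitted one-step estimator as $\psi_n(t) = P_n D_{n,\psi,k,t} = \psi_P(t) + (P_n - P) D_{P,\psi,t} + (P_n - P)(D_{n,\psi,k,t} - D_{P,\psi,t}) + P(D_{n,\psi,k,t} - D_{P,\psi,t})$, where I abuse notation slightly to denote the empirical average over folds. The first stochastic term is $\fasterthan(1)$ by the weak law of large numbers since $D_{P,\psi,t}$ is a fixed square-integrable function (integrability following from the positivity bound $G_P \geq \kappa$ in Proposition~\ref{prop:eif1}). The second term, the empirical process term on the cross-fitted residual, is handled by conditioning on the training fold $\mathcal{T}_{n,k}$: conditionally, $O_i$ for $i \in \mathcal{V}_{n,k}$ are i.i.d. and independent of the estimated nuisances, so this term has conditional mean zero and conditional variance bounded by $n^{-1}\|D_{n,\psi,k,t} - D_{P,\psi,t}\|_{L^2(P)}^2$, which goes to zero provided the nuisance estimators are $L^2$-consistent and stay bounded away from the relevant boundaries --- this is where \ref{itm:positivity} and the rate conditions in \ref{itm:nuisance} enter. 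The third term is a deterministic (given the nuisances) bias term $P(D_{n,\psi,k,t} - D_{P,\psi,t})$; since $D_{P,\psi,t}$ is the uncentered EIF and $\psi_P$ is pathwise differentiable, this term equals the remainder in a von Mises expansion and can be bounded by products and squares of the nuisance errors $r_{n,1}^{1/2}$, $r_{n,t,2}^{1/2}$, $r_{n,t,3}^{1/2}$, each of which is $\fasterthan(1)$; for mere consistency even first-order control suffices, so the crude bound ``sum of the nuisance $L^2$-errors'' is enough. The argument for $\tau_n$ is entirely analogous but simpler, since $D_{P,\tau}$ depends only on $\pi_P$ and the relevant rate is just $r_{n,1}$.

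Having shown $\theta_n(t) \inprob \theta_P(t)$, $\psi_n(t) \inprob \psi_P(t)$, and $\tau_n \inprob \tau_P$, I would invoke the continuous mapping theorem: the functions $(a,b,c)\mapsto a - \sqrt{|v|\,b_+ c_+}$ and $(a,b,c)\mapsto a + \sqrt{|v|\,b_+ c_+}$ (with $x_+ := \max\{x,0\}$) are continuous on all of $\mathbb{R}^3$, so the plug-in estimators $\theta_{n,l}(t,v)$ and $\theta_{n,u}(t,v)$ converge in probability to $\theta_{P,l}(t,v)$ and $\theta_{P,u}(t,v)$; here I use that $\psi_P(t) \geq 0$ and $\tau_P \geq 0$ so the positive-part truncation does not change the limit, and also that $\psi_n^+(t)$ and $\tau_n^+$ have the same limit as $\psi_n(t)$ and $\tau_n$ whenever the targets are nonnegative. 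For the uniform statement, the same three-term decomposition applies to $\psi_n(u)$ with suprema over $u \in [0,t]$ taken inside the norms; the empirical process term is controlled by a maximal inequality over the (finite-fold) collection together with the fact that the class $\{D_{P,\psi,u}: u\in[0,t]\}$ is $P$-Donsker (monotone-type structure in $u$) or at least totally bounded, and the bias term is bounded uniformly in $u$ using $r_{n,t,4}$ to control $\sup_{u}|S_{n,k}(u\mid a,w)/S_{n,k}(v\mid a,w) - S_P(u\mid a,w)/S_P(v\mid a,w)|$; then uniform consistency of $\theta_n$ from \cite{westling2023inference} and a uniform continuous mapping argument give the two displayed uniform limits. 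I expect the main obstacle to be the bias term $P(D_{n,\psi,k,t} - D_{P,\psi,t})$ in the $\psi_n$ analysis: correctly identifying its von Mises expansion for a product-integral--based survival functional and verifying it is bounded by the stated rate quantities requires carefully tracking how the EIF's dependence on $S_P$, $\Lambda_P$, and $G_P$ interacts, much as in \cite{westling2023inference}; everything else is routine.
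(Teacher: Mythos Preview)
Your overall structure matches the paper's: decompose each cross-fitted one-step estimator into a leading empirical mean, a cross-fitted empirical-process term, and a bias term, then invoke the continuous mapping theorem. The paper likewise defers $\theta_n$ to \cite{westling2023inference} and proves Lemmas giving the $L^2$ and bias bounds for $\psi_n$ and $\tau_n$.

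There is, however, a real gap in your $\psi_n$ argument tied to the fact that condition~\ref{itm:nuisance} allows $G_{n,k}\to G_\infty$ with $G_\infty\neq G_P$. You center at $D_{P,\psi,t}$, but then $\|D_{n,\psi,k,t}-D_{P,\psi,t}\|_{L^2(P)}$ need \emph{not} tend to zero (the piece coming from $G_{n,k}-G_P$ does not vanish), so ``provided the nuisance estimators are $L^2$-consistent'' is not a hypothesis you have. The empirical-process term is still $o_P(1)$, but only because the $n^{-1/2}$ scaling kills a merely bounded $L^2$ norm; your stated reason is incorrect. More seriously, for the bias term $P(D_{n,\psi,k,t}-D_{P,\psi,t})$ your fallback ``crude bound: sum of the nuisance $L^2$-errors'' fails outright when $G_\infty\neq G_P$, since the sum contains $\|G_{n,k}-G_P\|$, which does not go to zero. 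You must use the product structure you allude to: the von Mises remainder is (up to a bounded factor) an integral of $(G_P/G_{n,k}-1)\,d(\Lambda_{n,k}-\Lambda_P)$ plus $E[(S_{n,k}-S_P)^2]$, and it is the \emph{product} with $(\Lambda_{n,k}-\Lambda_P)\to 0$ that makes the first term vanish despite $G_{n,k}\not\to G_P$. Also note $r_{n,1}$ plays no role for $\psi_n$ since $D_{P,\psi,t}$ does not involve $\pi_P$.

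The paper avoids this awkwardness by centering at $D_{\infty,\psi,t}$ (the influence function evaluated at $S_P$ and the limit $G_\infty$) rather than at $D_{P,\psi,t}$. A separate lemma then shows $PD_{\infty,\psi,t}^\ast=0$ (this is exactly the robustness of $\psi_n$ to inconsistent $G$), and decomposes $D_{n,k,\psi,t}-D_{\infty,\psi,t}$ into six pieces each bounded by the stated rate quantities $r_{n,t,2}$ and $r_{n,t,3}$, which \emph{do} go to zero. With that centering, both the empirical-process and bias terms are handled directly by $\max_k P(D_{n,k,\psi,t}-D_{\infty,\psi,t})^2=o_P(1)$, and the uniform version follows from the same lemma together with~\ref{itm:uniform}. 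Your approach can be repaired, but the paper's choice of centering is the cleaner route and is what makes the robustness claim transparent.
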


Theorem~\ref{thm:consistency} provides conditions under which $\theta_{n,l}(t,v)$ and $\theta_{n,u}(t,v)$ are (uniformly) consistent. Condition~\ref{itm:nuisance} requires that $S_{n,k}$ and $\pi_n$ are consistent, but allows $G_{n,k}$ to be inconsistent. Therefore, $\theta_{n,l}(t,v)$ and $\theta_{n,u}(t,v)$ are robust to estimation of the survival function of censoring. Condition~\ref{itm:positivity} requires positivity of treatment assignment and censoring, and  condition~\ref{itm:uniform} requires a stronger type of convergence of $S_{n,k}$ to $S_P$ for uniform consistency.

We now introduce additional conditions under which $\theta_{n,l}(t,v)$ and $\theta_{n,u}(t,v)$ are asymptotically linear. 
\begin{description}[style=multiline,leftmargin=2cm, labelindent=.9cm]
\item[\namedlabel{itm:asy}{(B4)}] It holds that $G_\infty = G_P$ and $r_{n, 1}$, $r_{n,t,2}$, and $r_{n,t,3}$ are $\fasterthan(n^{-1/2})$.
\item[\namedlabel{itm:asy_unif}{(B5)}] It holds that $r_{n,t,4} = \fasterthan(n^{-1/2})$.
\end{description}
We define $\d{P}_n$ as the empirical distribution of the observed data.

\begin{thm}[Asymptotic linearity]
\label{thm:asy_linear}
If conditions~\ref{itm:nuisance}--\ref{itm:asy} hold with  $G_{\infty}=G_P$, $\tau_P >0$, and $\psi_P(t) > 0$, then $\theta_{n}(t) - \theta_{P}(t) = \d{P}_n D^\ast_{P,\theta,t} + \fasterthan(n^{-1/2})$, $\psi_{n}(t) - \psi_{P}(t) = \d{P}_n D^\ast_{P,\psi,t} + \fasterthan(n^{-1/2})$, $\tau_{n} - \tau_{P} = \d{P}_n D^\ast_{P,\tau} + \fasterthan(n^{-1/2})$, $\theta_{n,l}(t,v) - \theta_{P,l}(t,v) = \d{P}_n D^\ast_{P,l, t,v} + \fasterthan(n^{-1/2})$, and $\theta_{n,u}(t,v) - \theta_{P,u}(t,v) = \d{P}_n D^\ast_{P,u, t,v} + \fasterthan(n^{-1/2})$. If in addition condition~\ref{itm:asy_unif} holds and $\inf_{s \in [0,t]} \psi_P(s) > 0$, then
\begin{align*}
    & \sup_{v \leq M} \sup _{s \in[0, t]}\left|\theta_{n,l}(s,v) - \theta_{P,l}(s,v)-\d{P}_n D^\ast_{P,l, s,v}\right|=\fasterthan(n^{-1/2}) \quad \text{and} \quad \\
    & \sup_{v \leq M}\sup _{s \in[0, t]}\left|\theta_{n,u}(s,v) - \theta_{P,u}(s,v)-\d{P}_n D^\ast_{P,u,s,v}\right|=\fasterthan(n^{-1/2})\,.
\end{align*}
for any $M < \infty$.
\end{thm}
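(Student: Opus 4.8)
The plan is to establish asymptotic linearity of each of $\theta_n(t)$, $\psi_n(t)$, and $\tau_n$ separately, and then combine them via the delta method and a uniform Slutsky argument. The representation for $\theta_n(t)$ is already available from \cite{westling2023inference}; under \ref{itm:nuisance}--\ref{itm:asy}, their results give $\theta_n(t) - \theta_P(t) = \mathbb{P}_n D^\ast_{P,\theta,t} + \fasterthan(n^{-1/2})$ pointwise, and under \ref{itm:asy_unif} the expansion holds uniformly over $u \in [0,t]$. So the new work is the analysis of $\psi_n(t)$ and $\tau_n$. For a cross-fitted one-step estimator built from the EIF $D^\ast_{P,\psi,t}$, the standard decomposition is $\psi_n(t) - \psi_P(t) = \mathbb{P}_n D^\ast_{P,\psi,t} + (\mathbb{P}_n - P)(D_{n,\psi,k,t} - D_{P,\psi,t}) + R_{n,\psi}(t)$, where $R_{n,\psi}(t)$ is the second-order ``remainder'' term arising because the EIF is evaluated at estimated nuisances. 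Cross-fitting handles the empirical-process term: conditional on the training set $\mathcal{T}_{n,k}$, the term $(\mathbb{P}_{n,k} - P)(D_{n,\psi,k,t} - D_{P,\psi,t})$ has conditional mean zero and conditional variance bounded by a constant times $E_P\|D_{n,\psi,k,t} - D_{P,\psi,t}\|^2$, which is $\fasterthan(1)$ under \ref{itm:nuisance}--\ref{itm:positivity} (using the explicit form of $D_{P,\psi,t}$ and that ratios $S_P(t\mid\cdot)/S_P(u\mid\cdot)$ and $G_P$ are bounded away from the relevant degeneracies), so this term is $\fasterthan(n^{-1/2})$. The same argument applies verbatim to $\tau_n$ using the bounded function $D_{P,\tau}$.

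The crux is showing the remainder $R_{n,\psi}(t)$ is $\fasterthan(n^{-1/2})$. Here one writes $R_{n,\psi}(t) = P(D_{n,\psi,k,t} - D_{P,\psi,t}) + \psi_P(t) - \bar\psi_{n,k}(t)$ where $\bar\psi_{n,k}(t)$ is the plug-in at the estimated nuisances, and expands in the nuisance errors $S_{n,k} - S_P$, $G_{n,k} - G_P$, $\Lambda_{n,k} - \Lambda_P$. The first-order terms vanish by the defining property of the EIF, leaving products of nuisance errors; using the positivity in \ref{itm:positivity} and the bound $G_\infty = G_P$ in \ref{itm:asy}, each such product is bounded (Cauchy--Schwarz) by a sum of terms of the form $r_{n,1}^{1/2} r_{n,t,3}^{1/2}$, $r_{n,t,2}$, $r_{n,t,3}$, etc., each of which is $\fasterthan(n^{-1/2})$ by \ref{itm:asy}. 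This calculation is essentially the same Cox-model-type product-rate analysis used in \cite{westling2023inference} for $\theta_n$; the main obstacle is the bookkeeping in expanding the integral term $\int_0^{t\wedge y}\Lambda_P(du\mid a,w)/[S_P(u\mid a,w)G_P(u\mid a,w)]$ and controlling the difference of product integrals via the Duhamel equation, for which one invokes \ref{itm:uniform} (for consistency) and \ref{itm:asy_unif} (for the rate) to bound the ratios $S_{n,k}(u)/S_{n,k}(v)$ uniformly. For $\tau_n$, the remainder is genuinely second-order and simpler: $R_{n,\tau} = O(r_{n,1}) = \fasterthan(n^{-1/2})$ directly, since $D_{P,\tau}$ depends only on $\pi_P$ and is a smooth ($C^2$ on $[\kappa,1-\kappa]$) function of it.

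Having established the three marginal expansions, I would obtain the expansions for $\theta_{n,l}(t,v)$ and $\theta_{n,u}(t,v)$ by the delta method applied to the map $(\theta,\psi,\tau) \mapsto \theta \mp \sqrt{|v|\psi\tau}$, which is differentiable at $(\theta_P(t),\psi_P(t),\tau_P)$ precisely because $\psi_P(t)>0$ and $\tau_P>0$; its gradient yields exactly the stated $D^\ast_{P,l,t,v}$ and $D^\ast_{P,u,t,v}$. Because $\psi_n^+(t)$ and $\tau_n^+$ agree with $\psi_n(t)$ and $\tau_n$ with probability tending to one under $\psi_P(t)>0$, $\tau_P>0$, the ``$+$'' versions inherit the same expansion. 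For the uniform statement, I would strengthen each ingredient to hold uniformly over $s \in [0,t]$ and $v \le M$: the empirical-process and remainder bounds are already uniform in $s$ (the suprema appearing in $r_{n,t,j}$ are taken over $[0,t]$), and uniformity in $v \le M$ is immediate since $v$ enters only through the continuous, bounded multiplier $\sqrt{|v|}$; the delta-method linearization is uniform because $\inf_{s\in[0,t]}\psi_P(s)>0$ (together with $\tau_P>0$) keeps the gradient bounded and the linearization error uniformly $\fasterthan(n^{-1/2})$ via a uniform mean-value argument. Combining gives the two displayed uniform expansions, completing the proof.
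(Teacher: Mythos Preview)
Your proposal is correct and follows essentially the same route as the paper: cite \cite{westling2023inference} for $\theta_n$, use the standard cross-fitted one-step decomposition into leading term, empirical-process term, and second-order remainder for $\psi_n$ and $\tau_n$ (the paper packages the relevant bounds into two lemmas, with the Duhamel equation used exactly as you anticipate), and then apply the delta method---the paper carries out the square-root linearization by explicit algebra rather than invoking the delta method abstractly, but the content is identical. One small slip: the remainder for $\psi_n$ involves only $S$ and $G$ (not $\pi$), so $r_{n,1}$ does not enter its bound; otherwise your rate bookkeeping matches the paper's.
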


Theorem~\ref{thm:asy_linear} provides  conditions under which $\theta_{n,l}(t,v)$ and $\theta_{n,u}(t,v)$ are (uniformly) asymptotically linear. Condition~\ref{itm:asy} requires that the rates of convergence of $\pi_{n,k}$, $S_{n,k}$, and $G_{n,k}$ are faster than $n^{-1/4}$, and condition~\ref{itm:asy_unif} requires a stronger type of convergence of $S_{n,k}$ for uniform asymptotic linearity. These rates of convergence can be achieved under correctly specified Cox proportional hazard models for the event and censoring survival functions and a correctly-specified logistic regression models for the propensity score. These are reasonable choices in small samples. However, in larger samples, the risk of model misspecification can be reduced by using nonparametric and semiparametric methods. Ensemble estimators can then be used to choose weights for a set of candidate parametric, semiparametric, and nonparametric methods \citep{ishwaran2004relative, hothorn2006survival, van2007super, westling2023inference}.

\subsection{Pointwise inference}\label{sec:inference}
Asymptotic linearity of the effect bound estimators implies that $n^{1/2} [\theta_{n,l}(t,v) - \theta_{P,l}(t,v)]$ and $n^{1/2} [\theta_{n,u}(t,v) - \theta_{P,u}(t,v)]$ converge jointly in distribution to a mean-zero bivariate normal distribution with variances $\sigma_{P,l,t,v}^2 := P (D^\ast_{P,l, t,v})^2$ and $\sigma_{P,u,t,v}^2 := P (D^\ast_{P,u, t,v})^2$, respectively, and covariance $\Sigma_{P, ul, t, v} := P (D^\ast_{P,u, t,v}D^\ast_{P,l, t,v})$. We define the cross-fitted variance estimator $\sigma_{n,l,t,v}^2 := \frac{1}{n} \sum_{k=1}^K \sum_{i \in \mathcal{V}_{n, k}}[D^\ast_{n,l,t,v}\left(O_i\right)]^2$, and we analogously define $\sigma_{n,u,t,v}^2$ and $\Sigma_{n, ul, t, v}$. We then define an asymptotic $(1-\alpha)$-level Wald-type confidence interval as
\begin{align}
    \left[\ell_n(t,v),\; u_n(t,v)\right] = \left[\theta_{n,l}(t,v) - n^{-1/2}c_{n,t, v,\alpha}, \; \theta_{n,u}(t,v) + n^{-1/2}c_{n,t, v,\alpha}\right]\,,
\end{align}
where $c_{n, t, v, \alpha}$ is such that $P(Z_1 \leq c_{n, t, v, \alpha}, Z_2 \geq -c_{n, t, v, \alpha}) =(1-\alpha)$, where $(Z_1, Z_2)$ follow a mean-zero bivariate normal distribution with cross-fitted estimated covariance as above.
Under the conditions of Theorem~\ref{thm:asy_linear}, $P( \ell_n(t,v) \leq \theta_{P,l}(t,v), \theta_{P,u}(t,v) \leq u_n(t,v))$ converges to $1-\alpha$. Therefore, by Proposition~\ref{prop:par_id}, if the product of the sensitivity parameters is at most $v$, i.e., $s_{c,T}(t) s_{c,A}/(1-s_{c,A}) \leq v$, then $P(\ell_n(t,v) \leq \theta_c(t) \leq  u_n(t,v))$ converges to at least $1-\alpha$. Since $\theta_{P,l}(t) \in [-1,1]$ and $\theta_{P,u}(t) \in [-1,1]$, we can alternatively construct confidence intervals that respect these bounds using a log transformation \citep{anderson1982approximate}---see Appendix~\ref{app:transformation} of Supplementary Material for details.  We also note that the interval $\left[\theta_{n,l}(t,v) - z_{1-\alpha/2}n^{-1/2}\sigma_{n,l,t,v}, \; \theta_{n,u}(t,v) + z_{1-\alpha/2}n^{-1/2}\sigma_{n,u,t,v}\right]$, which does not make use of the covariance between the upper and lower effect bound estimators, is asymptotically conservative. Here, $z_p$ is the $p$th quantile of a standard normal distribution.

\subsection{Uniform inference}\label{sec:uniform}
We now propose uniform confidence bands and hypothesis tests for the causal survival difference $\theta_c$ under unobserved confounding bounded by $v$. Since $\lim_{t \rightarrow 0} \psi_{P}(t)=\lim _{t \rightarrow t^{+}} \psi_{P}(t)=0$ for $t^{+}=\inf \{t > 0:\psi_{P}(t)=0\}$, we note that the uniform asymptotic linearity of $\theta_{n,l}(t,v)$ and $\theta_{n,u}(t,v)$ are invalid near $t=0$ and $t=t^{+}$. Therefore, we construct confidence bands and uniform tests on intervals of the form $[t_0,t_1]$ for $t_0>0$ and $t_1<t^{+}$. We define $\d{G}_{n,l,v}$ and $\d{G}_{n,u,v}$ as the processes $\{n^{1/2} [\theta_{n,l}(t,v) - \theta_{P,l}(t,v)]: t \in [t_0,t_1]\}$ and $\{n^{1/2} [\theta_{n,u}(t,v) - \theta_{P,u}(t,v)]: t \in [t_0,t_1]\}$, respectively. Uniform asymptotic linearity of the effect bound estimators implies that $\d{G}_{n,l,v}$ and $\d{G}_{n,u,v}$ converge jointly weakly to mean-zero correlated Gaussian processes $\xi_{l,v}$ and $\xi_{u,v}$ with $\n{Cov}(\xi_{l,v}(r), \xi_{l,v}(s)) = P(D^\ast_{P, l, r,v}D^\ast_{P, l, s,v})$, $\n{Cov}(\xi_{u,v}(r), \xi_{u,v}(s)) = P(D^\ast_{P, u, r,v}D^\ast_{P, u, s,v})$, and $\n{Cov}(\xi_{l,v}(r), \xi_{u,v}(s)) = P(D^\ast_{P, l, r,v}D^\ast_{P, u, s,v})$.  
A fixed-width asymptotic $(1-\alpha)$-level uniform confidence band is then given by
\begin{align}
    \left[\tilde\ell_n(t,v),\;\tilde u_n(t,v)\right] = \left[\theta_{n,l}(t,v) - n^{-1/2} q_{n,v,\alpha}, \; \theta_{n,u}(t,v) + n^{-1/2}q_{n,v,\alpha}\right]\,,
\end{align}
where $q_{n,v,\alpha}$ is such that $P(\sup_{t \in [t_0,t_1]} \xi_{n,l,t} \leq q_{n,v,\alpha}, \, \sup_{t \in [t_0,t_1]} \xi_{n,u,t} \geq -q_{n,v,\alpha}) =(1-\alpha)$ for $(\xi_{n,l,t}, \xi_{n,u,t})$ correlated mean-zero Gaussian processes with cross-fitted estimated covariances. By Proposition~\ref{prop:par_id}, if $s_{c,T}(t) s_{c,A}/(1-s_{c,A}) \leq v$ for all $t \in [t_0,t_1]$, then $\tilde\ell_n(t,v) \leq \theta_{c}(t) \leq \tilde{u}_n(t,v)$ \emph{for every} $t \in [t_0,t_1]$ with probability converging to $1-\alpha$. We can alternatively construct a variable-width confidence band that respects the bounds of $\theta_c$ using a log transformation and scaling \citep{anderson1982approximate,westling2023inference}---see Appendix~\ref{app:transformation} of Supplementary Material for details.

We can also use uniform asymptotic linearity to test the null hypothesis that $\theta_c(t) = \theta_0$ for all $t \in [t_0,t_1]$ and a fixed $\theta_0 \in [-1,1]$ under unobserved confounding bounded by $v$. Most often, researchers are interested in testing whether the effect bounds include the null effect, so that $\theta_0=0$.  By Proposition~\ref{prop:par_id}, this is equivalent to testing 
\begin{equation} \label{eq:uniform_testing}
 \begin{aligned}
    H_0: & \, \theta_{P,l}(t,v) \leq \theta_0 \leq \theta_{P,u}(t,v ) \; \text{ for all } t \in [t_0,t_1] \quad \text{ vs }  \\
    H_A: & \, \theta_{P,l}(t,v)>\theta_0 \text{ or } \theta_{P,u}(t,v)<\theta_0 \; \text{ for some } t \in [t_0,t_1] \,.
\end{aligned}
\end{equation}
This test reduces to assessing whether the uniform confidence band derived above contains a flat line; see Appendix~\ref{app:transformation} for details.

\section{Sensitivity analysis}\label{sec:senspar}
In this section, we use the theory and methods presented so far to conduct formal causal sensitivity analysis. We break down this process into two steps: (1) determine the minimum values of unobserved confounding needed to reverse the causal conclusion; (2) assess the plausibility of these minimum values. In Section \ref{sec:senspar1}, we use \textit{robustness values} \citep{cinelli2020making,chernozhukov2022long} to address step (1). In Section \ref{sec:senspar2}, we use \textit{benchmarking} \citep{cinelli2020making,chernozhukov2022long} to address step (2). 

\subsection{Robustness values}\label{sec:senspar1}
For a fixed effect size $\theta_0 \in [-1,1]$, which is most often $\theta_0 = 0$, we define the \textit{robustness value} at time $t$ as 
\begin{equation}\label{eq:rv}
    \mathrm{RV}_P(t, \theta_0):=\inf \left\{q \in[0,1): \theta_{P,l}(t, q^2/[1-q]) \leq \theta_0 \leq \theta_{P, u}(t,q^2/[1-q])\right\}\,.
\end{equation}
In words, RV$_P(t, \theta_0)$ is the smallest amount of unobserved confounding, i.e., the smallest $q$ with $s_{c,T}(t) = q$ and $s_{c,A} = q$, that makes $\theta_0$ contained within the effect bounds. Rearranging, $\theta_{P,l}(t, q^2/[1-q]) \leq \theta_0 \leq \theta_{P, u}(t,q^2/[1-q])$ if and only if $[\theta_P(t) - \theta_0]^2 / [\psi_P(t) \tau_P] \leq q^2 / [1-q]$. Hence, RV$_P(t, \theta_0)$ is given in closed form as the positive solution to the quadratic equation $q\mapsto q^2+\lambda_Pq-\lambda_P = 0$, where   $\lambda_P=[\theta_{P}(t) - \theta_0]^2/[\psi_P(t)\tau_P]$.  An asymptotically linear estimator of RV$_P(t, \theta_0)$ can thus be obtained by replacing $\lambda_P$ with $\lambda_n:=[\theta_{n}(t) - \theta_0]^2/[\psi_n^+(t)\tau_n^+]$ in the quadratic formula.

To account for uncertainty in estimation of $\theta_P$, $\tau_P$, and $\psi_P$, we define the \textit{minimum influential robustness value}, $\mathrm{MIRV}_{n,\alpha}(t,\theta_0)$, at time $t$ as 
\begin{equation}
\inf \left\{q \in[0,1): \text{fail to reject } H_0 : \theta_{P,l}(t,q^2/[1-q]) \leq \theta_0 \leq \theta_{P, u}(t,q^2/[1-q]) \text{ at level } \alpha \right\}\,.
\end{equation}
Thus defined, MIRV$_{n,\alpha}(t,\theta_0)$ represents the threshold of unobserved confounding at which one of the effect bounds shifts from statistically significant to statistically insignificant. Intuitively, a robustness value close to 0 suggests that a small amount of unobserved confounding could reverse the causal conclusion, indicating that the evidence of a causal effect is sensitive to unobserved confounding. By contrast, a robustness value close to 1 implies that the causal effect estimate can only be explained away by strong unobserved confounders. However, assessing the plausibility of these robustness values remains challenging because ``close" is an ambiguous term that depends on the setting, as we will discuss in Section \ref{sec:senspar2}. We also note that the null hypothesis $H_0: \theta_{P,l}(t,q^2/[1-q]) \leq \theta_0 \leq \theta_{P, u}(t,q^2/[1-q])$ is equivalent to $H_0: \mathrm{RV}_P(t, \theta_0) \leq q $ so MIRV$_{n,\alpha}(t,\theta_0)$ can also be interpreted as a $(1-\alpha)$-level lower confidence limit for RV$_P(t, \theta_0)$.

Finally, to summarize the robustness of the causal conclusion uniformly over $t \in [t_0,t_1]$, we define the \textit{uniform minimum influential robustness value}, $\n{UMIRV}_{n,\alpha}(\theta_0)$, as the smallest $q$ at  which we fail to reject the null hypothesis that $\theta_{P,l}(t,q^2/[1-q]) \leq \theta_0 \leq \theta_{P,u}(t,q^2/[1-q]) \text{ for all } t \in [t_0,t_1]$ at level $\alpha$. This provides a single metric that summarizes the overall robustness of the survival difference to unobserved confounding over a time range. We note that robustness values and minimum influential robustness values can also be defined allowing $s_{c,T}(t)$ and $s_{c,A}$ to vary independently or when one of $s_{c,T}(t)$ or $s_{c,A}$ is known.

\subsection{Benchmarking}\label{sec:senspar2}
To determine whether a robustness value is reasonable or to provide a justifiable range for sensitivity parameters is challenging. The plausibility of the robustness value depends on the strength of unobserved confounders, about which we may not have external knowledge. In addition, even if some knowledge about unobserved confounding is available from prior studies or external data, incorporating data from various models for relative comparison may be difficult. A natural way to gain some insight into the plausibility of sensitivity parameters is to use observed confounding across a subset of the observed covariates, which is known as \textit{benchmarking} \citep{cinelli2020making,chernozhukov2022long} or \textit{calibration} \citep{mcclean2024calibrated}. 

We now provide details about the application of benchmarking to our setting. For any  $R \subseteq\{1,2 \ldots, p\}$, we define $W_R$ as the subvector of $W$ with indices in $R$, and $W_{-R}$ as the subvector excluding indices in $R$. The observed confounding by $W_{R}$ is then defined as
\begin{equation} \label{eq:obs_confounding}
    \begin{aligned}
     s_{P,T}(t,R)  &:= \frac{\n{Var}(g_{P,t}) - \n{Var}(g_{P,-R,t})}{\n{Var}(I(T>t)) -  \n{Var}(g_{P,t})} =\frac{E[g_{P,t}-g_{P,-R,t}]^2}{\psi_P(t)}  \quad \text{and}  \\
     s_{P,A}(R) &:=   1 - \frac{E\alpha_{P,-R}^2}{E\alpha_{P}^2}\,,
\end{aligned}
\end{equation}
where $\alpha_{P,-R} := \frac{a}{\pi_P(w_{-R})} - \frac{1-a}{1-\pi_P(w_{-R})}$ and $g_{P,-R,t}:=S(t \mid A,W_{-R})$. We use $s_{P,T}(t, R)$ and $s_{P,A}(R)$ as model-agnostic benchmarks for the sensitivity parameters $s_{c,T}(t)$ and $s_{c,A}$.  Here, $\n{Var}(I(T>t)) -  \n{Var}(g_{P,t})$ rather than $\n{Var}(I(T>t)) -  \n{Var}(g_{P,-R,t})$ appears in the denominator of $s_{P,T}(t, R)$ in order to improve comparability with $s_{c,T}(t)$. We then define the \textit{observed confounding by $W
_R$} as $s_P(t, R):=s_{P,T}(t, R) s_{P,A}(R) / [1-2s_{P,A}(R)]$. 

Since $s_{P,T}(t, R)$ and $s_{P,A}$ are functions of the observed-data distribution, they can be estimated using the observed data. Cross-fitted one-step estimator can be constructed following the methods in Section \ref{sec:estimation}. We suggest using plug-in estimators for simplicity and to ensure the estimators are non-negative. A plug-in estimator for $E[g_{P,t}-g_{P,-R,t}]^2$ is given by $\frac{1}{n} \sum_{k=1}^K \sum_{i \in \mathcal{V}_{n, k}}  [S_{n,k}(t) - S_{n,k,-R}(t)]^2$, where $ S_{n,k,-R}$ is a cross-fitted estimator of the conditional survival functions of $T$ given $A$ and $W_{R}$. A plug-in estimator for $E[\alpha_{P}-\alpha_{P,-R}]^2$ can be obtained analogously. These two estimators can then be used to form an estimator $s_{n}(t, R)$ of $s_{P}(t,R)$.

Finally, we explain how to assess the plausibility of the robustness value using $s_{n}(t, R)$. First, if there is a specific set of observed covariates $R$ that is expected to possess a similar level confounding as unobserved confounders, then we compare $r_n^2/(1-r_n)$ with $s_{n}(t,R)$, where $r_n$ can be RV$_n(t, \theta_0)$ or MIRV$_{n,\alpha}(t,\theta_0)$. If $s_{n}(t, R) < \n{RV}_{n}(t, \theta_0)^2 / [ 1 - \n{RV}_{n}(t, \theta_0)]$, then unobserved confounding of the same strength as the covariates in $R$ would result in $\theta_0$ falling outside the effect bounds at time $t$. If $s_{n}(t,R) < \n{MIRV}_{n,\alpha}(t,\theta_0)^2 / [1 -\n{MIRV}_{n,\alpha}(t,\theta_0)]$, then unobserved confounding of the same strength as the covariates in $R$ would still result in rejecting $H_0: \theta_c(t) = \theta_0$. Second, if knowledge about a specific set of observed covariates is unavailable, then one possibility is to define $d_P(t)$ as the maximal value of $d \in \{1,\dotsc, p\}$ such that robustness value is greater than or equal to the mean of $s_{P}(t,R)$ over all subsets $R$ of size $|R| = d$. If the number of subsets of size $d$ is too large to compute $s_{n}(t,R)$ for all $R$ with $|R| = d$, we can estimate the average using a random selection of subsets \citep{bonvini2022sensitivity}. We refer to the average observed confounding across a set of $d$ variables as the \textit{average leave-d-out} confounding. We illustrate the process of conducting a sensitivity analysis based on the robustness value and benchmarking in Section \ref{sec:data}.

We can also benchmark the correlation $\rho_c(t)$ using observed data in a similar way to find a plausible range of values for $\rho_c(t)$ in the effect bounds. The observed correlation for $R$ is given by
$$\rho_{P,-R}(t) := \n{Cor}\left(g_{P,t}-g_{P,-R,t}, \alpha_P-\alpha_{P,-R}\right) = \frac{\theta_{P}(t) - \theta_{P,-R}(t)}{\sqrt{E\left(g_{P,t}-g_{P,-R,t}\right)^2}  \sqrt{E\left(\alpha_P-\alpha_{P,-R}\right)^2}},$$
where $\theta_{P,-R}(t)$ is the observed data effect conditioning on $W_{-R}$ \citep{chernozhukov2022long}. The plug-in estimators for $E(g_{P,t}-g_{P,-R,t})^2$ and $E(\alpha_{P}-\alpha_{P,-R})^2$ and the cross-fitted one-step estimators for $\theta_{P,-R}(t)$ and $\theta_P(t)$ can be used to form an estimator $\rho_{n,-R}(t)$ of $\rho_{P,-R}(t)$. A value of $\rho_{n,-R}(t)$ smaller than 1 yields tighter effect bounds.

\section{Simulation studies}\label{sec:simulation}
We conducted a simulation study to examine the finite-sample properties of our methods. The data simulation process contained multiple continuous unobserved confounders correlated with observed confounders to illustrate the flexibility of our approach. Specifically, we generated independent unobserved confounders $U_1\sim \text{Uniform}(0,1)$ and $U_2 \sim \text{Uniform}(-2,2)$. Given $(U_1, U_2)$, we then simulated $W_1 \sim \text{Beta}(2U_1, 1)$ and $W_2 \sim \text{Uniform}(0,1)$. Given $U$ and $W$, we simulated $A$ from a Bernoulli distribution with probability $\pi_P(U,W):=\text{expit}(0.2-0.2W_1+0.1W_2-0.55U_1-0.5U_2)$. Given $A$ and $W$, we simulated the censoring time $C$ from an exponential distribution with rate $\lambda_{P,C}(A,W) :=\text{exp}(-0.5-0.15A-0.3W_1+0.1W_2)$. Given $A$, $U$ and $W$, we simulated the event time $T$ from an exponential distribution with rate $\lambda_{P,T}(A, U, W) := \text{exp}(0.15-0.25A-0.1\sqrt{W_1}-0.2W_2+0.5\sqrt{U_1}+1.75\text{exp}\{-3+U_2/2\})$. The average censoring rate at time $t=2$ was $E[P(C \leq 2\mid A,W)]=0.6$. We determined the true sensitivity parameters $s_{c,T}(t)$ and $s_{c,A}$ using numerical integration via the \texttt{cubature} package in \texttt{R} \citep{R-cubature}.

For each sample size $n \in \{500, 1000, 2500, 5000\}$, we simulated 1000 datasets using the process described above. For each dataset, we estimated the bounds $\theta_{n,l}(t)$ and $\theta_{n,u}(t)$ for $t \in \{0.5, 1, 1.5, 2\}$, using our proposed methods implemented in \texttt{R}. We estimated the conditional survival functions using generalized additive Cox regression models \citep{hastie1990generalized,klein2003survival} and the propensity score using a generalized additive logistic regression model \citep{hastie1990generalized}.

We now discuss the results of the numerical study. The first row of Figure~\ref{fig:bias} shows $\sqrt{n}$ times the bias of $\theta_{n,l}(t)$ and $\theta_{n,u}(t)$ for $t \in \{0.5, 1, 1.5, 2\}$ as a function of $n$. In general, the empirical bias was not significantly different from zero accounting for Monte Carlo error. The bias was slightly above zero for large $n$ at $t=0.5$. These empirical results illustrate that the bias of the proposed estimators tends to zero faster than $n^{-1/2}$. The second row of Figure~\ref{fig:bias} shows $n$ times the MSE of $\theta_{n,l}(t)$ and $\theta_{n,u}(t)$. The MSE appears to be proportional to $n$ as $n$ increases, consistent with the theoretical $n^{-1/2}$ rates of convergence of the estimators. The $n^{-1/2}$ rates of convergence of $\theta_{n,l}$ and $\theta_{n,u}$ demonstrate a primary advantage of EIF-based estimators: they can still achieve a rate of convergence of $n^{-1/2}$, even if the nuisances estimators converge slower than $n^{-1/2}$. 

\begin{figure}[h]
    \centering
    \includegraphics[width=\textwidth]{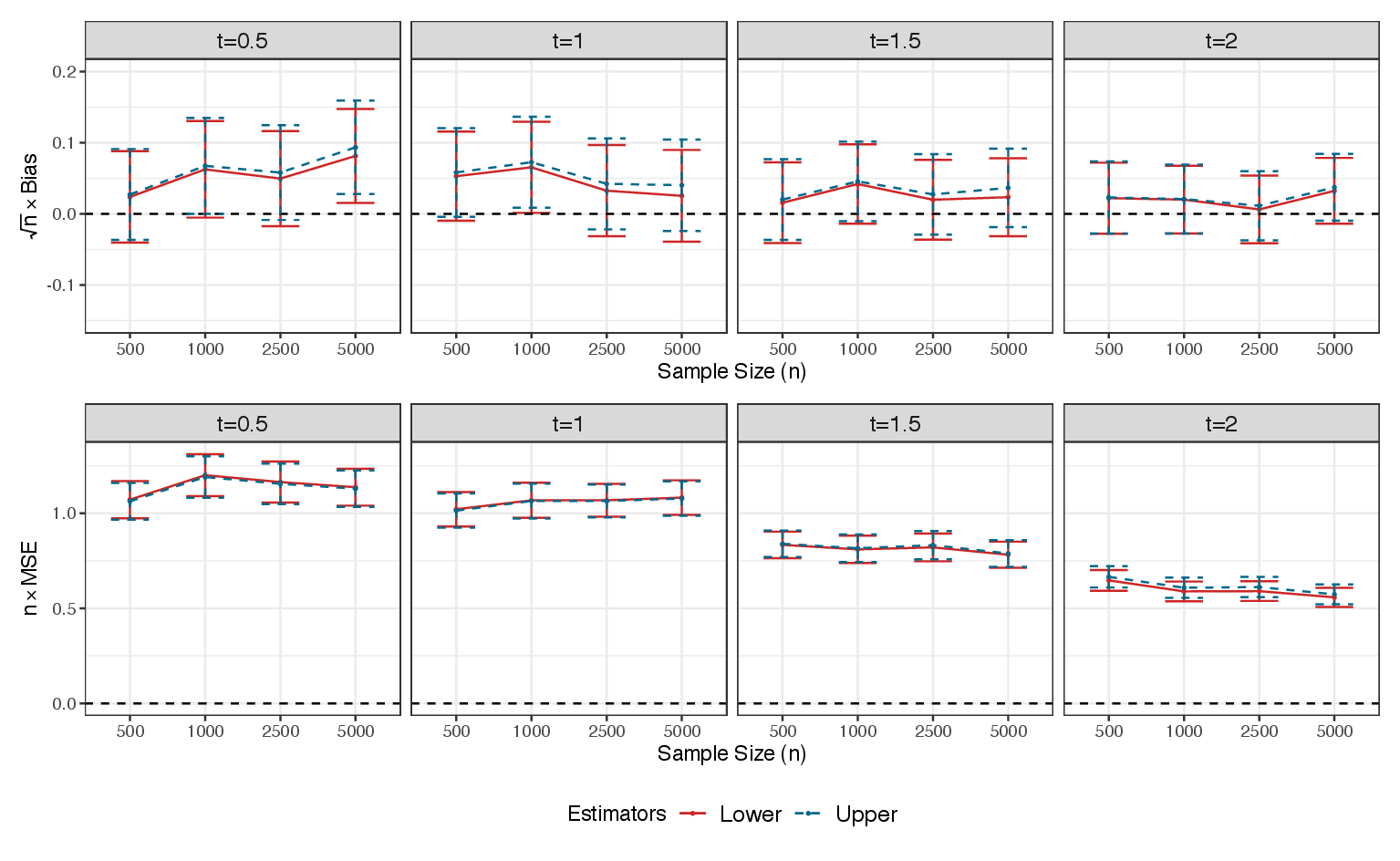}
    \caption{\label{fig:bias} Bias scaled by $\sqrt{n}$ (top) and MSE scaled by $n$ (bottom) of the estimators for the lower (red) and upper (blue) bounds of true causal effects as a function of $n$. Error bars indicate 95\% confidence intervals accounting for Monte Carlo error.}
\end{figure}

The top row of Figure~\ref{fig:ci} displays the empirical coverage of 95\% confidence pointwise intervals for the effect bounds. Both the standard Wald-type pointwise confidence intervals and transformed Wald-type intervals had coverage rates within Monte Carlo error of the nominal level for all time points and sample sizes. The bottom row of Figure~\ref{fig:ci} displays the empirical coverage of 95\% uniform confidence bands for the effect bounds over the interval $[0.1,2]$. The empirical coverage of the equi-width bands was generally within Monte Carlo error of the nominal level. The transformed bands had  empirical coverage rates that were slightly higher than the nominal level. 

\begin{figure}[h]
    \centering
    \includegraphics[width=\textwidth]{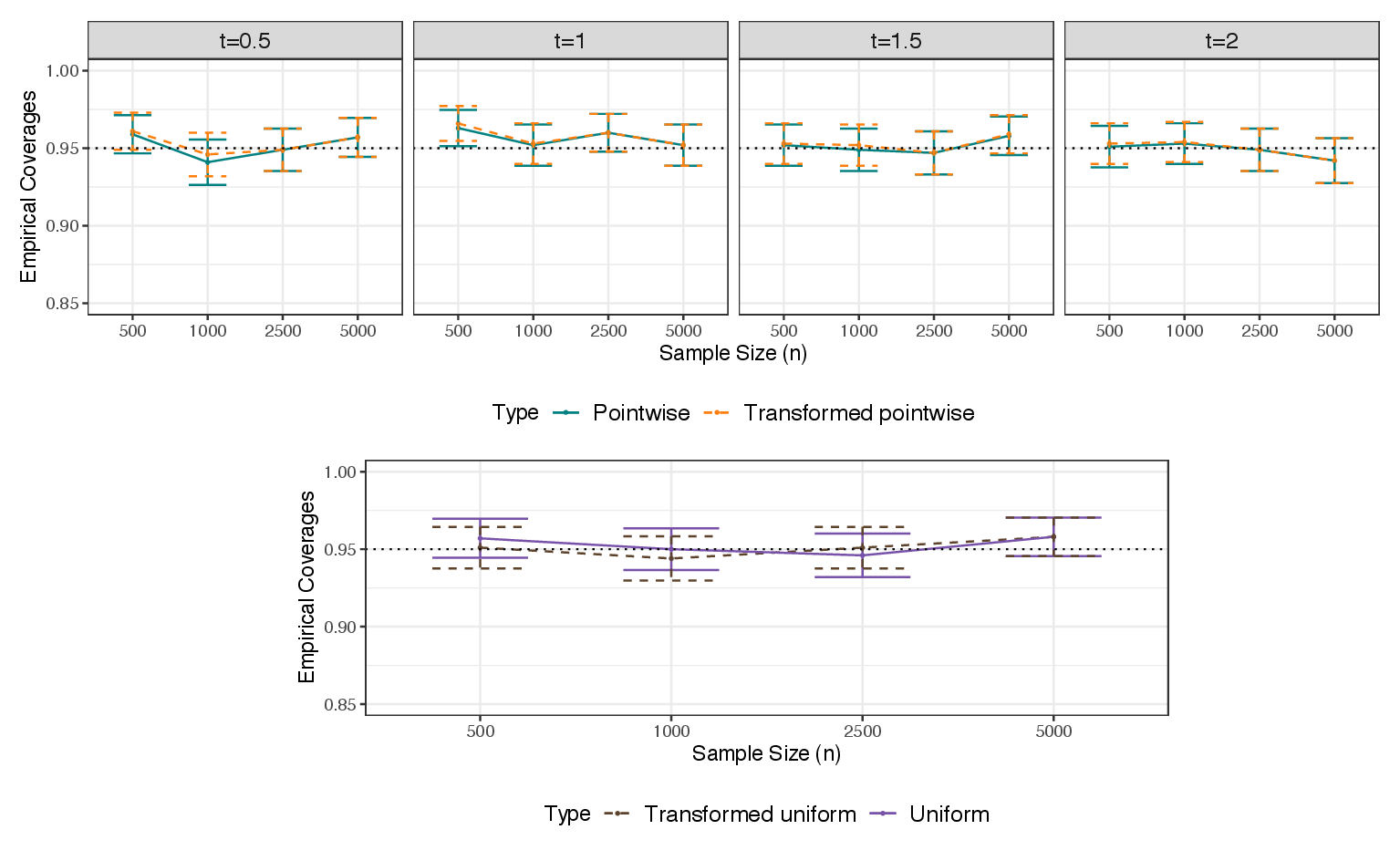}
    \caption{\label{fig:ci} Empirical coverage of 95\% pointwise confidence intervals (top) and uniform confidence bands (bottom) with Monte Carlo error bars as a function of $n$. Dashed lines represent transformed intervals and bands.}
\end{figure}

\section{Sensitivity analysis for the effect of Elective Neck Dissection on mortality}\label{sec:data}
In this section, we use our proposed methods to assess the evidence of a causal effect of elective neck dissection (END) on survival among patients with clinically node-negative, high-grade parotid carcinoma using data from a retrospective cohort study. The data consists of $n=1547$ patients who were diagnosed with clinically node-negative, high-grade parotid cancer between January 1, 2004 and December 31, 2013, and followed until the latter date. The exposure is receipt of END at diagnosis, denoted by $A=1$. The outcome, which is subject to right-censoring, was all-cause mortality up to five years post-diagnosis. Observed baseline confounders include age, sex, race, surgery status, tumor stage, histology, comorbidity, and payor, as well as the average income, education, county of residence, and treatment facility type.

We first estimated the effect bounds for each month during the five years following diagnosis under different levels of unobserved confounding  and constructed 95\% transformed pointwise intervals and uniform confidence bands. Next, we calculated RV$_n(t, 0)$ and MIRV$_{n,.05}(t, 0)$ for $t$ equal to one year post-diagnosis, as well as UMIRV$_{n,.05}(0)$ over the one-year post-diagnosis period. Finally, we assessed the plausibility of these robustness values by benchmarking against observed confounding. We used SuperLearner \citep{van2007super} to estimate the propensity score with the same candidate library as \citet{westling2023inference}: generalized linear models, generalized additive models, multivariate adaptive regression splines, random forests, and extreme gradient boosting. We used survSuperLearner \citep{westling2023inference} with a library consisting of the treatment group-specific Kaplan-Meier estimators, parametric survival models and generalized additive Cox proportional hazards models to estimate the conditional survival and censoring functions.

\begin{figure}[h!]
    \centering
    \includegraphics[width=\textwidth]{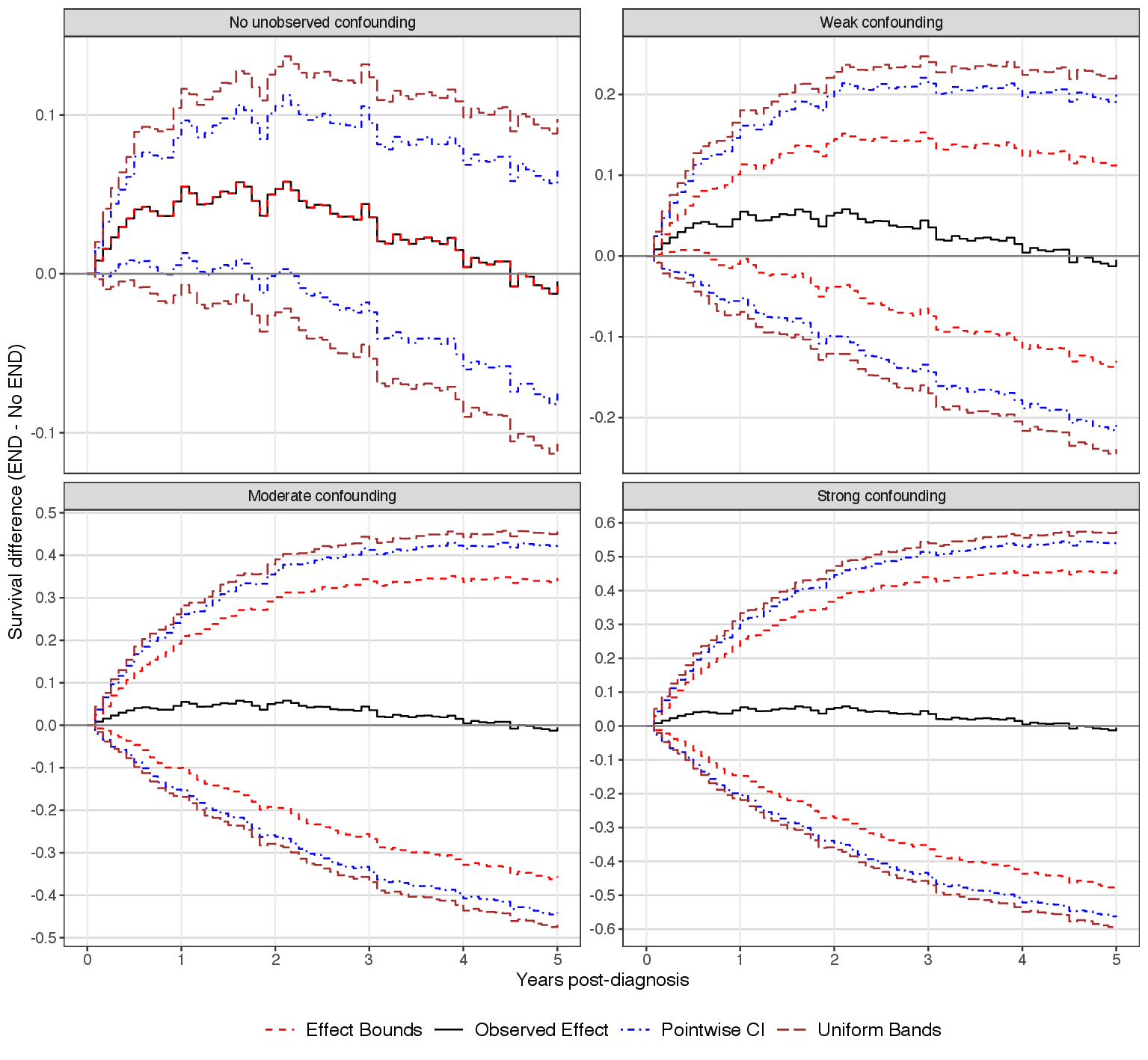}
    \caption{\label{fig:estEND} Estimated observed survival difference, estimated effect bounds, 95\% pointwise confidence intervals for the bounds, and 95\% uniform confidence bands for the bounds under different levels of unobserved confounding. Note the different y-axis scales in the four panels.}
\end{figure}

The top left panel of Figure~\ref{fig:estEND} displays the estimated counterfactual survival difference under no unobserved confounding along with 95\% pointwise confidence intervals and uniform confidence bands. We estimate the survival difference to be 5.5\% (95\% CI: 1.3\%--9.7\%) at one year post-diagnosis. The $p$-value of the test of the null hypothesis that $\theta_P(t)=0$ for all $t \in [1/12,1]$ was 0.015. Therefore, if there are no unobserved confounders then there is statistically significant evidence of a positive causal effect of END on short-term survival. The other three panels of Figure~\ref{fig:estEND} display the estimated bounds under weak, moderate and strong unobserved confounding. These unobserved confounding levels were determined by the average confounding when dropping 3, 8, and 13 of the 16 baseline confounders, as described in Section~\ref{sec:senspar2}.   Under all three levels of unobserved confounding, the pointwise 95\% confidence interval included the null effect for all time points between $1/12$ and 1 year post-diagnosis. This result suggests that even weak unobserved confounding could explain away the positive observed effect during the early post-diagnosis period.

Next, we present the results of the sensitivity analysis using the robustness values. We estimate the robustness value at one year post-diagnosis to be RV$_n(1, 0) = 0.032$, which suggests that to have the effect bounds include the null effect at $t=1$, it would have to hold that $s_{c,T}(1)s_{c,A}/(1-s_{c,A}) \geq (0.032)^2/(1-0.032)=1.06 \times 10^{-3}$. The minimum influential robustness value is MIRV$_{n,.05}(1, 0) = 0.008$, which indicates that to shift the statistically significant lower effect bound to statistically insignificant, the corresponding confounding metrics would need to be at least $6.45 \times 10^{-5}$. The uniform minimum influential robustness value over $[1/12, 1]$ is URV$_{n,.05}(0) = 0.006$, which indicates that to shift the statistically significant uniform effect bound to statistically insignificant, the corresponding confounding metrics would need to be at least $3.62 \times 10^{-5}$ for each $t \in [1/12, 1]$. 

We now use benchmarking against the observed covariates to assess the plausibility of the estimated robustness values as described in Section~\ref{sec:senspar2}. First, we use the \textit{leave-one-out} approach where we compare the robustness values with the observed confounding by each individual observed covariate $W_j$ \citep{lu2023flexible}. We find that \textit{surgery}, a categorical variable indicating whether the patient received surgery to remove the tumor, chemotherapy, and/or radiation therapy,  is the only covariate whose estimated confounding level is larger than the estimated robustness value at time $t = 1$ year. However, every observed covariate has a larger estimated confounding level than the MIRV at $t = 1$, suggesting that unobserved confounding as strong as any of the individual observed covariates could make the effect at $t = 1$ statistically insignificant. Similarly, unobserved confounding as strong as any of the individual observed covariates could make the uniform test over $t \in [1/12,1]$ statistically insignificant.  

\begin{figure}[h!]
    \centering
    \includegraphics[width=\textwidth]{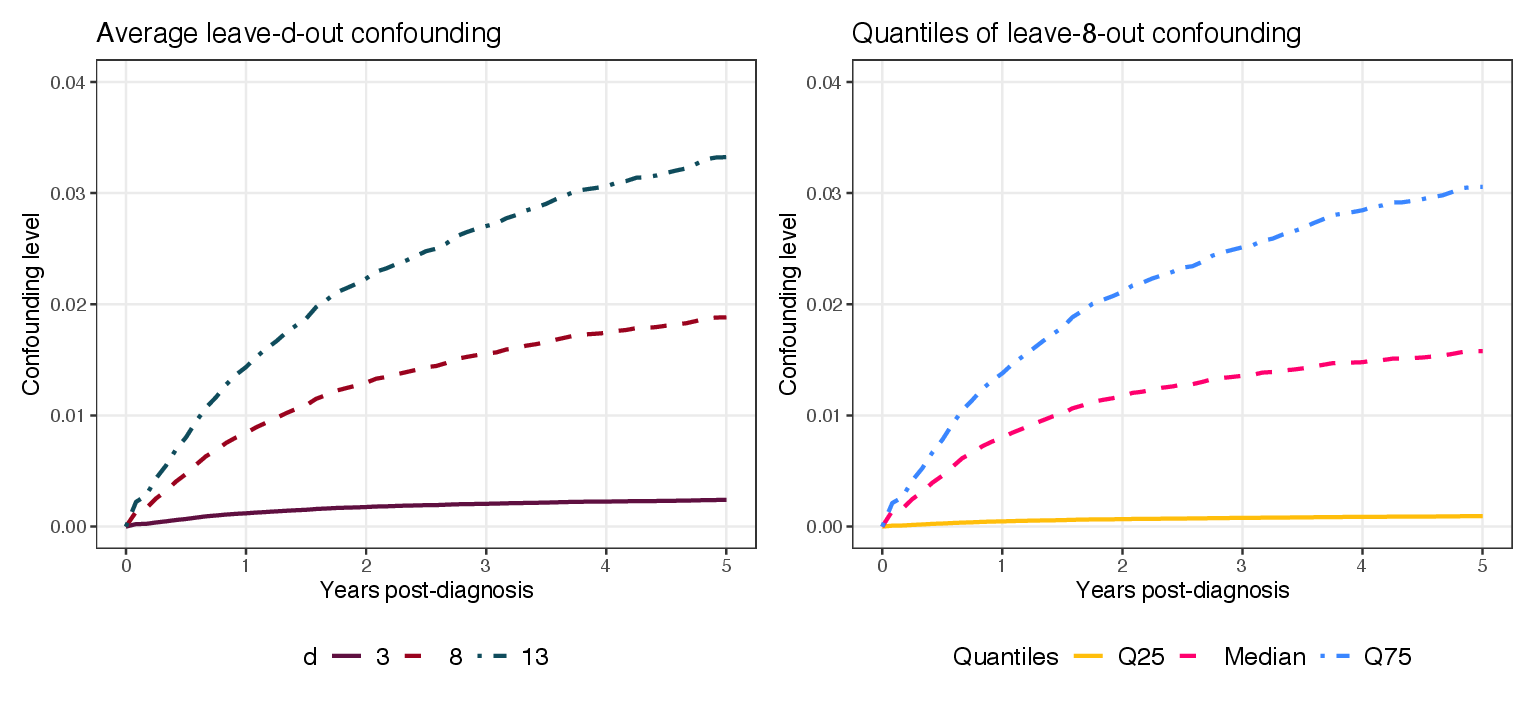}
    \caption{\label{fig:ENDsenspar2} Left: Average confounding levels when dropping $d$ covariates over time. Right: quartiles of the distribution of the confounding level when omitting $d=8$ covariates over time.}
\end{figure}

As discussed in Section~\ref{sec:senspar2}, we can alternatively compare the robustness values to the average estimated confounding when dropping $d$ covariates at a time. The left panel of Figure~\ref{fig:ENDsenspar2} displays the average \textit{leave-d-out} confounding for $d \in \{3,  8, 13\}$ over time. We find that the estimated robustness value at $t = 1$ fell between the average \textit{leave-2-out} confounding and \textit{leave-3-out} confounding. Therefore, unobserved confounding as strong as three or more randomly selected observed covariates could make the effect zero. Finally, the right panel of Figure~\ref{fig:ENDsenspar2} displays quartiles of the distribution of the confounding level when omitting $d=8$ covariates over time. We find that the estimated robustness value at $t = 1$ is at $40$th percentile of the \textit{leave-8-out} confounding. Given these benchmarking results, we conclude that there would need to be substantial unobserved confounding to change the sign of the estimated effect at $t = 1$, but only a small amount of unobserved confounding to make the estimated effect at $t=1$ statistically insignificant. We note that the empirical range for $|\rho_{n,-R}(t)|$ was $[0.31, 0.52]$ for $t \in [1/12,1]$ when $|R| = 8$. Narrower bounds for the causal effect could be obtained under the assumption that $|\rho_c(t)| \leq 0.52$. The minimum influential robustness value needs to be MIRV$_{n,.05}(1, 0) = 0.015$ to turn the effect insignificant at $t=1$. We find that \textit{surgery}, \textit{age} and \textit{high T stage} are the only three covariates whose estimated confounding level are larger than the MIRV at time $t = 1$ year. As the bounds become sharper, the estimated effect are more robust and less likely to be explained away by a small amount of unobserved confounding.

\section{Conclusion}
In this article, we developed a nonparametric sensitivity analysis framework to assess the robustness of causal evidence to unobserved confounding for time-to-event outcomes. We focused on the difference between counterfactual survival curves and restricted mean survival times, which has a clear causal interpretation, unlike hazard ratios, the most commonly used metric in time-to-event settings. We provided estimators of the effect bounds for which valid pointwise and uniform inference can be obtained without requiring correctly specified parametric or semiparametric models for the distribution of the observed data or the structure of unobserved confounding. This flexibility is useful because the nature of the data structure is often unknown in practice. Our proposed methods provide  a practical approach for scientific researchers to understand amount of unobserved confounding needed to change the causal conclusion. In addition, we provided tools for assessing the plausibility of the results based on observed confounding. One assumption we did rely on is that of no unobserved confounders between the outcome and censoring, which can be violated in practice \citep{huang2008regression}. An extension of our framework to assess robustness to dependent censoring may be of interest for future research.

\bibliographystyle{biom}
\bibliography{references}

\clearpage

\begin{adjustwidth}{-.25in}{-.25in}

\begin{appendix}

\begin{center}
\huge{Supplementary Material}
\end{center}

\small

\section{Proof of Theorems}\label{app:estinf}
\begin{proof}[\bfseries{Proof of Proposition~\ref{prop:par_id}}]
We can write
\begin{align*}
E\left[ g_{c,t} \alpha_c\right] &= E\left[ E \left\{ I(T>t)\mid A, W, U \right\} \frac{A}{\pi_c(W, U)}- E\left\{ I(T>t)\mid A, W, U \right\} \frac{1-A}{1-\pi_c(W, U)} \right] \\
&= E\left[ E \left\{ I(T>t)\mid A = 1, W, U \right\} \frac{A}{\pi_c(W, U)}- E\left\{ I(T>t)\mid A = 0, W, U \right\}\frac{1-A}{1-\pi_c(W, U)} \right]\\
& = E\left[ E \left\{ I(T>t)\mid A=1, W, U \right\} - E\left\{ I(T>t)\mid A=0, W, U \right\} \right] \\
&= \theta_c(t)
\end{align*}
and
\begin{align*}
E\left[ g_{P,t} \alpha_P\right] &= E\left[ E \left\{ I(T>t)\mid A, W\right\} \frac{A}{\pi_P(W)}- E\left\{ I(T>t)\mid A, W \right\}\frac{1-A}{1-\pi_P(W)} \right] \\
&= E\left[ E \left\{ I(T>t)\mid A = 1, W \right\} \frac{A}{\pi_P(W)}- E\left\{ I(T>t)\mid A = 0, W \right\}\frac{1-A}{1-\pi_P(W)} \right] \\
&= E\left[ E_P\left\{ I(T>t)\mid A=1, W \right\} - E_P\left\{ I(T>t)\mid A=0, W\right\} \right] \\
&= \theta_P(t).
\end{align*}
We can similarly find $E\left[ g_{P,t} \alpha_c\right]  = E\left[ g_{c,t} \alpha_P\right]= \theta_P(t)$. Therefore,
\[\theta_c(t) - \theta_P(t) =E\left[ \left\{g_{c,t}-g_{P,t}\right\}\left\{ \alpha_c - \alpha_P\right\} \right].\]
Furthermore,  $E \left[g_{c,t} \right] = E \left[ g_{P,t} \right] = P(T >t),$ so $\n{Var}(g_{c,t} - g_{P,t}) = E\left[\left\{g_{c,t}-g_{P,t}\right\}^2\right]$, and $E\left[ \alpha_c\right] = E\left[ \alpha_P \right] = 0$, so $\n{Var}(\alpha_{c} - \alpha_{P}) = E\left[\left\{\alpha_{c}-\alpha_{P}\right\}^2\right]$. We now have
\begin{align*}
    \left[\theta_P(t) - \theta_c(t)\right]^2 &= \left[ \n{Cor}\left(g_{c,t}-g_{P,t}, \: \alpha_c-\alpha_P\right)\right]^2 E\left[\left\{g_{c,t}-g_{P,t}\right\}^2\right]E\left[\left\{\alpha_c - \alpha_P\right\}^2\right] \\
    &= \psi_P(t) \tau_P \left[ \rho_c(t)\right]^2 \frac{E\left[\left\{g_{c,t}-g_{P,t}\right\}^2\right]}{\psi_P(t)} \frac{E\left[\left\{\alpha_c - \alpha_P\right\}^2\right]}{\tau_P}.
\end{align*}
We note that 
\[E\left[ \left\{ I(T > t) - g_{P,t}\right\}^2 \right] =  E[ g_{P, t} \{1 - g_{P,t}\}] = \psi_P(t),\]
so if $\psi_P(t) = 0$, then $ E[ I(T>t) \mid A , W] = I(T > t)$ with probability one, so $\theta_P(t) = \theta_c(t) = 0$. Similarly, if $\tau_P = 0$, then $\pi_P(W) = 0$ or 1 with probability 1, which violates the positivity assumption.

It remains to show that $E\left[\left\{g_{c,t}-g_{P,t}\right\}^2\right] / \psi_P(t) = s_{c,T}(t)$ and $E\left[\left\{\alpha_c - \alpha_P\right\}^2\right] / \tau_P = s_{c,A} / [1-s_{c,A}]$. Let $\omega_t := P(T > t)$. We have
\begin{align*}
     \frac{E\left[\left\{g_{c,t}-g_{P,t}\right\}^2\right]}{\psi_P(t)} &= \frac{E\left[\left\{g_{c,t} - \omega_t + \omega_t - g_{P,t}\right\}^2\right]}{E\left[\left\{g_{P,t}- \omega_t + \omega_t - I(T > t)\right\}^2\right]} \\
     &= \frac{E\left[\left\{ g_{c,t}  - \omega_t\right\}^2\right] + E\left[\left\{g_{P,t} - \omega_t\right\}^2\right] - 2E\left[\left\{g_{c,t} - \omega_t \right\} \left\{ g_{P,t} - \omega_t\right\}\right]  }{E\left[\left\{g_{P,t}- \omega_t\right\}^2\right] + E\left[\left\{I(T > t) - \omega_t\right\}^2\right] - 2 E\left[\left\{g_{P,t}- \omega_t\right\} \left\{ I(T > t) - \omega_t\right\}\right]}.
\end{align*}
Since $E[ g_{c,t}] = E[ g_{P, t}] = E[I(T>t)] = \omega_t$, $E\left[\left\{ g_{c,t}  - \omega_t\right\}^2\right] = \n{Var}(g_{c,t})$, $E\left[\left\{ g_{P,t}  - \omega_t\right\}^2\right] = \n{Var}(g_{P,t})$, and $E\left[\left\{I(T > t) - \omega_t\right\}^2\right] = \n{Var}(I(T>t))$. In addition, by the tower property, $E[ g_{c,t} \mid A, W] = g_{P,t}$, so that $E\left[\left\{g_{c,t} - \omega_t \right\} \left\{ g_{P,t} - \omega_t\right\}\right] = E\left[ \left\{ g_{P,t} - \omega_t\right\}^2\right] = \n{Var}(g_{P,t})$. Similarly, $E\left[\left\{g_{P,t}- \omega_t\right\} \left\{ I(T > t) - \omega_t\right\}\right] = E\left[\left\{g_{P,t}- \omega_t\right\}^2 \right] = \n{Var}(g_{P,t})$. Hence, 
\[ \frac{E\left[\left\{g_{c,t}-g_{P,t}\right\}^2\right]}{\psi_P(t)} = \frac{\n{Var}(g_{c,t}) - \n{Var}(g_{P,t})}{\n{Var}(I(T>t)) -  \n{Var}(g_{P,t})} = s_{c,T}(t) \]
as desired.

We note that $s_{c,A} / [1-s_{c,A}] = \left\{E[ \alpha_c^2] - E[ \alpha_P^2]\right\} / E[ \alpha_P^2]$. In addition, we have
\begin{align*}
    E \left[ \alpha_c \alpha_P \right] &= E\left[ \left\{ \frac{A}{\pi_c(W,U)}- \frac{1-A}{1-\pi_c(W,U)} \right\} \left\{ \frac{A}{\pi_P(W)}- \frac{1-A}{1-\pi_P(W)} \right\}\right] \\
    &= E\left[ \frac{A}{\pi_c(W,U)\pi_P(W)} + \frac{1-A}{\left\{1-\pi_c(W,U)\right\} \left\{1-\pi_P(W)\right\}}\right] \\
    &= E\left[ \frac{1}{\pi_P(W)} + \frac{1}{1-\pi_P(W)}\right] \\
    &= E\left[ \frac{1}{\pi_P(W) \left\{1-\pi_P(W)\right\} }\right] \\
    &= E\left[ \alpha_P^2 \right].
\end{align*}
Thus, we have
\begin{align*}
    \frac{E\left[\left\{\alpha_c - \alpha_P\right\}^2\right]}{\tau_P} &= \frac{E\left[\alpha_c^2\right] + E\left[\alpha_P^2\right] - 2 E\left[\alpha_c \alpha_P\right]}{E\left[ \alpha_P^2 \right]} \\
    &= \frac{E\left[\alpha_c^2\right] - E\left[\alpha_P^2\right]}{E\left[ \alpha_P^2 \right]} \\
    &= \frac{s_{c,A}}{1-s_{c,A}}.
\end{align*}
\end{proof}

\begin{proof}[\bfseries{Proof of Propositions~\ref{prop:eif1} and~\ref{prop:eif2}}] 
Let $\{ P_\epsilon : |\epsilon| \leq \delta\}$ be a differentiable in quadratic mean path with $P_{\epsilon = 0} = P$ and score function $\dot\ell_P \in L_2(P)$ at $\epsilon = 0$. For a distribution $P$ of $(Y, \Delta, A, W)$, we let $Q$ be the marginal distribution of $(A,W)$ as implied by $P$. We have
\begin{align*}
\psi_{P}(t) &= E[I(T>t)-S_P(t \mid A, W)]^2\\
&= E[I(T>t)-2 1(T>t)S_P(t \mid A, W)+S_P(t \mid A, W)^2] \\
&= E[S_P(t \mid A, W)]-2 E[E(I(T>t) \mid A, W) S_P(t \mid A, W)] + E \left[S_P(t \mid A, W)^2\right] \\
&= E[S_P(t \mid A, W)\{1-S_P(t \mid A, W)\}].
\end{align*}
By the product rule, we then have
\begin{align*}
    \left.\frac{\partial}{\partial \epsilon} \psi_\epsilon(t)\right|_{\epsilon=0} &=\left.\frac{\partial}{\partial \epsilon} \int S_\epsilon(t \mid a, w)\left\{1-S_\epsilon(t \mid a, w)\right\} d Q_\epsilon(a, w)\right|_{\epsilon=0} \\
&=\left.\int \frac{\partial}{\partial \epsilon} S_\epsilon(t \mid a, w)\left\{1-S_\epsilon(t \mid a, w)\right\}\right|_{\epsilon=0} \, d Q_P(a, w)\\
&\qquad +\int S_P(t \mid a, w)\left\{1-S_P(t \mid a, w)\right\} \dot\ell_P(a, w) \, d Q_P(a, w) \\
&=\left.\int\left\{1-2 S_P(t \mid a, w)\right\} \frac{\partial}{\partial \epsilon} S_\epsilon(t \mid a, w)\right|_{\epsilon=0} \, d Q_P(a, w)\\
&\qquad +\int S_P(t \mid a, w)\left\{1-S_P(t \mid a, w)\right\} \dot\ell_P(a, w) \, d Q_P(a, w).
\end{align*}
Based on the proof in \cite{westling2023inference}, the first term can be written as
\begin{align*}
    &  \left.\int\left\{1-  2 S_P(t \mid a, w)\right\} \frac{\partial}{\partial \epsilon} S_\epsilon(t \mid a, w)\right|_{\epsilon=0} d Q_P(a, w) \\
    & \quad = E_P\bigg[\{1-S_P(t \mid A, W)\} S_P(t \mid A, W) \bigg\{H_P(t \wedge Y, A, W) - \frac{I(Y \leq t, \Delta=1) S_P(Y- \mid A, W)}{S_P(Y \mid A, W) R_P(Y \mid A, W)}\bigg\} \dot{\ell}_P(Y, \Delta, A, W)\bigg]
\end{align*}
where $H_P(u, a, w):=\int_0^u S_P(u-\mid a, w) / [S_P(u \mid a, w) R_P(u \mid a, w)^2]  \,  F_P(d u \mid a, w)$.  Combining these results and simplifying using $F_P\left(d u \mid a, w\right) / R_P\left(u \mid a, w\right)=\Lambda_P\left(d u \mid a, w\right)$ and $R_P\left(u \mid a, w\right)=S_P\left(u-\mid a, w\right) G_P\left(u \mid a, w\right)$, we find that the uncentered EIF of $\psi_P(t)$ at $P$ is
\begin{align*}
     D_{P,\psi,t}(y, \delta, a, w) 
     &= S_P\left(t \mid a, w\right)\left[1-2S_P\left(t \mid a, w\right)\right]  \left\{ -\frac{I(y \leq t, \delta=1)}{S_P\left(y \mid a, w\right) G_P\left(y \mid a, w\right)} \right. \\
     &\qquad  + \left. \int_P^{t \wedge y} \frac{\Lambda_P\left(d u \mid a, w\right)}{S_P\left(u \mid a, w\right) G_P\left(u \mid a, w\right)} \right\} 
      + S_P\left(t \mid a, w\right) \left[1-S_P\left(t \mid a, w\right)\right].
\end{align*}
Similarly, we have
\begin{align*}
 E[\alpha_P^2]  &= E\left[\left\{\frac{A}{\pi_P(W)}-\frac{1-A}{1-\pi_P(W)}\right\}^2\right] \\
&= E\left[\left\{\frac{A-\pi_P(W)}{\pi_P(W)[1-\pi_P(W)]}\right\}^2\right] \\
&= E\left[E\left\{\left.\left[\frac{A-\pi_P(W)}{\pi_P(W)\{1-\pi_P(W)\}}\right]^2 \right\rvert\, W\right\}\right] \\
&= E\left[\frac{E\left\{[A-\pi_P(W)]^2 \mid W\right\}}{\pi_P(W)^2\{1-\pi_P(W)\}^2} \right] \\
&= E\left[\frac{1}{\pi_P(W)\{1-\pi_P(W)\}}\right].
\end{align*}
By the product rule, we have
\begin{align*}
\left.\frac{\partial}{\partial \epsilon} \tau_\epsilon \right|_{\epsilon=0} & =\left.\frac{\partial}{\partial \epsilon} \int \frac{1}{\pi_\epsilon(w)\left\{1-\pi_\epsilon(w)\right\}} \, d Q_\epsilon(w)\right|_{\epsilon=0} \\
& =\left.\int \frac{\partial}{\partial \epsilon} \frac{1}{\pi_\epsilon(w)\left\{1-\pi_\epsilon(w)\right\}}\right|_{\epsilon=0} \, d Q_P(w)+\int \frac{1}{\pi_P(w)\left\{1-\pi_P(w)\right\}} \dot\ell_P(w) \, d Q_P(w).
\end{align*}
We can write the first term as
\begin{align*}
& \left.\int \frac{2 \pi_P(w)-1}{\left[\pi_P(w)\{1-\pi_P(w)\}\right]^2} \frac{\partial}{\partial \epsilon} \pi_\epsilon(w)\right|_{\epsilon=0}\, d Q_P(w) \\
 & \quad \quad =\int \frac{2 \pi_P(w)-1}{\left[\pi_P(w)\{1-\pi_P(w)\}\right]^2} \int I(a=1) \dot\ell_P(a \mid w) \, P(d a \mid w) \, d Q_P(w) \\
 & \quad \quad =\iint I(a=1) \frac{2 \pi_P(w)-1}{\left[\pi_P(w)\{1-\pi_P(w)\}\right]^2} \dot\ell_P(a \mid w) \, d P(a, w) \\
 & \quad \quad =E_P\left[\frac{A \left\{2 \pi_P(W)-1\right\}}{\left\{\pi_P(w)[1-\pi_P(w)]\right\}^2} \dot\ell_P(A \mid W)\right].
\end{align*}
We note that
\begin{align*}
E_P\left[\left.\frac{A \left\{2 \pi_P(W)-1\right\}}{\left\{\pi_P(w)[1-\pi_P(w)]\right\}^2} \right\rvert\, W=w\right]=\frac{2 \pi_P(w)-1}{\pi_P(w)\left[1-\pi_P(w)\right]^2}.
\end{align*}
Therefore, by properties of score functions and the tower property, we have
\begin{align*}
    & \left.\int \frac{\partial}{\partial \epsilon} \frac{1}{\pi_\epsilon(w)\left\{1-\pi_\epsilon(w)\right\}}\right|_{\epsilon=0} \, d Q_P(w) \\
    &= E_P\left[\left\{\frac{A \left[2 \pi_P(W)-1\right]}{\left[\pi_P(W)\{1-\pi_P(W)\}\right]^2}-\frac{2 \pi_P(W)-1}{\pi_P(W)[1-\pi_P(W)]^2}\right\} \dot\ell_P(A, W)\right].
\end{align*}
Combing these results, we find that the uncentered EIF of $\tau_P$ at $P$ is
\begin{align*}
    D_{P, \tau}(a,w) &=  \frac{a  [2\pi_P(w)-1]}{\left[\pi_P(w)\{1-\pi_P(w)\}\right]^2}- \frac{2 \pi_P(w)-1}{\pi_P(w)[1-\pi_P(w)]^2} + \frac{1}{\pi_P(w)[1-\pi_P(w)]} \\
    &=  \frac{2}{\pi_P(w)[1-\pi_P(w)]} - \frac{[a -\pi_P(w)]^2}{\left[\pi_P(w)\{1-\pi_P(w)\}\right]^2}.
\end{align*}

\end{proof}

We define $D_{\infty, \psi, t}$ as the influence function of $\psi_P(t)$ with nuisance functions $S_P$ and $G_\infty$.
\begin{lemma}\label{lemma:psi_rep}
If~\ref{itm:positivity} holds, then $PD_{\infty, \psi, t}^\ast = 0$ and $D_{n,k, \psi, t}-D_{\infty, \psi, t} = \sum_{j=1}^6 U_{n,k,j,t}$ for functions $U_{n,k,j,t}$ defined in the proof that satisfy the following bounds:
\begin{align*}
    P U_{n, k, 1, t}^2 &\leq \eta^2 E_P\left[\sup _{u \in[0, t]} \left| \frac{S_{n, k}\left(t \mid A, W\right)}{S_{n, k}\left(u \mid A, W\right)}-\frac{S_{P}\left(t \mid A, W\right)}{S_{P}\left(u \mid A, W\right)}\right|^2\right] \\
    P U_{n, k, 2, t}^2 &\leq E_P\left[\sup _{u \in[0, t]}\left|\frac{1}{G_{n, k}\left(u \mid A, W\right)}-\frac{1}{G_{\infty}\left(u \mid A, W\right)}\right|^2\right] \\
    P U_{n, k, 3, t}^2 &\leq (4\eta^2 +1)E_P\left[\sup _{u \in[0, t]}\left|\frac{S_{n, k}\left(t \mid A,  W\right)}{S_{n, k}\left(u \mid A,  W\right)}-\frac{S_{P}\left(t \mid A,  W\right)}{S_{P}\left(u \mid A,  W\right)}\right|^2\right] \\
     P U_{n, k, 4, t}^2 &\leq E_P\left[\sup _{u \in[0, t]}\left|\frac{1}{G_{n, k}\left(u \mid A, W\right)}-\frac{1}{G_{\infty}\left(u \mid A, W\right)}\right|^2\right] \\
     P U_{n, k, 5, t}^2 & \leq E_P\left[\sup _{u \in[0, t]}\left|\frac{1}{G_{n, k}\left(u \mid A, W\right)}-\frac{1}{G_{\infty}\left(u \mid A, W\right)}\right|^2\right] \\
     P U_{n, k, 6, t}^2 & \leq 4\eta^2 E_P\left[\sup _{u \in[0, t]}\left|\frac{S_{n, k}\left(t \mid A,  W\right)}{S_{n, k}\left(u \mid A,  W\right)}-\frac{S_{P}\left(t \mid A,  W\right)}{S_{P}\left(u \mid A,  W\right)}\right|^2\right].
\end{align*}
\end{lemma}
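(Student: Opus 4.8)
To establish Lemma~\ref{lemma:psi_rep} I would treat the two claims separately, in both cases exploiting the fact that $D_{P,\psi,t}$ can be written as $g'\bigl(S_P(t\mid A,W)\bigr)\,\Phi_{P,t}+g\bigl(S_P(t\mid A,W)\bigr)$, where $g(s):=s(1-s)$ (so that $g'(s)=1-2s$) and
\[ \Phi_{P,t}:=S_P(t\mid A,W)\left[-\frac{I(Y\le t,\Delta=1)}{S_P(Y\mid A,W)\,G_P(Y\mid A,W)}+\int_0^{t\wedge Y}\frac{\Lambda_P(du\mid A,W)}{S_P(u\mid A,W)\,G_P(u\mid A,W)}\right] \]
is the ``martingale part.'' Then $D_{n,k,\psi,t}$ is this same expression with $(S_P,G_P,\Lambda_P)$ replaced by $(S_{n,k},G_{n,k},\Lambda_{n,k})$, and $D_{\infty,\psi,t}$ keeps $S_P$ and $\Lambda_P$ but uses $G_\infty$ in place of $G_P$. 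I will use repeatedly the identity $\Lambda(du\mid a,w)/S(u\mid a,w)=d_u\bigl[1/S(u\mid a,w)\bigr]$, valid for any survival function $S$ with cumulative hazard $\Lambda$, together with the fact that for $u\in[0,t]$ the ratio $R_S(u):=S(t\mid A,W)/S(u\mid A,W)$ lies in $[0,1]$ and remains meaningful where $S(t\mid A,W)=0$, so the whole argument can be phrased through $R_S$ rather than through $1/S(u\mid A,W)$ on its own.

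For the first claim, I would rewrite $\Phi_{\infty,t}$, the martingale part evaluated at $S_P$ and $G_\infty$, as $-\int_{(0,t]}\frac{S_P(t\mid A,W)}{S_P(u\mid A,W)\,G_\infty(u\mid A,W)}\,dM_P(u)$, where $M_P(u):=I(Y\le u,\Delta=1)-\int_{(0,u]}I(Y\ge s)\,\Lambda_P(ds\mid A,W)$ is the $(A,W)$-conditional Doob--Meyer martingale of the event counting process (a genuine martingale under the standing conditionally independent censoring assumption). The integrand is $(A,W)$-measurable and left-continuous in $u$, hence predictable, and under \ref{itm:positivity} together with $R_{S_P}\le1$ on $[0,t]$ it is bounded; so the integral has conditional mean zero given $(A,W)$. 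Consequently $E_P[D_{\infty,\psi,t}\mid A,W]=g\bigl(S_P(t\mid A,W)\bigr)$ and $PD_{\infty,\psi,t}=E_P\bigl[S_P(t\mid A,W)\{1-S_P(t\mid A,W)\}\bigr]=\psi_P(t)$, i.e.\ $PD^\ast_{\infty,\psi,t}=0$. The only subtlety is on $\{S_P(t\mid A,W)=0\}$, where $I(T>t)=0$ conditionally almost surely and the ratio formulation keeps every term finite.

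For the second claim, I would first apply the identity $\Lambda(du)/S(u)=d_u[1/S(u)]$ to rewrite $\Phi_{P,t}=-R_{S_P}(Y)\,I(Y\le t,\Delta=1)/G_P(Y\mid A,W)+\int_{(0,t\wedge Y]}dR_{S_P}(u)/G_P(u\mid A,W)$, and similarly for $D_{n,k,\psi,t}$, so that $\Phi$ becomes a linear functional of the ratio process $R_S$ whose remaining nuisance dependence is only through $1/G$; Stieltjes integration by parts can be used to pull all the $R_S$'s out of the integral when convenient. I would then telescope $D_{n,k,\psi,t}-D_{\infty,\psi,t}$ in two stages: (i) replace $R_{S_{n,k}}$ by $R_{S_P}$ throughout, which also replaces the factors $g'(S_{n,k}(t\mid A,W))$ and $g(S_{n,k}(t\mid A,W))$ since $S(t\mid A,W)=R_S(0)$; and (ii) replace $1/G_{n,k}$ by $1/G_\infty$. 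Each resulting piece is a product of a factor bounded by a constant --- obtained from \ref{itm:positivity} (giving $1/G_{n,k},1/G_\infty\le\eta$), from $|g'|\le1$, and from $|R_S|\le1$ on $[0,t]$ --- times either $\sup_{u\in[0,t]}\bigl|R_{S_{n,k}}(u)-R_{S_P}(u)\bigr|$ or $\sup_{u\in[0,t]}\bigl|1/G_{n,k}(u\mid A,W)-1/G_\infty(u\mid A,W)\bigr|$. Grouping the three ratio-type pieces into $U_{n,k,1,t}$, $U_{n,k,3,t}$, $U_{n,k,6,t}$ and the indicator, boundary, and integral $G$-type pieces into $U_{n,k,2,t}$, $U_{n,k,4,t}$, $U_{n,k,5,t}$, squaring, bounding the constant factors pointwise, and taking $E_P$ yields the six stated $L_2(P)$ bounds, with the constants $\eta^2$, $4\eta^2+1$, and $4\eta^2$ emerging from how the bounded factors combine within each group.

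The conceptual content is standard and mirrors the analysis of the estimator of $\theta_P(t)$ in \cite{westling2023inference}: a stochastic integral against the conditional martingale has mean zero, and the influence function's dependence on $S$ and on $G$ can be disentangled. The main obstacle is the bookkeeping in the second step --- performing the integration by parts with the correct treatment of the left-continuity of $G_P$ and the jumps of $S_P$, controlling the boundary contributions at $u=0$ and $u=t\wedge Y$, keeping every manipulation valid on $\{S_P(t\mid a,w)=0\}$ by never splitting the ratios $S(t\mid A,W)/S(u\mid A,W)$, and arranging the pieces so that exactly six terms appear with exactly the claimed constants.
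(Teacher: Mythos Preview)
Your proposal is correct and close in spirit to the paper's proof, but the two parts are handled somewhat differently.

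For the first claim, you argue directly that $\Phi_{\infty,t}$ is a predictable, bounded integral against the $(A,W)$-conditional counting-process martingale and therefore has conditional mean zero. The paper instead invokes the explicit formula $E_P[H_{S,G,t}\mid A,W]=-\int_0^t\frac{S_P(y-)G_P(y)}{S(y)G(y)}\,(\Lambda-\Lambda_P)(dy)$ from Lemma~1 of \cite{westling2023inference} for \emph{general} $S$, then applies the Duhamel equation to collapse the result to $\psi_P(t)$ plus second-order terms in $(S-S_P)$ and $(G_P/G_\infty-1)$; setting $S=S_P$ kills those terms. Your martingale route is more direct for what the lemma actually claims, while the paper's route yields, as a byproduct, the exact second-order remainder formula that is reused later in the analysis of $\psi_n$.

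For the decomposition, both arguments are telescoping. You first rewrite everything through the ratio process $R_S(u)=S(t)/S(u)$ via $\Lambda(du)/S(u)=d_u[1/S(u)]$, integrate by parts, and then swap $R_{S_{n,k}}\to R_{S_P}$ followed by $1/G_{n,k}\to1/G_\infty$. The paper instead writes down the six pieces explicitly by adding and subtracting---keeping the prefactor $[2S_{n,k}(t)-1]$ fixed while swapping first the $S$-ratio and then $G$ in the indicator and integral parts, and handling the change in the $g'$ and $g$ factors separately in $U_{n,k,3,t}$ and $U_{n,k,6,t}$---and then defers the bounds to Lemma~3 of \cite{westling2023inference}. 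The organization of the six pieces may not line up term-for-term with yours (in particular the paper's $U_{n,k,4,t}$ carries the full measure difference $\frac{S_{n,k}(t)}{S_{n,k}(u)}\Lambda_{n,k}(du)-\frac{S_P(t)}{S_P(u)}\Lambda_P(du)$ rather than a boundary $G$-piece), but the resulting $L_2(P)$ bounds are of the same form.
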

\begin{proof}[\bfseries{Proof of Lemma~\ref{lemma:psi_rep}}]
We can write 
\[ D_{P,\psi,t}(y, \delta, a, w) =- S_P(t \mid a, w)  \left[ 1 - 2 S_P(t \mid a, w)\right] H_{S_P, G_P, t}(y, \delta, a, w) +  S_P(t \mid a, w)\left[ 1 - S_P(t \mid a, w) \right]\]
where 
\[H_{S,G,t}(y, \delta, a, w) :=\frac{I(y \leq t, \delta=1)}{S\left(y \mid a, w\right) G\left(y \mid a, w\right)}-\int_0^{t \wedge y} \frac{\Lambda\left(du \mid a, w\right)}{S_P\left(u \mid a, w\right) G_P\left(u \mid a, w\right)}.\]
As in Lemma~1 of \cite{westling2023inference}, we can show that
\[ E_P \left[H_{S, G, t}(Y, \Delta, A, W) \mid A = a, W =w \right] = -\int_0^t \frac{S_P(y- \mid a, w) G_P(y \mid a, w)}{S(y \mid a, w) G(y \mid a, w)} \, \left( \Lambda - \Lambda_P\right)(dy \mid a, w)\]
for any conditional survival function $S$ and corresponding cumulative hazard $\Lambda$ and conditional survival $G$. We note that there is a typo in the proof of Lemma~1 of \cite{westling2023inference}: $E_0\left[ H_{S, G, t, a_0}(Y, \Delta, W) \mid W =w \right]$ should be $E_0 \left[ H_{S, G, t, a_0}(Y, \Delta, W) \mid A = a, W =w\right]$. Dropping the conditioning  on $A, W$ for notational simplicity, we now have
\begin{align*}
    P D_{S,G_\infty, \psi, t} &= E_P\left\{  - S(t \mid A ,W)\left[ 1 - 2 S(t \mid A, W)\right] H_{S, G_\infty, t}(Y, \Delta, A, W) \right\}\\
    &\qquad + E_P \left\{ S(t \mid A ,W)  \left[ 1 - S(t \mid A, W) \right] \right\}  \\
    &= E_P\left\{  S(t) \left[ 1 - 2 S(t)\right] \int_0^t \frac{S_P(y-) G_P(y)}{S(y ) G_\infty(y)} \, \left( \Lambda - \Lambda_P\right)(dy)\right\}  + E_P\left\{ S(t) \left[ 1 - S(t) \right] \right\} \\
    &= E_P\left\{  S(t) \left[ 1 - 2 S(t)\right] \int_0^t \frac{S_P(y-)}{S(y ) }  \left[ \frac{ G_P(y)}{G_\infty(y)} - 1\right]\, \left( \Lambda - \Lambda_P\right)(dy)\right\} \\
    &\qquad +  E_P\left\{  S(t) \left[ 1 - 2 S(t)\right] \int_0^t \frac{S_P(y-)}{S(y) } \, \left( \Lambda - \Lambda_P\right)(dy)\right\}  + E_P\left\{ S(t) \left[ 1 - S(t) \right] \right\}.
\end{align*}
By the Duhamel equation (Theorem~6 of \citeauthor{gill1990survey}, \citeyear{gill1990survey}), we have
\[ S(t \mid A ,W)  \int_0^t \frac{S_P(y- \mid A, W)}{S(y \mid A, W) } \, \left( \Lambda - \Lambda_P\right)(dy \mid A, W) = - \left[  S(t \mid A ,W) -  S_P(t \mid A ,W) \right].\]
Hence, 
\begin{align*}
    P D_{S,G_\infty, \psi, t}&= E_P\left\{  S(t) \left[ 1 - 2 S(t)\right] \int_0^t \frac{S_P(y-)}{S(y ) }  \left[ \frac{ G_P(y)}{G_\infty(y)} - 1\right]\, \left( \Lambda - \Lambda_P\right)(dy )\right\} \\
    &\qquad +  E_P\left\{  -\left[ 1 - 2 S(t)\right] \left[ S(t) - S_P(t) \right]\right\} + E_P\left\{ S(t) \left[ 1 - S(t) \right] \right\} \\
    &= E_P\left\{  S(t) \left[ 1 - 2 S(t)\right] \int_0^t \frac{S_P(y-)}{S(y) }  \left[ \frac{ G_P(y)}{G_\infty(y)} - 1\right]\, \left( \Lambda - \Lambda_P\right)(dy )\right\} \\
    &\qquad +  E_P\left\{ \left[ S(t) - S_P(t)\right]^2 \right\} + \psi_P(t).
\end{align*}
When $S = S_P$, the first and second term are zero. Therefore, $P D_{S_P,G_\infty, \psi, t} = \psi_{P}(t)$, so that $PD_{S_P,G_\infty, \psi, t}^\ast = 0$.

By adding and subtracting terms, we can write $D_{n,k, \psi, t}-D_{\infty, \psi, t} = \sum_{j=1}^6 U_{n,k,j,t}$ where
\begin{align*}
    U_{n,k,1,t} &:= \left[ \frac{ S_{n,k}(t \mid a, w)}{S_{n,k}(y \mid a, w)} - \frac{ S_{P}(t \mid a, w)}{S_{P}(y \mid a, w)} \right]\frac{2 S_{n,k}(t \mid a, w)-1}{G_{n,k}(y \mid ,a ,w)} I(y \leq t, \delta=1)  \\
    U_{n,k,2,t} &:=  \left[ \frac{1}{G_{n,k}(y \mid ,a ,w)}  - \frac{1}{G_{\infty}(y \mid ,a ,w)} \right] \left[ 2 S_{n,k}(t \mid a, w)-1\right] \frac{S_P(t \mid a, w)}{S_{P}(y \mid a, w)} I(y \leq t, \delta=1) \\
    U_{n,k,3,t} &:=\left[S_{n,k}(t \mid a, w) -  S_{P}(t \mid a, w)\right] \left[ 2 \frac{ S_{P}(t \mid a, w)}{S_{P}(y \mid a, w)} \frac{I(y \leq t, \delta=1)}{G_{\infty}(y \mid ,a ,w)} + 1 -  S_{n,k}(t \mid a, w) -  S_{P}(t \mid a, w) \right] \\
    U_{n,k,4,t} &:=\int_0^{t \wedge y}  \left[ \frac{ S_{n,k}(t \mid a, w)}{S_{n,k}(u \mid a, w)} \Lambda_{n,k}(du \mid a, w)- \frac{ S_{P}(t \mid a, w)}{S_{P}(u \mid a, w)} \Lambda_{P}(du \mid a, w)\right]\frac{1 -  2 S_{n,k}(t \mid a, w)}{G_{n,k}(u \mid ,a ,w)}  \\
    U_{n,k,5,t} &:=\int_0^{t \wedge y} \left[ \frac{1}{G_{n,k}(u \mid ,a ,w)}  - \frac{1}{G_{\infty}(u \mid ,a ,w)} \right] \left[ 1 - 2 S_{n,k}(t \mid a, w)\right]\frac{S_P(t \mid a, w)}{S_{P}(u \mid a, w)} \Lambda_P(du \mid a, w) \\
    U_{n,k,6,t} &:=-2\left[S_{n,k}(t \mid a, w) -  S_{P}(t \mid a, w)\right]\int_0^{t \wedge y} \frac{S_{P}(t \mid a, w)}{S_{P}(u \mid a, w)} \frac{\Lambda_P(du \mid a, w)}{G_\infty(u \mid a, w)}.
\end{align*}
The bounds for each of these terms follow by the derivations in Lemma~3 of \cite{westling2023inference} and condition~\ref{itm:positivity}.
\end{proof}

\begin{lemma}\label{lemma:tau_rep}
If~\ref{itm:positivity} holds, then $P\left(D_{n,k, \tau}-D_{P, \tau}\right)^2 \leq 25\eta^{12}  P\left( \pi_{n,k} - \pi_P \right)^2$, and 
\begin{align*}
    P\left(D_{n,k, \tau}-D_{P, \tau}\right) &= -E_P\left\{\left[\pi_{n,k}(W) - \pi_P(W)\right]^2  \frac{\left[\pi_{n,k}(W) +\pi_{P}(W)-1 \right]^2 +\pi_{P}(W)\left[1 - \pi_{P}(W)\right]}{\pi_{n,k}(W)^2\left[1 - \pi_{n,k}(W)\right]^2\pi_{P}(W)\left[1 - \pi_{P}(W)\right]} \right\}.
\end{align*}
\end{lemma}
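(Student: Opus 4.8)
The plan is to treat both assertions as statements about the $(A,W)$-measurable function $D_{n,k,\tau}-D_{P,\tau}$, in which $\pi_{n,k}$ is a fixed function (it depends only on the training fold $\mathcal{T}_{n,k}$), and to work throughout on the event of condition~\ref{itm:positivity} on which $\pi_{n,k}(w)$ and $\pi_P(w)$ both lie in $[\eta^{-1},1-\eta^{-1}]$ for $P$-almost every $w$; that event has probability tending to one.

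For the $L_2$ bound, I would write, using the form of $D_{P,\tau}$ from Proposition~\ref{prop:eif2}, $D_{n,k,\tau}(a,w)-D_{P,\tau}(a,w)=h_a(\pi_{n,k}(w))-h_a(\pi_P(w))$, where $h_a(\pi):=2/[\pi(1-\pi)]-(a-\pi)^2/[\pi(1-\pi)]^2$. The mean value theorem gives $h_a(\pi_{n,k}(w))-h_a(\pi_P(w))=h_a'(\bar\pi)\,[\pi_{n,k}(w)-\pi_P(w)]$ for some $\bar\pi$ between $\pi_{n,k}(w)$ and $\pi_P(w)$, hence $\bar\pi\in[\eta^{-1},1-\eta^{-1}]$, with $|a-\bar\pi|\le 1$ since $a\in\{0,1\}$. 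Differentiating $h_a$ and bounding the three resulting terms using $|2\bar\pi-1|\le 1$, $|a-\bar\pi|\le 1$, and $[\bar\pi(1-\bar\pi)]^{-1}\le\eta^2$ yields $|h_a'(\bar\pi)|\le 5\eta^6$ uniformly; squaring and integrating against $P$ gives $P(D_{n,k,\tau}-D_{P,\tau})^2\le 25\eta^{12}\,P(\pi_{n,k}-\pi_P)^2$.

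For the exact identity, I would evaluate $P(D_{n,k,\tau}-D_{P,\tau})$ by first integrating over $A$ given $W$, using that $A\mid W\sim\mathrm{Bernoulli}(\pi_P(W))$ under $P$ and that the only $A$-dependence is through the squared terms, so $E_P[(A-c)^2\mid W]=\pi_P(W)[1-\pi_P(W)]+[\pi_P(W)-c]^2$ for any $W$-measurable $c$. This gives $E_P[D_{n,k,\tau}\mid W]=2/[\pi_{n,k}(1-\pi_{n,k})]-\{\pi_P(1-\pi_P)+(\pi_{n,k}-\pi_P)^2\}/[\pi_{n,k}^2(1-\pi_{n,k})^2]$ and $E_P[D_{P,\tau}\mid W]=1/[\pi_P(1-\pi_P)]$, so the claim reduces to the algebraic identity
\[ \frac{2}{a(1-a)}-\frac{b(1-b)+(a-b)^2}{a^2(1-a)^2}-\frac{1}{b(1-b)}=-\,(a-b)^2\,\frac{(a+b-1)^2+b(1-b)}{a^2(1-a)^2\,b(1-b)} \]
with $a=\pi_{n,k}(W)$ and $b=\pi_P(W)$. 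Putting the left side over the common denominator $a^2(1-a)^2 b(1-b)$, its numerator equals $-[a(1-a)-b(1-b)]^2-(a-b)^2 b(1-b)$; since $a(1-a)-b(1-b)=-(a-b)(a+b-1)$, this is $-(a-b)^2[(a+b-1)^2+b(1-b)]$, matching the numerator of the right side. Taking $E_P$ of both sides gives the stated formula. Conceptually, this identity is just the explicit second-order remainder in the von Mises expansion of $\tau_P$ evaluated at the law with propensity $\pi_{n,k}$ and $W$-marginal that of $P$.

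Neither step is deep; the work is bookkeeping. The one place to be careful is the algebraic manipulation in the last paragraph—keeping the signs and the $(a-b)^2$ factorization straight so one lands exactly on the stated form—and noting that, since condition~\ref{itm:positivity} only holds with probability tending to one, the bounds in the lemma (as used later in the proof of Theorem~\ref{thm:asy_linear}) should be understood as holding on that event.
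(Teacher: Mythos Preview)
Your proof is correct. The second assertion is handled essentially as in the paper: both arguments integrate out $A$ using $E_P[(A-c)^2\mid W]=\pi_P(1-\pi_P)+(\pi_P-c)^2$ and then reduce to the same algebraic identity, with the key step being the factorization $a(1-a)-b(1-b)=-(a-b)(a+b-1)$. The one genuine difference is in the $L_2$ bound: the paper algebraically factors $D_{n,k,\tau}-D_{P,\tau}$ as $(\pi_{n,k}-\pi_P)$ times an explicit three-term expression (which it then bounds termwise), whereas you invoke the mean value theorem on $h_a(\pi)$ and bound $|h_a'|$. Your route is slightly quicker and avoids the long chain of adding and subtracting terms; the paper's explicit factorization, on the other hand, is reused immediately afterward to compute $P(D_{n,k,\tau}-D_{P,\tau})$, so it serves double duty. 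Either way the constant $5\eta^6$ (hence $25\eta^{12}$) is loose and only the order matters for the downstream application.
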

\begin{proof}[\bfseries{Proof of Lemma~\ref{lemma:tau_rep}}]
We can write
\begin{align*}
D_{n,k, \tau}-D_{P, \tau} &= \frac{2}{\pi_{n,k}\left[1 - \pi_{n,k}\right]}  - \frac{2}{\pi_{P}\left[1 - \pi_{P}\right]}  - \frac{\left[A - \pi_{n,k}\right]^2 }{\pi_{n,k}^2\left[1 - \pi_{n,k}\right]^2} +\frac{\left[A - \pi_{P}\right]^2}{\pi_{P}^2\left[1 - \pi_{P}\right]^2}  \\
&= 2\frac{\pi_{P}\left[1 - \pi_{P}\right] - \pi_{n,k}\left[1 - \pi_{n,k}\right]}{\pi_{n,k}\left[1 - \pi_{n,k}\right]\pi_{P}\left[1 - \pi_{P}\right]}   - \frac{\left[A - \pi_{n,k}\right]^2 - \left[A - \pi_{P}\right]^2}{\pi_{n,k}^2\left[1 - \pi_{n,k}\right]^2} \\
&\qquad - \left[A - \pi_{P}\right]^2\left\{\frac{1}{\pi_{n,k}^2\left[1 - \pi_{n,k}\right]^2} - \frac{1}{\pi_{P}^2\left[1 - \pi_{P}\right]^2} \right\}  \\
&= 2\frac{\left[ \pi_{n,k} - \pi_{P}\right]\left[\pi_{P} + \pi_{n,k} - 1\right]}{\pi_{n,k}\left[1 - \pi_{n,k}\right]\pi_{P}\left[1 - \pi_{P}\right]} + \frac{\left[\pi_{n,k} - \pi_P\right]\left[2A - \pi_{n,k} -\pi_{P}\right]}{\pi_{n,k}^2\left[1 - \pi_{n,k}\right]^2}  \\
&\qquad - \left[A - \pi_{P}\right]^2\frac{\left[\pi_{n,k} - \pi_P\right] \left[\pi_{n,k} +\pi_{P} - 1 \right] \left[\pi_{P}\left\{1 - \pi_{P}\right\} + \pi_{n,k}\left\{1 - \pi_{n,k}\right\}\right]}{\pi_{n,k}^2\left[1 - \pi_{n,k}\right]^2\pi_{P}^2\left[1 - \pi_{P}\right]^2}  \\
&= \left\{\pi_{n,k} - \pi_P\right\} \left\{ 2\frac{\pi_{P} + \pi_{n,k} - 1}{\pi_{n,k}\left[1 - \pi_{n,k}\right]\pi_{P}\left[1 - \pi_{P}\right]} + \frac{2A - \pi_{n,k} -\pi_{P}}{\pi_{n,k}^2\left[1 - \pi_{n,k}\right]^2} \right. \\
&\qquad\qquad\qquad\qquad\left. - \left[A - \pi_{P}\right]^2\frac{\left[\pi_{n,k} +\pi_{P}-1 \right] \left[\pi_{P}\left\{1 - \pi_{P}\right\} + \pi_{n,k}\left\{1 - \pi_{n,k}\right\}\right]}{\pi_{n,k}^2\left[1 - \pi_{n,k}\right]^2\pi_{P}^2\left[1 - \pi_{P}\right]^2} \right\}.
\end{align*}
Thus 
\[ P\left(D_{n,k, \tau}-D_{P, \tau}\right)^2 \leq 25\eta^{12} P\left( \pi_{n,k} - \pi_P \right)^2 \] by~\ref{itm:positivity}. We furthermore have
\begin{align*}
    P\left(D_{n,k, \tau}-D_{P, \tau}\right) &= E_P\left[\left\{\pi_{n,k} - \pi_P\right\} \left\{ 2\frac{\pi_{P} + \pi_{n,k} - 1}{\pi_{n,k}\left[1 - \pi_{n,k}\right]\pi_{P}\left[1 - \pi_{P}\right]} + \frac{2A - \pi_{n,k} -\pi_{P}}{\pi_{n,k}^2\left[1 - \pi_{n,k}\right]^2} \right.\right. \\
    &\qquad\qquad\qquad\qquad\left.\left. - \left[A - \pi_{P}\right]^2\frac{\left[\pi_{n,k} +\pi_{P}-1 \right] \left[\pi_{P}\left\{1 - \pi_{P}\right\} + \pi_{n,k}\left\{1 - \pi_{n,k}\right\}\right]}{\pi_{n,k}^2\left[1 - \pi_{n,k}\right]^2\pi_{P}^2\left[1 - \pi_{P}\right]^2} \right\}\right] \\
    &= E_P\left[\left\{\pi_{n,k} - \pi_P\right\} \left\{ 2\frac{\pi_{P} + \pi_{n,k} - 1}{\pi_{n,k}\left[1 - \pi_{n,k}\right]\pi_{P}\left[1 - \pi_{P}\right]} + \frac{\pi_P - \pi_{n,k}}{\pi_{n,k}^2\left[1 - \pi_{n,k}\right]^2} \right.\right. \\
    &\qquad\qquad\qquad\qquad\left.\left. - \pi_P\left[1 - \pi_{P}\right]\frac{\left[\pi_{n,k} +\pi_{P}-1 \right] \left[\pi_{P}\left\{1 - \pi_{P}\right\} + \pi_{n,k}\left\{1 - \pi_{n,k}\right\}\right]}{\pi_{n,k}^2\left[1 - \pi_{n,k}\right]^2\pi_{P}^2\left[1 - \pi_{P}\right]^2} \right\}\right] \\
    &= E_P\left[\left\{\pi_{n,k} - \pi_P\right\} \left\{ \left[\pi_{n,k} +\pi_{P}-1 \right] \frac{\pi_{n,k}\left[1 - \pi_{n,k}\right] - \pi_{P}\left[1 - \pi_{P}\right]}{\pi_{n,k}^2\left[1 - \pi_{n,k}\right]^2\pi_{P}\left[1 - \pi_{P}\right]} + \frac{\pi_P - \pi_{n,k}}{\pi_{n,k}^2\left[1 - \pi_{n,k}\right]^2} \right\} \right] \\
    &= -E_P\left\{\left[\pi_{n,k} - \pi_P\right]^2  \frac{\left[\pi_{n,k} +\pi_{P}-1 \right]^2 +\pi_{P}\left[1 - \pi_{P}\right]}{\pi_{n,k}^2\left[1 - \pi_{n,k}\right]^2\pi_{P}\left[1 - \pi_{P}\right]} \right\}.
\end{align*}

\end{proof}

\begin{proof}[\bfseries{Proof of Theorem~\ref{thm:consistency}}]
Conditions~\ref{itm:nuisance}--\ref{itm:positivity} imply conditions (B1)--(B3) of \cite{westling2023inference} with $S_\infty = S_P$ and $\pi_\infty = \pi_P$. Thus, pointwise and uniform consistency of $\theta_n(t)$ follow by  Theorem~2 of \cite{westling2023inference}.

We now turn to consistency of $\psi_n$. If $S_\infty = S_P$, then adding and subtracting terms, we have
\begin{equation}
\begin{aligned}\label{eq:psi_decomp}
    \psi_n(t)-\psi_P(t) &= \mathbb{P}_n D_{\infty, \psi, t}^\ast + \frac{1}{K} \sum_{k=1}^K \frac{K n_k^{1 / 2}}{n} \mathbb{G}_n^k\left(D_{n,k, \psi, t}-D_{\infty, \psi, t}\right) +\frac{1}{K} \sum_{k=1}^K \frac{K n_k}{n}P \left(D_{n,k, \psi, t}-D_{\infty, \psi, t}\right).
\end{aligned}
\end{equation}
Since $P D_{\infty, \psi, t}^\ast = 0$ by Lemma~\ref{lemma:psi_rep}, the first term is $\fasterthan(1)$ by the weak law of large numbers. In addition, by standard derivations for cross-fitted empirical process terms (e.g., the derivations in Lemma~6 of  \citealp{westling2023inference}), we can show that
\begin{align*}
    E_P\left|  \frac{1}{K} \sum_{k=1}^K \frac{K n_k^{1 / 2}}{n} \mathbb{G}_n^k\left(D_{n,k, \psi, t}-D_{\infty,\psi, t}\right)\right| &\leq C n^{-1/2} \max_k E_P\left[ \left(D_{n,k, \psi, t}-D_{\infty, \psi, t}\right)^2\right]
\end{align*}
Since $D_{n,k, \psi, t}$ and $D_{\infty, \psi, t}$ are uniformly bounded, if $\max_k P\left(D_{n,k, \psi, t}-D_{\infty, \psi, t}\right)^2 = \fasterthan(1)$, then it follows that $\psi_n(t)\inprob \psi_P(t)$.   By Lemma~\ref{lemma:psi_rep}, $D_{n,k, \psi, t}-D_{\infty, \psi, t} = \sum_{j=1}^6 U_{n,k,j,t}$ and each of the bounds for these expressions provided in Lemma~\ref{lemma:psi_rep} is $\fasterthan(1)$ by~\ref{itm:nuisance},  so by the triangle inequality, $\max_k P\left(D_{n,k, \psi, t}-D_{\infty, \psi, t}\right)^2 = \fasterthan(1)$, which implies that $\psi_n(t) \inprob \psi_P(t)$. For uniform consistency, we have
\begin{align*}
\sup _{u \in[0, t]}\left|\psi_n(u)-\psi_P\left(u\right)\right| \leq &\sup _{u \in[0, t]}\left|\mathbb{P}_n D_{\infty, \psi,u}^*\right|+\sup _{u \in[0, t]}\left|\frac{1}{K} \sum_{k=1}^K \frac{K n_k^{1 / 2}}{n} \mathbb{G}_n^k\left(D_{n, k, \psi, u}-D_{\infty, \psi, u}\right)\right| \\
&+\sup _{u \in[0, t]}\left|\frac{1}{K} \sum_{k=1}^K \frac{K n_k}{n} P\left(D_{n, k, \psi, u}-D_{\infty, \psi, u}\right)\right| .
\end{align*}
Using Lemma~4 of \cite{westling2023inference}, Lemma~\ref{lemma:psi_rep}, the derivations above, and~\ref{itm:uniform}, we can show that each of these terms is $\fasterthan(1)$.

The proof for consistency of $\tau_n$ is similar.

We have
\begin{align}
    \tau_n-\tau_P &= \mathbb{P}_n D_{P, \tau}^\ast +\frac{1}{K} \sum_{k=1}^K \frac{K n_k^{1 / 2}}{n} \mathbb{G}_n^k\left(D_{n,k, \tau}-D_{P, \tau}\right) + \frac{1}{K} \sum_{k=1}^K \frac{K n_k}{n}P \left(D_{n,k, \tau}-D_{P, \tau}\right).\label{eq:tau_decomp}
\end{align}
Since $PD_{P,\tau}^\ast = 0$, $\mathbb{P}_n D_{P, \tau}^\ast = \fasterthan(1)$. By Lemma~\ref{lemma:tau_rep}, $P\left(D_{n,k, \tau}-D_{P, \tau}\right)^2 \leq C \eta^k P(\pi_{n,k} - \pi_P)^2$, which implies  by~\ref{itm:nuisance} that $\max_k  P\left(D_{n,k, \tau}-D_{P, \tau}\right)^2 \inprob 0$. Hence, the second term in~\eqref{eq:tau_decomp} is $\fasterthan(1)$. For the third term in~\eqref{eq:tau_decomp}, by Lemma~\ref{lemma:tau_rep} and~\ref{itm:positivity}, $\left| P\left(D_{n,k, \tau}-D_{P, \tau}\right) \right| \leq 2\eta^6 P \left( \pi_{n,k}-\pi_P \right)^2$, so that the third term in~\eqref{eq:tau_decomp} is $\fasterthan(1)$ by~\ref{itm:nuisance}. Thus, $\tau_n \inprob \tau_P$.
\end{proof}

\begin{proof}[\bfseries{Proof of Theorem~\ref{thm:asy_linear}}]
Conditions~\ref{itm:nuisance}--\ref{itm:asy} imply conditions (B1)--(B5) of \cite{westling2023inference}, and~\ref{itm:asy_unif} implies condition (B6) of \cite{westling2023inference}. Thus, pointwise and uniform asymptotic linearity of $\theta_n(t)$ follow by Theorem~2 of \cite{westling2023inference}.

Since $G_\infty = G_P$, as in the proof of Theorem~\ref{thm:consistency}, we have 
\begin{align}
    \psi_n(t)-\psi_P(t) - \mathbb{P}_n D_{P, \psi, t}^\ast = \frac{1}{K} \sum_{k=1}^K \frac{K n_k^{1 / 2}}{n} \mathbb{G}_n^k\left(D_{n,k, \psi, t}-D_{P, \psi, t}\right)+\frac{1}{K} \sum_{k=1}^K \frac{K n_k}{n}P\left(D_{n,k, \psi, t}-D_{P, \psi, t}\right). \label{eq:psi_rep2}
\end{align}
By Lemma~\ref{lemma:psi_rep}, $D_{n,k, \psi, t}-D_{P, \psi, t} = \sum_{j=1}^6 U_{n,k,j,t}$ and each of the bounds for these expressions provided in Lemma~\ref{lemma:psi_rep} is $\fasterthan(n^{-1/2})$ by~\ref{itm:asy},  so by the triangle inequality, $\max_k P\left(D_{n,k, \psi, t}-D_{P, \psi, t}\right)^2 = \fasterthan(n^{-1/2})$, 
As in the proof of Theorem~\ref{thm:consistency}, both terms in~\eqref{eq:psi_rep2} are then $\fasterthan(n^{-1/2})$. For uniform asymptotic linearity of $\psi_n$, by Lemma~\ref{lemma:psi_rep} and~\ref{itm:asy}--~\ref{itm:asy_unif} that
\begin{align*}
    \sup _{u \in[0, t]}\left|\frac{1}{K} \sum_{k=1}^K \frac{K n_k}{n}P\left(D_{n,k, \psi, t}-D_{P, \psi, t}\right)\right| = \fasterthan(n^{-1/2}).
\end{align*}
We can show that 
\begin{align*}
    \sup _{u \in[0, t]}\left|\frac{1}{K} \sum_{k=1}^K \frac{K n_k^{1 / 2}}{n} \mathbb{G}_n^k\left(D_{n,k, \psi, t}-D_{P, \psi, t}\right)\right| = \fasterthan(n^{-1/2})
\end{align*}
using Lemma~5 of \cite{westling2023inference} using the same basic argument as in Lemma~6 of \cite{westling2023inference}.

As in the proof of Theorem~\ref{thm:consistency}, we have
\begin{align}
    \tau_n-\tau_P -  \mathbb{P}_n D_{P, \tau}^\ast &= \frac{1}{K} \sum_{k=1}^K \frac{K n_k^{1 / 2}}{n} \mathbb{G}_n^k\left(D_{n,k, \tau}-D_{P, \tau}\right) + \frac{1}{K} \sum_{k=1}^K \frac{K n_k}{n}P \left(D_{n,k, \tau}-D_{P, \tau}\right).\label{eq:tau_decomp2}
\end{align}
By Lemma~\ref{lemma:tau_rep}, $P\left(D_{n,k, \tau}-D_{P, \tau}\right)^2 \leq 25 \eta^{12} P(\pi_{n,k} - \pi_P)^2$, which implies  by~\ref{itm:nuisance} that $\max_k  P\left(D_{n,k, \tau}-D_{P, \tau}\right)^2 \inprob 0$. Hence, since $\max_k K n_k^{1/2} /n = \boundeddet(n^{-1/2})$, the first term in~\eqref{eq:tau_decomp2} is $\fasterthan(n^{-1/2})$. For the second term in~\eqref{eq:tau_decomp2}, by Lemma~\ref{lemma:tau_rep} and~\ref{itm:positivity}, $\left| P\left(D_{n,k, \tau}-D_{P, \tau}\right) \right| \leq 2\eta^6 P \left( \pi_{n,k}-\pi_P \right)^2$, so that the second term in~\eqref{eq:tau_decomp2} is $\fasterthan(n^{-1/2})$ by~\ref{itm:asy}. Thus, $\tau_n-\tau_P =  \mathbb{P}_n D_{P, \tau}^\ast + \fasterthan(n^{-1/2})$.

We now have
\begin{align*}
\theta_{n,l}(t,v) - \theta_{P,l}(t,v)- \d{P}_n D_{P,l,t,v} &= \left[ \theta_n(t) - \theta_P(t) - \d{P}_n D_{P,\theta,t} \right] \\
&\qquad - v \left[ \sqrt{\psi_n(t) \tau_n} - \sqrt{\psi_P(t) \tau_P} -\frac{\tau_P \d{P}_nD_{P,\psi,t}^\ast + \psi_P(t) \d{P}_nD_{P,\tau}^\ast}{2\sqrt{\psi_P(t)\tau_P}} \right].
\end{align*}
We addressed the first term above. The second term can be written as
\begin{align*}
    &\sqrt{\psi_n(t) \tau_n} - \sqrt{\psi_P(t) \tau_P} -\frac{\tau_P \d{P}_nD_{P,\psi,t}^\ast + \psi_P(t) \d{P}_n D_{P,\tau}^\ast}{2\sqrt{\psi_P(t)\tau_P}}\\
    &\qquad = \left[\sqrt{\psi_n(t)} - \sqrt{\psi_P(t)} - \frac{ \d{P}_nD_{P,\psi,t}^\ast}{2\sqrt{\psi_P(t)}} \frac{\sqrt{\tau_P}}{\sqrt{\tau_n}}\right] \sqrt{\tau_n} + \left[\sqrt{\tau_n} - \sqrt{\tau_P} - \frac{ \d{P}_nD_{P,\tau}^\ast}{2\sqrt{\tau_P}}\right]\sqrt{\psi_P(t)} \\
    &\qquad = \left[\frac{\psi_n(t) - \psi_P(t)}{\sqrt{\psi_n(t)} + \sqrt{\psi_P(t)}} - \frac{ \d{P}_nD_{P,\psi,t}^\ast}{2\sqrt{\psi_P(t)}} \frac{\sqrt{\tau_P}}{\sqrt{\tau_n}}\right] \sqrt{\tau_n} + \left[\frac{\tau_n - \tau_P}{\sqrt{\tau_n} + \sqrt{\tau_P}} - \frac{ \d{P}_n D_{P,\tau}^\ast}{2\sqrt{\tau_P}}\right]\sqrt{\psi_P(t)} \\
    &\qquad = \left[\frac{\psi_n(t) - \psi_P(t) - \d{P}_nD_{P,\psi,t}^\ast}{\sqrt{\psi_n(t)} + \sqrt{\psi_P(t)}} + \left\{\frac{1}{\sqrt{\psi_n(t)} + \sqrt{\psi_P(t)}}- \frac{\sqrt{\tau_P} }{2\sqrt{\psi_P(t)\tau_n}} \right\}\d{P}_nD_{P,\psi,t}^\ast\right] \sqrt{\tau_n}\\
    &\qquad\qquad + \left[\frac{\tau_n - \tau_P -  \d{P}_n D_{P,\tau}^\ast}{\sqrt{\tau_n} + \sqrt{\tau_P}} + \left\{ \frac{1}{\sqrt{\tau_n} + \sqrt{\tau_P}} - \frac{1}{2\sqrt{\tau_P}}\right\}\d{P}_n D_{P,\tau}^\ast\right]\sqrt{\psi_P(t)} \\
     &\qquad = \left[\frac{\psi_n(t) - \psi_P(t) - \d{P}_nD_{P,\psi,t}^\ast}{\sqrt{\psi_n(t)} + \sqrt{\psi_P(t)}} + \frac{ \sqrt{\psi_P(t)} - \sqrt{\psi_n(t)}}{2\sqrt{\psi_n(t)\psi_P(t)} + 2 \psi_P(t)} \d{P}_nD_{P,\psi,t}^\ast  +\frac{\sqrt{\tau_n}- \sqrt{\tau_P} }{2\sqrt{\psi_P(t)\tau_n}}\d{P}_nD_{P,\psi,t}^\ast\right] \sqrt{\tau_n}\\
    &\qquad\qquad + \left[\frac{\tau_n - \tau_P -  \d{P}_n D_{P,\tau}^\ast}{\sqrt{\tau_n} + \sqrt{\tau_P}} + \frac{\sqrt{\tau_P} - \sqrt{\tau_n} }{2\sqrt{\tau_P\tau_n} + 2\tau_P}\d{P}_n D_{P,\tau}^\ast\right]\sqrt{\psi_P(t)}.
\end{align*}
If $\tau_P > 0$ and $\psi_P(t) > 0$, then since $\psi_n(t) - \psi_P(t) - \d{P}_nD_{P,\psi,t}^\ast = \fasterthan(n^{-1/2})$, $\tau_n - \tau_P - \d{P}_nD_{P,\tau}^\ast = \fasterthan(n^{-1/2})$, $\d{P}_nD_{P,\psi,t}^\ast = \bounded(n^{-1/2})$, $ \d{P}_nD_{P,\tau}^\ast  = \bounded(n^{-1/2})$, $\sqrt{\psi_n(t)} - \sqrt{\psi_P(t)} = \fasterthan(1)$, and $\sqrt{\tau_n} - \sqrt{\tau_P} = \fasterthan(1)$, the final expression above is $\fasterthan(n^{-1/2})$. If $\tau_P = 0$, then $D_{P,\tau}^\ast = 0$ as well, so that $\tau_n = \fasterthan(n^{-1/2})$ and $\sqrt{\tau_n} = \fasterthan(n^{-1/4})$. Similarly, if $\psi_P(t) = 0$, then $D_{P, \psi,t} = 0$ as well, so that $\psi_n(t) = \fasterthan(n^{-1/2})$ and $\sqrt{\psi_n(t)} =\fasterthan(n^{-1/4})$. Thus, recalling that we are interpreting $0/0 = 0$ for notational convenience, if both $\tau_P = 0$ and $ \psi_P(t) = 0$, then the above equals $\sqrt{\psi_n(t)\tau_n} = \fasterthan(n^{-1/2})$. If $\tau_P = 0$ and $\psi_P(t) > 0$, then the above equals
\begin{align*}
    \left[\frac{\psi_n(t) - \psi_P(t) - \d{P}_nD_{P,\psi,t}^\ast}{\sqrt{\psi_n(t)} + \sqrt{\psi_P(t)}} + \frac{ \sqrt{\psi_P(t)} - \sqrt{\psi_n(t)}}{2\sqrt{\psi_n(t)\psi_P(t)} + 2 \psi_P(t)} \d{P}_nD_{P,\psi,t}^\ast  +\frac{1}{2\sqrt{\psi_P(t)}}\d{P}_nD_{P,\psi,t}^\ast + \sqrt{\psi_P(t)}\right] \sqrt{\tau_n}.
\end{align*}
Pointwise asymptotic linearity of $\theta_{n,l}(t,v)$ follows, and a nearly identical calculation holds for $\theta_{n,u}(t,v)$.

For uniform asymptotic linearity, we have assumed that $\inf_{s \in[0,t]} \psi_P(s) > 0$, and as above if $\tau_P = 0$ then 
\begin{align*}
    &\sup_{s \in [0,t]} \left| \sqrt{\psi_n(s) \tau_n} - \sqrt{\psi_P(s) \tau_P} -\frac{\tau_P \d{P}_nD_{P,\psi,s}^\ast + \psi_P(s) \d{P}_n D_{P,\tau}^\ast}{2\sqrt{\psi_P(s)\tau_P}}\right| = \sup_{s \in [0,t]} \left| \sqrt{\psi_n(s) \tau_n}\right|,
\end{align*}
which is $\fasterthan(n^{-1/2})$ because $\sqrt{\tau_n}
= \boundeddet(n^{-1/2})$.
\end{proof}

\section{Additional inference results}\label{app:transformation}
\begin{proof}[\bfseries{Transformed pointwise confidence intervals}]
The function $g(x):=\text{log}[(1+x)/(1-x)]$ is chosen to map the range of the parameter from $(-1, 1)$ to $(-\infty, \infty)$. By the multivariate delta method, $n^{1/2} [g(\theta_{n,l}(t,v)) - g(\theta_{P,l}(t,v))]$ and $n^{1/2} [g(\theta_{n,u}(t,v)) - g(\theta_{P,u}(t,v))]$ converge jointly in distribution to a mean-zero bivariate normal distribution with variances $\tilde \sigma_{P,l,t,v}^2 := P ([2/(1-\theta_{P,l}^2(t,v))]D^\ast_{P,l, t,v})^2$ and $\tilde \sigma_{P,u,t,v}^2 := P ([2/(1-\theta_{P,u}^2(t,v))]D^\ast_{P,u, t,v})^2$, respectively, and covariance $\tilde\Sigma_{P, ul, t, v} := P ([4/(1-\theta_{P,l}^2(t,v))(1-\theta_{P,u}^2(t,v))]D^\ast_{P,u, t,v}D^\ast_{P,l, t,v})$. We then define the cross-fitted variance estimators $\tilde \sigma_{n,l,t,v} := [2/(1-\theta_{n,l}^2(t,v))] \sigma_{n,l,t,v} $, $\tilde \sigma_{n,u,t,v} := [2/(1-\theta_{n,u}^2(t,v))] \sigma_{n,u,t,v} $ and $\tilde \Sigma_{n, ul, t, v} := \left[2/\sqrt{(1-\theta_{n,l}^2(t,v))(1-\theta_{n,u}^2(t,v))}\right] \Sigma_{n, ul, t, v}$.

We then define an asymptotic $(1-\alpha)$-level Wald-type confidence interval as
\begin{align}
    \left[\ell^\circ_n(t,v),\; u^\circ_n(t,v)\right] = \left[g^{-1}\left(g\left(\theta_{n,l}(t,v)\right) - n^{-1/2}\tilde c_{n,t, v,\alpha}\right), \; g^{-1}\left(g\left(\theta_{n,u}(t,v)\right) + n^{-1/2}\tilde c_{n,t, v,\alpha}\right)\right],
\end{align}
where $\tilde c_{n, t, v, \alpha}$ is such that $P(\tilde Z_1 \leq \tilde c_{n, t, v, \alpha}, \tilde Z_2 \geq -\tilde c_{n, t, v, \alpha}) =(1-\alpha)$, where $(\tilde Z_1, \tilde Z_2)$ follow a mean-zero bivariate normal distribution with cross-fitted estimated covariance as above.

\end{proof}

\begin{proof}[\bfseries{Transformed uniform confidence bands}] 
In practice, there is less variability in survival estimation during the early follow-up period because there are fewer censored units and more event time observations. We thereby construct transformed confidence bands \citep{westling2023inference} with variable width that depends on $t$ to account for the variability in the uncertainty of the estimators over time. Specifically, the transformed $(1-\alpha)$-level confidence bands over $[t_0, t_1]$ are given by 
\begin{align}
    \left[\tilde{\ell}^\circ_n(t,v),\; \tilde{u}^\circ_n(t,v)\right] = \left[g^{-1}\left(g\left(\theta_{n,l}(t,v)\right) - n^{-1/2}\tilde q_{n, v,\alpha}\tilde{\sigma}_{n,l,t,v}\right), \; g^{-1}\left(g\left(\theta_{n,u}(t,v)\right) + n^{-1/2}\tilde q_{n,v,\alpha}\tilde{\sigma}_{n,u,t,v}\right)\right],
\end{align}
where $\tilde{q}_{n,v,\alpha}$ is such that $P(\sup_{t \in [t_0, t_1]} \tilde \xi_{l,t} \leq \tilde q_{n,v,\alpha}, \, \sup_{t \in [t_0, t_1]} \tilde \xi_{u,t} \geq -\tilde q_{n,v,\alpha}) =(1-\alpha)$ where $(\tilde \xi_{l,t}, \tilde \xi_{u,t})$ are simulated paths over $[t_0, t_1]$ from the mean-zero Gaussian process with the covariance estimators \[\tilde \Sigma_{n,v} := \left[\begin{matrix} \tilde \Sigma_{n,l l,v} & \tilde \Sigma_{n,l u,v} \\ \tilde \Sigma_{n,u l,v} & \tilde \Sigma_{n,u u,v} \end{matrix} \right]\] where $\tilde \Sigma_{n,l l,v}:=\Sigma_{n,ll,v}/\{\tilde{\sigma}_{n,l,r,v}\tilde{\sigma}_{n,l,s,v}\}$ and $\tilde \Sigma_{n,u u,v}:=\Sigma_{n,uu,v}/\{\tilde{\sigma}_{n,u,r,v}\tilde{\sigma}_{n,u,s,v}\}$ and $\tilde \Sigma_{n,l u,v}:=\Sigma_{n,l u,v}/\{\tilde{\sigma}_{n,l,r,v}\tilde{\sigma}_{n,u,s,v}\}$. $\Sigma_{n,ll,v}$, $\Sigma_{n,uu,v}$ and $\Sigma_{n,ul,v}$ are the estimated components of the covariance matrix for the correlated Gaussian process $(\d{G}_{n,l,v}$, \,$\d{G}_{n,u,v})$ defined in Section~\ref{sec:inference}. 

\end{proof}

\begin{proof}[\bfseries{Uniform test}]
Denote by $\d{G}_{P,l,v}$ and $\d{G}_{P,u,v}$ the limiting Gaussian processes of $\d{G}_{n,l,v}$ and $\d{G}_{n,u,v}$ respectively. By Theorem~\ref{thm:asy_linear}, under the null hypothesis in Equation~(\ref{eq:uniform_testing}), we have
$$n^{1/2} \left( \sup_{t \in [t_0,t_1]}\left(\theta_{n,l}(t,v)-\theta_0\right), \; \sup_{t \in [t_0,t_1]}-\left(\theta_{n,u}(t,v)-\theta_0\right) \right)$$
converges jointly in distribution to
$$n^{1/2} \left( \sup_{t \in [t_0,t_1]} \d{G}_{P,l,v}(t), \sup_{t \in [t_0,t_1]} -\d{G}_{P,u,v}(t) \right)$$
by the continuous mapping theorem. We define the joint Gaussian process $\d{H}_{P,v}$ to be $\left( \d{G}_{P,l,v}(t),\, \d{G}_{P,u,v}(t)\right)$. The joint Gaussian process $\d{H}_{P,v}$ can be approximated by a correlated Gaussian process with mean zero and estimated covariance matrix $\Sigma_{n,v}$. We outline the testing procedure as follows. 
\begin{enumerate}
    \item Define the test statistic \[T_{n,v}:= \text{max} \left\{ n^{1/2}\sup_{t \in [t_0,t_1]}\left(\theta_{n,l}(t,v)-\theta_0\right), \;  n^{1/2} \sup_{t \in [t_0,t_1]}-\left(\theta_{n,u}(t,v)-\theta_0\right)\right\}.\]
    \item Construct the covariance matrix \[\Sigma_{n,v} := \left[\begin{matrix} \Sigma_{n,l l,v} & \Sigma_{n,l u,v} \\ \Sigma_{n,u l,v} & \Sigma_{n,u u,v} \end{matrix} \right]\] where
        \begin{itemize}
        \item $\Sigma_{n,l l,v}:= \frac{1}{n} \sum_{k=1}^K \sum_{i \in \mathcal{V}_{n, k}}[D^\ast_{n,l, r,v}\left(O_i\right)D^\ast_{n,l, s,v}\left(O_i\right)]$,
        \item $\Sigma_{n,u u,v}:=\frac{1}{n} \sum_{k=1}^K \sum_{i \in \mathcal{V}_{n, k}}[D^\ast_{n,u, r,v}\left(O_i\right)D^\ast_{n,u, s,v}\left(O_i\right)]$ and
        \item $\Sigma_{n,l u,v} = \Sigma_{n,u l} :=\frac{1}{n} \sum_{k=1}^K \sum_{i \in \mathcal{V}_{n, k}}[D^\ast_{n,l, r,v}\left(O_i\right)D^\ast_{n,u, s,v}\left(O_i\right)]$.
        \end{itemize}
    \item Let $\left(  \xi_{n,l,v}, \xi_{n,u,v} \right)$ be the sample paths simulated from the estimated joint Gaussian process with mean zero and estimated covariance matrix $\Sigma_{n,v}$. Define $q_{n,v,\alpha}$ as the $1-\alpha$ quantile of \[\text{max}\left \{\sup_{t \in [t_0,t_1]} \xi_{n,l,v}(t), \; \sup_{t \in [t_0,t_1]} -\xi_{n,u,v}(t)\right\}.\]
    \item Reject the null hypothesis at level $\alpha$ if $T_{n,v} > q_{n,v,\alpha}$.
\end{enumerate}
\end{proof}

\section{Additional details regarding sensitivity analysis for the difference in restricted mean survival time}

We recall that the difference in counterfactual RMST is defined as 
\[ \phi_{c}(t)  := \int_0^t \left[ P_{c}(T(1)>v) - P_{c}(T(0)>v) \right] \, dv = E_c \left[ \min\{ T(1), t \} - \min\{T(0), t\} \right].\]
Analogous to the causal and observed effects in the main paper, under certain conditions, the causal parameter $\phi_{c}(t)$ can be identified through 
\begin{align*}
    \phi_{c}(t) = \phi_{P}(t) & = \int_0^t E [S_P(v \mid 1, X)] \, dv-\int_0^t E [S_P(v \mid 0, X)] \, d v \\
    & = E\left\{E[\min (T, t) \mid A=1, W, U]-E[\min (T, t) \mid A=0, W, U]\right\}
\end{align*}
and the contrast of the observed restricted mean survival times can be written as
\begin{align*}
    \phi_{P}(t) & = \int_0^t E [S_P(v \mid 1, W)]\,  d v-\int_0^t E [S_P(v \mid 0, W)] \, d v \\
    & = E\left\{E[\min (T, t) \mid A=1, W]-E[\min (T, t) \mid A=0, W]\right\}.
\end{align*}

Similar to Proposition~\ref{prop:par_id}, by replacing $I(T>t)$ with $\min(T,t)$, we have the decomposition of $[\phi_P(t) - \phi_c(t)]^2$. We define $h_{c,t}:=E[\min (T, t) \mid A,W,U]=\int_0^t S_c(v \mid A, W, U) dv$ and  $h_{P,t}:=E[\min (T, t) \mid A,W]=\int_0^t S_P(v \mid A, W) \,dv$. 
\begin{prop}\label{prop:rmst}
    We have
\begin{align*}
\left[\phi_P(t) - \phi_c(t)\right]^2 =  \gamma_P(t)\tau_P \left[ \rho_{c, \phi}(t)\right]^2 s_{c, \phi, T}(t) \frac{s_{c,A}}{1-s_{c,A}},
\end{align*}
where 
\begin{align*}
    \rho_{c, \phi}(t) &:= \n{Cor}\left(h_{c,t}-h_{P,t}, \: \alpha_c-\alpha_P\right),\\
    \gamma_P(t) &:= E\left[ \left\{\min (T, t)- h_{P,t}\right\}^2\right],\\
    \tau_P &:= E \left[\left\{\alpha_P\right\}^2\right],\\ 
    s_{c,\phi,T}(t) &:= E\left[\left\{h_{c,t}-h_{P,t}\right\}^2\right] / \gamma_P(t),\\
    s_{c,A} &:= 1 - \frac{E \left[\alpha_P^2\right]}{E \left[\alpha_c^2\right]},
\end{align*}
and where we interpret $0/0$ as 0 if $\gamma_P(t) = 0$ or $\tau_P = 0$. 
Therefore,
\begin{align*}
\phi_{P,l}\left(t, s_{c, \phi, T}(t)s_{c,A}/[1-s_{c,A}]\right) &\leq \phi_c(t) \leq \phi_{P,u}\left(t, s_{c, \phi, T}(t)s_{c,A}/[1-s_{c,A}]\right) \quad \text{ for} \\
    \phi_{P,l}(t, v) &:= \phi_P(t) - \sqrt{|v|\gamma_P(t)\tau_P}, \quad \text{and} \\
    \phi_{P,u}(t,v) &:= \phi_P(t) +  \sqrt{|v|\gamma_P(t)\tau_P}.
\end{align*}
\end{prop}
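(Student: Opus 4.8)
The plan is to repeat the argument proving Proposition~\ref{prop:par_id} with $\min(T,t)$ in place of $I(T>t)$ throughout, and then to match the resulting factors to the quantities named in the statement. One preliminary point deserves care: the identification assumptions \ref{itm:idf_1}--\ref{itm:idf_5}, stated in terms of $T(a)I(T(a)\le t)$, must still deliver the required propensity-weighting identities for $\min(T,t)$. Since $T(a)\in(0,\infty]$, we have $I(T(a)\le t)=I\{T(a)I(T(a)\le t)>0\}$ and hence $\min\{T(a),t\}=T(a)I(T(a)\le t)+t\,[1-I(T(a)\le t)]$ is a measurable function of $T(a)I(T(a)\le t)$. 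Therefore \ref{itm:idf_1} with $(W,U)$ gives $\min\{T(a),t\}\ci A\mid W,U$, and combined with \ref{itm:idf_2}--\ref{itm:idf_5} this yields $h_{c,t}(a,w,u)=E_c[\min\{T(a),t\}\mid W=w,U=u]$ together with the identification $\phi_c(t)=\phi_P(t)$ recalled above.

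Granting this, I would first establish, exactly as in the proof of Proposition~\ref{prop:par_id}, the four weighting identities $E[h_{c,t}\alpha_c]=\phi_c(t)$ and $E[h_{P,t}\alpha_P]=E[h_{P,t}\alpha_c]=E[h_{c,t}\alpha_P]=\phi_P(t)$, using $E[A\mid W,U]=\pi_c(W,U)$, the tower property, and the identification results. Subtracting gives $\phi_c(t)-\phi_P(t)=E[(h_{c,t}-h_{P,t})(\alpha_c-\alpha_P)]$. Because $E[h_{c,t}]=E[h_{P,t}]=E[\min(T,t)]$ and $E[\alpha_c]=E[\alpha_P]=0$, the two factors $h_{c,t}-h_{P,t}$ and $\alpha_c-\alpha_P$ have mean zero, so $[\phi_P(t)-\phi_c(t)]^2=[\rho_{c,\phi}(t)]^2\,E[(h_{c,t}-h_{P,t})^2]\,E[(\alpha_c-\alpha_P)^2]$. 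Multiplying and dividing by $\gamma_P(t)$ and $\tau_P$ gives the claimed product, since $E[(h_{c,t}-h_{P,t})^2]/\gamma_P(t)=s_{c,\phi,T}(t)$ holds by definition and $E[(\alpha_c-\alpha_P)^2]/\tau_P=s_{c,A}/(1-s_{c,A})$. The last identity is the one computation that is not immediate from a definition, but it involves only $\alpha_c$ and $\alpha_P$ --- which do not depend on $t$ or on the outcome --- so it is word-for-word the same as in the proof of Proposition~\ref{prop:par_id}: expand $\alpha_c-\alpha_P$, note $E[\alpha_c\alpha_P]=E[\alpha_P^2]$, and rearrange.

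It remains to dispose of the degenerate cases and read off the bounds. If $\gamma_P(t)=E[(\min(T,t)-h_{P,t})^2]=0$, then $\min(T,t)=h_{P,t}$ almost surely, so by the tower property $h_{c,t}=E[h_{P,t}\mid A,W,U]=h_{P,t}$ almost surely, whence $\phi_c(t)=\phi_P(t)$ and the $0/0=0$ convention is consistent; and $\tau_P=0$ contradicts positivity \ref{itm:idf_4}. For the bounds, $|\rho_{c,\phi}(t)|\le 1$ gives $[\phi_P(t)-\phi_c(t)]^2\le |v|\,\gamma_P(t)\tau_P$ for $v=s_{c,\phi,T}(t)s_{c,A}/(1-s_{c,A})$, and taking square roots yields $\phi_{P,l}(t,v)\le\phi_c(t)\le\phi_{P,u}(t,v)$. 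I expect no real obstacle here: once the measurability reduction of the first paragraph is recorded, everything else is either inherited verbatim from Proposition~\ref{prop:par_id} or is routine bookkeeping; the only mild novelty is confirming that the stated causal assumptions, phrased via $T(a)I(T(a)\le t)$, indeed cover $\min\{T(a),t\}$.
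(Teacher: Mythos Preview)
Your proposal is correct and follows exactly the approach the paper indicates: the paper does not give a separate proof of this proposition but simply states that it follows by replacing $I(T>t)$ with $\min(T,t)$ in the proof of Proposition~\ref{prop:par_id}, which is precisely what you do. Your added measurability check---that $\min\{T(a),t\}$ is a function of $T(a)I(T(a)\le t)$ so that the stated conditional-independence assumptions carry over---is a careful point the paper leaves implicit, and your observation that $s_{c,\phi,T}(t)$ is here \emph{defined} as the ratio (so no variance-identity step is needed, unlike for $s_{c,T}(t)$ in Proposition~\ref{prop:par_id}) is also accurate.
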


Proposition~\ref{prop:rmst} yields upper and lower bounds for $\phi_{c}(t)$. Both $\phi_{p}(t)$ and $\gamma_P(t)$ can be written as the integration of the expectation of the conditional survival functions and thus can be estimated by taking integration of the corresponding one-step estimator. Specifically, a natural estimator of $\phi_{p}(t)$ is given by $\int_0^t \theta_{n}(v)\, d v$. As for $\gamma_P(t)$, we can write
\begin{align*} 
\gamma_P(t)  
&:=  E\left\{\min (T, t)-\int_0^t S_P(v \mid A, W) \, d v\right\}^2 \\  
&= E\left\{[\min (T, t)]^2-2 \min (T, t) \int_0^t S_P(v \mid A, W) \, d v + \left[\int_0^t S_P(v \mid A, W)  \, d v\right]^2\right\} \\ 
&= E[\min (T, t)^2]-E\left\{\left[\int_0^t S_P(v \mid A, W) d v\right]^2\right\}\\
&= \int_0^t E[S_P(u \mid A,W)]  2 u \, d u - \int_0^t \int_0^t E[S_P(v \mid A, W) S_P(u \mid A, W)] \, d v \, d u
\end{align*}
where the last step follows by
\begin{align*} 
E\left[\min (T, u)^2\right] &=\int_0^{\infty} \min (t, u)^2 \, d F(t) \\ 
&=\int_0^u t^2 \, d F(t)+\int_u^{\infty} u^2 \, d F(t) \\ 
&=u^2 F(u) -\int_0^u F(t)  \, d (t^2) + u^2 S(u)\\ 
&=u^2+\int_0^u[S(t)-1] 2 t  \, d t \\ 
&=\int_0^u S(t) 2 t  \, d t.
\end{align*}

\begin{prop}\label{prop:eif3}
If there exists $\kappa > 0$ such that $G_P(t \mid a, w) \geq \kappa$ for each $a \in \{0,1\}$ and $P$-almost every $w$ such that $S_P(t \mid a, w) > 0$, then the nonparametric efficient influence functions of $E[S_P(u \mid A,W)]$ and $E[S_P(v \mid A, W) S_P(u \mid A, W)]$ at $P$ are $D_{P,u}^\ast(o) = D_{P,u}(o) - E[S_P(u \mid A,W)]$ and $D_{P,u,v}^\ast(o) = D_{P,u,v}(o) - E[S_P(v \mid A, W) S_P(u \mid A, W)]$ where
\begin{align*}
     & D_{P,u}(o) 
     =  S_P\left(u \mid a, w\right) \left\{ 1 -\frac{I(y \leq u, \delta=1)}{S_P\left(y \mid a, w\right) G_P\left(y \mid a, w\right)} \right. 
       + \left. \int_0^{u \wedge y} \frac{\Lambda_P\left(d u \mid a, w\right)}{S_P\left(u \mid a, w\right) G_P\left(u \mid a, w\right)} \right\} \quad \text{and}\\
     & D_{P,u,v}(o) =  S_P\left(v \mid a, w\right)S_P\left(u \mid a, w\right) \left\{ -\frac{I(y \leq u, \delta=1)}{S_P\left(y \mid a, w\right) G_P\left(y \mid a, w\right)} \right. 
       + \left. \int_0^{u \wedge y} \frac{\Lambda_P\left(d u \mid a, w\right)}{S_P\left(u \mid a, w\right) G_P\left(u \mid a, w\right)} \right\} \\
       &\quad \quad + S_P\left(u \mid a, w\right)S_P\left(v \mid a, w\right) \left\{ -\frac{I(y \leq v, \delta=1)}{S_P\left(y \mid a, w\right) G_P\left(y \mid a, w\right)} \right. 
       + \left. \int_0^{v \wedge y} \frac{\Lambda_P\left(d u \mid a, w\right)}{S_P\left(u \mid a, w\right) G_P\left(u \mid a, w\right)} \right\} \\
     & \quad \quad + S_P\left(u \mid a, w\right)S_P\left(v \mid a, w\right).
\end{align*}
\end{prop}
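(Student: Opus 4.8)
The plan is to mirror the strategy used in the proofs of Propositions~\ref{prop:eif1} and~\ref{prop:eif2}. I would fix a regular one-dimensional parametric submodel $\{P_\epsilon:|\epsilon|\le\delta\}$ through $P$ that is differentiable in quadratic mean with square-integrable score $\dot\ell_P$ at $\epsilon=0$, write each of the two parameters as a functional of the submodel, differentiate along the submodel, and then identify the efficient influence function as the unique mean-zero $g\in L_2(P)$ for which $\partial_\epsilon(\text{parameter})|_{\epsilon=0}=E_P[g(O)\dot\ell_P(O)]$. Since the model is nonparametric, this canonical gradient is automatically the EIF, so the only additional thing to verify is square-integrability, which follows from the positivity hypothesis $G_P(t\mid a,w)\ge\kappa$ on $\{w:S_P(t\mid a,w)>0\}$ --- whence, by monotonicity, $G_P(s\mid a,w)\ge\kappa$ for all $s\le t$, and here $u,v\le t$ --- exactly as in Proposition~\ref{prop:eif1}. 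Throughout, I write $Q$ for the marginal law of $(A,W)$ under $P$.

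For $\vartheta_{1,u}:=E[S_P(u\mid A,W)]$, I would write $\vartheta_{1,u}(P_\epsilon)=\int S_\epsilon(u\mid a,w)\,dQ_\epsilon(a,w)$ and differentiate by the product rule to get $\int\{\partial_\epsilon S_\epsilon(u\mid a,w)|_{\epsilon=0}\}\,dQ_P(a,w)+\int S_P(u\mid a,w)\,\dot\ell_P(a,w)\,dQ_P(a,w)$. The second term contributes the centered plug-in piece $S_P(u\mid a,w)-\vartheta_{1,u}$ to the EIF. For the first term I would invoke the pathwise-derivative representation of the conditional survival function established in \cite{westling2023inference} and already used in the proof of Proposition~\ref{prop:eif1}: for any fixed bounded weight $\omega$, $\int\{\partial_\epsilon S_\epsilon(u\mid a,w)|_{\epsilon=0}\}\,\omega(a,w)\,dQ_P(a,w)$ equals $E_P$ applied to $\omega(A,W)$ times $S_P(u\mid A,W)$ times $\dot\ell_P(Y,\Delta,A,W)$ times the censored-data bracket $H_P(u\wedge Y,A,W)-I(Y\le u,\Delta=1)S_P(Y-\mid A,W)/[S_P(Y\mid A,W)R_P(Y\mid A,W)]$ ($H_P$ as in the proof of Proposition~\ref{prop:eif1}), applied here with $\omega\equiv1$. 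Collecting the two contributions and simplifying the at-risk, cumulative-hazard and censoring-survival factors using $R_P(s\mid a,w)=S_P(s-\mid a,w)G_P(s\mid a,w)$ and the relation between $F_P$, $S_P$ and $\Lambda_P$, exactly as in the proof of Proposition~\ref{prop:eif1}, yields $D_{P,u}^\ast=D_{P,u}-\vartheta_{1,u}$ with $D_{P,u}$ as stated; note that no inverse-propensity factor appears because $\vartheta_{1,u}$ is a linear functional of $S_P$ evaluated at the observed $(A,W)$ rather than at a fixed treatment level.

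For $\vartheta_{2,u,v}:=E[S_P(v\mid A,W)S_P(u\mid A,W)]$, I would write $\vartheta_{2,u,v}(P_\epsilon)=\int S_\epsilon(v\mid a,w)S_\epsilon(u\mid a,w)\,dQ_\epsilon(a,w)$ and apply the product rule, obtaining three terms: (i) $\int\{\partial_\epsilon S_\epsilon(v\mid a,w)|_0\}\,S_P(u\mid a,w)\,dQ_P$; (ii) $\int S_P(v\mid a,w)\,\{\partial_\epsilon S_\epsilon(u\mid a,w)|_0\}\,dQ_P$; and (iii) $\int S_P(v\mid a,w)S_P(u\mid a,w)\,\dot\ell_P(a,w)\,dQ_P$. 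Term (iii) contributes the centered plug-in piece $S_P(v\mid a,w)S_P(u\mid a,w)-\vartheta_{2,u,v}$. To term (i) I apply the conditional-survival derivative formula at time $v$ with frozen weight $\omega(a,w)=S_P(u\mid a,w)$, which produces the bracketed factor carrying the indicator $I(Y\le v,\Delta=1)$ and the integral over $[0,v\wedge Y]$, multiplied by $S_P(v\mid A,W)S_P(u\mid A,W)$; to term (ii) I apply it at time $u$ with frozen weight $\omega(a,w)=S_P(v\mid a,w)$, producing the analogous bracketed factor with $u$ in place of $v$. Adding the three pieces and performing the same algebraic simplification as before gives $D_{P,u,v}^\ast=D_{P,u,v}-\vartheta_{2,u,v}$ with $D_{P,u,v}$ as stated; structurally this is the two-time analogue of the $S_P(t)\{1-S_P(t)\}$ computation in Proposition~\ref{prop:eif1}, with the frozen factors $S_P(u\mid\cdot)$ and $S_P(v\mid\cdot)$ here playing the role that the weight $1-2S_P(t\mid\cdot)$ played there.

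The only genuinely delicate analytic ingredient is the pathwise derivative of the product-integral map $\Lambda_P(\cdot\mid a,w)\mapsto S_P(\cdot\mid a,w)$ together with the associated martingale representation of the influence function of the conditional survival function; but this is not new --- it is established in \cite{westling2023inference} and reused in the proof of Proposition~\ref{prop:eif1}, so I would simply cite it. I therefore expect the main (though conceptually routine) obstacle to be the bookkeeping for $\vartheta_{2,u,v}$: keeping the two perturbation terms, with their respective $u$- and $v$-structured martingale integrands, correctly separated, and then carrying out the algebraic simplification of the at-risk, cumulative-hazard and censoring factors without error.
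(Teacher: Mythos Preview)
Your proposal is correct and is exactly the approach one would expect. The paper does not actually supply a separate proof of Proposition~\ref{prop:eif3}; it states the result in the RMST appendix and implicitly leaves the derivation to the reader, since it is a direct variant of the proof given for Proposition~\ref{prop:eif1}. Your plan reproduces that template precisely: differentiate along a DQM submodel, apply the product rule to the integrand (once for $E[S_P(u\mid A,W)]$, twice for $E[S_P(u\mid A,W)S_P(v\mid A,W)]$), and then invoke the \cite{westling2023inference} pathwise-derivative representation of $\epsilon\mapsto S_\epsilon(\cdot\mid a,w)$ with the appropriate frozen weight --- $\omega\equiv 1$ for the first functional, and $\omega=S_P(u\mid\cdot)$ or $S_P(v\mid\cdot)$ for the two cross terms in the second. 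Your observation that this is the two-time analogue of the $\{1-2S_P(t\mid\cdot)\}$ weight in Proposition~\ref{prop:eif1}, and that no inverse-propensity factor appears because the functional is evaluated at the observed $(A,W)$, is also right. There is nothing to add.
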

We can then define a cross-fitted one-step estimator of $\gamma_P(t)$ as
\begin{align*}
\gamma_n(t) =  \int_0^t \frac{1}{n} \sum_{k=1}^K \sum_{i \in \mathcal{V}_{n, k}} D_{n,k,u}\left(O_i\right) \cdot 2 u  \, d u - \int_0^t \int_0^t \frac{1}{n} \sum_{k=1}^K \sum_{i \in \mathcal{V}_{n, k}} D_{n,k,u,v}\left(O_i\right) \, d v  \, d u.
\end{align*}
where $D_{n,k,u}(o)$ and $D_{n,k,u,v}(o)$ are the estimated counterparts of $D_{P,k,u}(o)$ and $D_{P,k,u,v}(o)$. By the delta method, $\phi_{n}(t)$ and $\gamma_n(t)$ are asymptotically linear under similar conditions to those of Theorem~\ref{thm:asy_linear} with influence functions
\begin{align*}
    & D_{P,\phi,t}^\ast(o)= \int_0^t D_{P,\theta,u}^\ast\left(O_i\right)  \, d u \quad \text{and}\\
    & D_{P,\gamma,t}^\ast(o) = \int_0^t D_{P,u}^\ast\left(O_i\right) \cdot 2 u \, d u - \int_0^t \int_0^t  D_{P,u,v}^\ast\left(O_i\right) \, d v\, d u.
\end{align*}
Thus, inference for the bounds $[\phi_{P,l}(t,v), \phi_{P,u}(t,v)]$ can be obtained analogously using the methods described in Section \ref{sec:inference}. Benchmarking of the sensitivity parameter $s_{c, \phi, T}(t)$ using the observed data can also be conducted as described in Section \ref{sec:senspar2}.

\end{appendix}

\end{adjustwidth}

\end{document}